\DeclareMathOperator\dif{d\!}
\newif\iftechreport
\iftechreport \pagestyle{plain} \fi
\title{Distribution-based Bisimulation for Labelled Markov Processes}
\author{Pengfei Yang\inst{1,2} \and David N. Jansen\inst{1} \and Lijun Zhang\inst{1,2}}
\institute{
State Key Laboratory of Computer Science, Institute of Software, CAS \and
University of Chinese Academy of Sciences \\
	\email{yangpf@ios.ac.cn}, \email{dnjansen@ios.ac.cn}, \email{zhanglj@ios.ac.cn}}
\begin{document}

\maketitle

\begin{abstract}
In this paper we propose a (sub)distribution-based bisimulation for labelled Markov processes and compare it with earlier definitions of state and event bisimulation, which both only compare states.
In contrast to those state-based bisimulations, our distribution bisimulation is weaker, but corresponds more closely to linear properties.
We construct a logic and a metric to describe our distribution bisimulation and discuss linearity, continuity and compositional properties.
\end{abstract}

\section{Introduction}
\subsection{Labelled Markov Processes}
Markov processes are one of the most popular types of stochastic processes in the fields of mathematics, physics, biology, economics, and computer science.
Markov processes have a common property, called \emph{Markov property:}
Given exact information on the present, the future is independent of the past.
There are many examples of Markov processes, like Brownian motion, spread of infectious diseases, option pricing, and quantitative information flow.
In some of these, the state space is continuous,
so it is worth studying such Markov processes.

Labelled Markov processes (LMPs) were first studied in \cite{BDEP97} and \cite{DEP02}.
Contrary to common Markov processes, they contain action labels on the transitions:
There is a set of actions, and for each action there is exactly one subprobabilistic transition function to describe the transition with this action.
That is to say, labelled Markov processes are transition systems with action labels and (sub)probabilistic transitions.
They are input-enabled w.\,r.\,t.\@ fully probabilistic transitions.
We adapt the following example from \cite{AKLP10} to show what is an LMP.

\begin{example}
There are $n$ rooms in a building, and each room has a heater that is either \textbf{On} or \textbf{Off}.
The state space is the state of the heaters and the temperatures of every room,
i.\,e.\@ $S=\{\textbf{On},\textbf{Off}\}^n \times \mathbb{R}^n$.
On every transition we can change the states of heaters, so the set of actions $\mathcal{A}=2^{\{1,2,\ldots,n\}}$.
The temperature of the $i$-th room at time $k$ is denoted by $x_i^{k}$,
and these $x_i$ are determined by the following stochastic difference equation:
\[
x_i^{k+1}=x_i^{k}+b_i(x_0-x_i^{k})+\sum_{j \ne i} a_{ij}(x_j^{k}-x_i^{k})
+c_i\mathbb{I}_{\{q_i^{k}=\textbf{On}\}}+w_i^{k}.
\]
Here $x_0$ is the outside temperature,
$b_i$ is the rate of heat transfer between the $i$-th room and the outside environment,
$a_{ij}$ is the rate of heat transfer from the $j$-th room to the $i$-th room.
$q_i(k)=\textbf{On}$ means from time $k$ to $k+1$ the heater of the $i$-th room is \textbf{On},
$c_i$ describes the temperature influence of this heater,
and $w_i(k)$ are independent normal distribution random variables which represent errors.
Now the state space is no longer discrete, but hybrid,
and we have a discrete-time evolution.
At every step we choose an action from the set $\mathcal{A}$, and the probabilistic transition is determined by a system of difference equations.
\end{example}

\subsection{Related Work and Motivation}
Bisimulation is a useful concept in computer science, especially in formal methods.
It can help us simplify the models and grasp the core properties of systems.
Bisimulation was first studied in \cite{LS91} and \cite{KS60} for discrete probabilistic systems.
On the downside, bisimulations are known to be not \emph{robust:}
a small perturbation of the probabilities may change bisimilar states to become different.
As a result, metrics for probabilistic systems have been proposed,
such that a smaller distance between two states implies their behaviours are more similar.
A distance of zero agrees with the standard (precise) bisimulation.
We refer to \cite[Chapter 8]{P09} for a detailed discussion.
In \cite{BreugelSW08,BBLM13,DHKP16,TangB16},  decision algorithms and optimisations for bisimulation metrics have been investigated.
Bisimulation distance between probabilistic processes composed by standard process is characterised in \cite{GeblerLT15}.
In \cite{A13}, approximating bisimulation based on relations, metrics, and approximating functions for LMPs were discussed systematically.

Bisimulations for Markov processes with continuous state spaces (especially analytic spaces) were studied in \cite{BDEP97}, \cite{DEP98} and \cite{DEP02}.
These papers also introduced the name ``labelled Markov processes''.
They defined bisimulation for LMPs in a coalgebraic way
and constructed a simple logic to characterise this bisimulation.
This work led to a lot of further research on bisimulations for LMPs~\cite{P09}.

Metrics, approximations and other topics based on bisimulation for labelled Markov processes were studied in \cite{DDLP06}, \cite{DGJP04}, \cite{DGJP00}, \cite{DDP04} and \cite{DDP03}.
In \cite{CDPP09}, a bisimulation relation was defined in a categoric way for abstract Markov processes,
and this paper also discussed logical characterisation and approximation based on their bisimulation.
\cite{DTW12} discussed state and event bisimulation for non-deterministic LMPs
and gave a logic characterisation of event bisimulation.

The work mentioned above all focuses on bisimulations between states.
That is to say, their bisimulations are binary relations on the state space.
Inspired by \cite{DoyenHR08},
research on bisimulations based on \emph{distributions} (or \emph{subdistributions}) for probabilistic systems with discrete state spaces bloomed up~\cite{Hennessy12,FZ14,HermannsKK14}.

Distribution-based bisimulations are usually coarser than state-based bisimulations,
i.\,e.\@ they declare more states in probabilistic systems equivalent.
We are not aware of any research on distribution-based bisimulation for LMPs or other probabilistic systems with continuous state spaces or time evolution,
which motivates us to carry on with such research.
There are many methods and results which are inspired by the discrete situation,
but also some new problems, observations and differences have appeared.

Different from state-based bisimulation, distribution-based bisimulation has a tight connection with linear-time properties.
In \cite{FZ14}, an equivalence metric is put forward to measure the distance between two systems.
Basically the metric characterising bisimulation is equal to this equivalence metric, so their distribution-based bisimulation corresponds to trace distribution equivalence.
In our setting, similar results hold,
which indicates that our distribution-based bisimulation characterises equivalence of linear properties.
When discussing distribution bisimulation,
we can construct a logical characterization even for state spaces that are not analytic.
Also, some proofs which are trivial for discrete models need a second thought.

Summarising, the main contributions of our paper are:
\begin{itemize}
\item	First, we propose a distribution-based bisimulation for LMPs
	(Sect.~\ref{sec:subdistribution-bisimulation}).
	We show that our definition conservatively extends standard state-based and event-based bisimulations in the literature.
\item	Second, We provide a logical characterisation result for our bisimulation based on extensions of the Hennessy--Milner logic
	(Sect.~\ref{sec:logical-characterisation}).
\item	Also, we define a (pseudo)metric between distributions of LMPs with discounting factor $0<c\le 1$
	(Sect.~\ref{sec:metrics-approximation}).
	A distance of $0$ implies our notion of bisimilarity.
	Further, we investigate the notion of equivalence metric, characterising trace equivalence distance,
	and show that our metric matches the trace equivalence distance in a natural manner.
	We study some useful properties and then investigate the compositional properties.
\end{itemize}

\section{Subdistribution Bisimulation}
\label{sec:subdistribution-bisimulation}
We assume that the readers have basic knowledge of measure theory, like measurable spaces, (sub)probability measures, Borel $\sigma$-algebra, and integration of a Borel-measurable function.
In Appendix~\ref{A0} we recall some basic definitions and properties that we will make use of. We refer to \cite{D04} for details.

\subsection{Bisimulations for Labelled Markov Processes}
First we introduce the definition of labelled Markov processes (LMPs) formally \cite{BDEP97,P09}.
We equip an LMP with an initial distribution.

\begin{definition}
A labelled Markov process (LMP) is a tuple $(S,\Sigma,(\tau_a)_{a \in \mathcal{A}},\pi)$, where
\begin{itemize}
\item $(S,\Sigma)$ is a measurable space;
\item $\tau_a:S \times \Sigma \to [0,1]$ is a subprobability transition function indexed with an element $a$ in the set $\mathcal{A}$ of actions, where we assume that $\mathcal{A}$ is countable;
\item $\pi \in Dist(S)$ is the initial distribution.
\end{itemize}
Here $(\tau_a)_{a \in \mathcal{A}}$ induces a relation $\to$ on $S \times \mathcal{A} \times subDist(S)$: $(s,a,\mu) \in {\to}$, also denoted by $s \xrightarrow{a} \mu$, if $\tau_a(s,\cdot)=\mu(\cdot)$. For $\mu,\mu' \in subDist(S)$, we write $\mu \xrightarrow{a} \mu'$, if
\begin{equation*}
\mu'(\cdot)=\int \tau_a(s,\cdot) \mu(\dif s).
\end{equation*}
Moreover, the relation $\to$ can be expanded to $subDist(S) \times \mathcal{A}^* \times subDist(S)$ by:
\begin{itemize}
\item $\mu \xrightarrow{\varepsilon} \mu$, where $\varepsilon$ is the empty word;
\item For $w \in \mathcal{A}^*$ and $a \in \mathcal{A}$, write $\mu \xrightarrow{wa} \mu'$ if there exists $\mu''$ s.\,t.\@ $\mu \xrightarrow{w} \mu'' \xrightarrow{a} \mu'$.
\end{itemize}
\end{definition}

Now we will define subdistribution bisimulation, state bisimulation and event bisimulation for LMPs
so that we can compare these bisimulations.
Subdistribution bisimulation extends the discrete version in \cite{FZ14}.

\begin{definition}\label{defi:subbi}
Let $(S,\Sigma,(\tau_a)_{a \in \mathcal{A}},\pi)$ be an LMP. We say a symmetric relation $R \subseteq subDist(S) \times subDist(S)$ is a (subdistribution) bisimulation relation, if $\mu \mathrel{R} \nu$ implies:
\begin{itemize}
\item $\mu(S)=\nu(S)$;
\item For any $a \in \mathcal{A}$ and $\mu \xrightarrow{a} \mu'$, there exists $\nu \xrightarrow{a} \nu'$, s.\,t.\@ $\mu' \mathrel{R} \nu'$.
\end{itemize}
We say $\mu,\nu \in subDist(S)$ are bisimilar, denoted by $\mu \sim_{\mathrm{d}} \nu$,
if there exists a bisimulation relation $R$, s.\,t.\@ $\mu \mathrel{R} \nu$.
\end{definition}

\paragraph{Remark.} The wording of Def.~\ref{defi:subbi} is classical and can be used for non-deterministic LMPs \cite{DTW12} as well.
Since our LMPs do not contain non-determinism,
the second condition holds
if and only if for any $a \in \mathcal{A}$,
$\mu \xrightarrow{a} \mu'$ and $\nu \xrightarrow{a} \nu'$ implies  $\mu' \mathrel{R} \nu'$.

Like other bisimilarity relations, the relation $\sim_{\mathrm{d}}$ is an equivalence relation, and the proof is classical.

\begin{proposition}\label{transitivity}
The relation $\sim_{\mathrm{d}}$ is an equivalence relation.
\end{proposition}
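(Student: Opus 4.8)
The plan is to show that $\sim_{\mathrm{d}}$ satisfies the three defining properties of an equivalence relation: reflexivity, symmetry, and transitivity. The strategy in each case is the standard one for coalgebraic bisimilarity: since $\sim_{\mathrm{d}}$ is defined as the union of all bisimulation relations, it suffices to exhibit, for each property, a concrete relation that is itself a bisimulation and that witnesses the desired property.

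For reflexivity, I would verify that the identity relation $\mathrm{Id} = \{(\mu,\mu) : \mu \in subDist(S)\}$ is a bisimulation relation. This is immediate: $\mu(S)=\mu(S)$ holds trivially, and given $\mu \xrightarrow{a} \mu'$ we may take $\nu' = \mu'$ so that $\mu' \mathrel{\mathrm{Id}} \mu'$. Hence $\mu \sim_{\mathrm{d}} \mu$ for every $\mu$. For symmetry, I note that $\sim_{\mathrm{d}}$ is built from \emph{symmetric} relations $R$ by Def.~\ref{defi:subbi}, so symmetry is essentially built in; formally, if $\mu \sim_{\mathrm{d}} \nu$ via a symmetric bisimulation $R$ containing $(\mu,\nu)$, then $R$ also contains $(\nu,\mu)$, giving $\nu \sim_{\mathrm{d}} \mu$.

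The main work, and the only place requiring a genuine construction, is transitivity. Suppose $\mu \sim_{\mathrm{d}} \nu$ via a bisimulation $R_1$ and $\nu \sim_{\mathrm{d}} \rho$ via a bisimulation $R_2$. I would form the relational composition $R = R_1 \circ R_2 = \{(\lambda,\lambda'') : \exists \lambda'.\ \lambda \mathrel{R_1} \lambda' \text{ and } \lambda' \mathrel{R_2} \lambda''\}$ together with its symmetric closure, and show it is a bisimulation containing $(\mu,\rho)$. The mass condition chains through the intermediate subdistribution, since $\mu(S) = \lambda'(S) = \rho(S)$. For the transfer condition, given $\mu \xrightarrow{a} \mu'$, I first use $R_1$ to obtain an intermediate $\lambda' \xrightarrow{a} \lambda''$ with $\mu' \mathrel{R_1} \lambda''$, then feed $\lambda' \xrightarrow{a} \lambda''$ into $R_2$ to obtain $\rho \xrightarrow{a} \rho'$ with $\lambda'' \mathrel{R_2} \rho'$, whence $\mu' \mathrel{R} \rho'$ by composition.

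The step I expect to be most delicate is ensuring that the transfer condition composes cleanly, which hinges on the determinism of our LMPs noted in the Remark after Def.~\ref{defi:subbi}: for each $a$ there is a \emph{unique} $\lambda''$ with $\lambda' \xrightarrow{a} \lambda''$, so the intermediate derivative produced by applying the $R_1$-transfer condition is exactly the one we may feed into the $R_2$-transfer condition. Without this uniqueness one would have to argue more carefully that the two intermediate targets can be taken to coincide. Since the composition of symmetric relations need not be symmetric, I would finally take the symmetric closure and check the defining conditions hold on both orientations, which follows because each $R_i$ is itself symmetric.
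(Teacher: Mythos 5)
Your proposal is correct and follows essentially the same route as the paper: reflexivity and symmetry are dispatched as trivial, and transitivity is proved by showing that the relational composition $R_1 \circ R_2$ is itself a bisimulation, using determinism of the LMP so that the intermediate $a$-derivative produced by $R_1$ is exactly the one fed into $R_2$. Your explicit remarks on taking the symmetric closure of the composition and on where determinism is used are points the paper leaves implicit, but they do not change the argument.
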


The following example from \cite{FZ14} shows an LMP with a finite state space, which is classical in discussing bisimulation based on distributions.

\begin{example}\label{subbiex}
\begin{figure}
  \centering
\scalebox{0.75}{
  \begin{tikzpicture}[->,>=stealth,auto,node distance=1.7cm,semithick,scale=1, every node/.style={scale=1}]
	\tikzstyle{blackdot}=[circle,fill=black,minimum size=6pt,inner sep=0pt]
	\tikzstyle{state}=[minimum size=0pt,circle,draw,thick]
	\tikzstyle{stateNframe}=[minimum size=0pt]	

	\node[state](s0){$s_0$};
	\node[state](s1)[below of=s0]{$s_1$};	
	\node[state](s2)[below left of=s1]{$s_2$};	
	\node[state](s3)[below right of=s1]{$s_3$};
   \node[state](t3)[right of=s3]{$t_3$};
   \node[state](t1)[above right of=t3]{$t_1$};
   \node[state](t4)[below right of=t1]{$t_4$};
   \node[state](t5)[right of=t4]{$t_5$};
   \node[state](t2)[above right of=t5]{$t_2$};
   \node[state](t6)[below right of=t2]{$t_6$};
   \node[state](t0)[right of=s0,xshift=4.5cm]{$t_0$};

	\path
			
			  (s0) edge								node[left] {$1$} (s1)
			  (s1) edge								node[above] {$\frac 12$} (s3)
			  (s1) edge								node[above] {$\frac 12$} (s2)
			  (s2) edge [loop below]           node[right] {$1$}   (s2)
            (t0) edge								node[above] {$\frac 12$} (t1)
            (t0) edge								node[above] {$\frac 12$} (t2)
            (t1) edge								node[above] {$\frac 13$} (t3)
            (t1) edge								node[above] {$\frac 23$} (t4)
            (t2) edge								node[above] {$\frac 13$} (t5)
            (t2) edge								node[above] {$\frac 23$} (t6)
            (t3) edge [loop below]           node[right] {$1$}   (t3)
            (t6) edge [loop below]           node[right] {$1$}   (t6)

			;
  \end{tikzpicture}}
\caption{\label{1}An example of subdistribition bisimulation: $\delta_{s_0} \sim_{\mathrm{d}} \delta_{t_0}$.}
\end{figure}
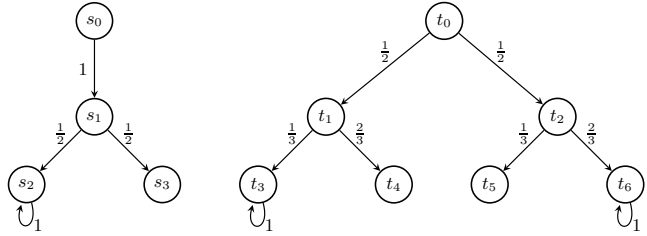
Figure~\ref{1} shows an LMP with a single action in its action set $\mathcal{A}=\{a\}$. In this LMP, we have $\delta_{s_0} \sim_{\mathrm{d}} \delta_{t_0}$. Let the relation $R$ be
$\{(\delta_{s_0}, \linebreak[0] \delta_{t_0}), \linebreak[1]
(\delta_{s_1}, \linebreak[0] \frac 12 \delta_{t_1}+\frac 12 \delta_{t_2}), \linebreak[1]
(\frac 12 \delta_{s_2}+\frac 12 \delta_{s_3}, \linebreak[0] \frac 16 \delta_{t_3}+\frac 13 \delta_{t_4}+\frac 16 \delta_{t_5}+\frac 13 \delta_{t_6}), \linebreak[1]
(\frac 12 \delta_{s_2}, \linebreak[0] \frac 16 \delta_{t_3}+\frac 13 \delta_{t_6})\}.$
Then it is easy to check that its symmetric and reflexive closure $\bar{R}$ is a subdistribution bisimulation relation. Therefore, we have $\delta_{s_0} \sim_{\mathrm{d}} \delta_{t_0}$.
\end{example}

Then we recall state bisimulation according to \cite{DDLP06}.
Given a binary relation $R \subseteq S \times S$, we say $A \subseteq S$ is $R$-closed,
if $R(A):=\{t \in S|\linebreak[0]\exists s \in A, \enspace s \mathrel{R} t\} \subseteq A$.

\begin{definition}\label{defi:statebi}
Let $(S,\Sigma,(\tau_a)_{a \in \mathcal{A}},\pi)$ be an LMP. We say an equivalence relation $R \subseteq S \times S$ is a state bisimulation relation, if $s \mathrel{R} t$ implies that for any $a \in \mathcal{A}$ and $R$-closed set $A \in \Sigma$,
\begin{align}\label{eq:statebi}
\tau_a(s,A)=\tau_a(t,A).
\end{align}
We say $s,t \in S$ are state-bisimilar, denoted by $s \sim_{\mathrm{s}} t$, if there exists a state bisimulation relation $R$, s.\,t.\@ $s \mathrel{R} t$.
\end{definition}

In Def.~\ref{defi:statebi}, we check \eqref{eq:statebi} only for measurable $R$-closed sets.
We do not require all $R$-equivalence classes to be measurable, just as the following example shows.

\begin{example}[\cite{DDLP06}]\label{ex:eventbi}
Let $(\mathbb{R},\mathcal{B}(\mathbb{R}),(\tau_a)_{a \in \{ * \}},\pi)$ be an LMP.
The transitions are defined by $\tau_*(s,\{s\})=1$ for all $s \in \mathbb{R}$.
Let $A \subseteq \mathbb{R}$ be a set which is not Lebesgue-measurable.
Then the relation $R:=(A \times A) \cup (A^c \times A^c)$ is a state bisimulation relation
with non-measurable equivalence classes.
\end{example}

In the example, intuitively we dislike such a bisimulation relation, since the separation is too fine.
To avoid this problem, \cite{DDLP06} defined event bisimulation by:

\begin{definition}
Given a measurable space $(X,\mathcal{F})$, we say $x,y \in X$ is $\mathcal{F}$-in\-distin\-guish\-able, if for any $A \in \mathcal{F}$, either $x,y \in A$ or $x,y \not \in A$.
\end{definition}

\begin{definition}\label{defi:eventbi}
Let $(S,\Sigma,(\tau_a)_{a \in \mathcal{A}},\pi)$ be an LMP.
We say a sub-$\sigma$-algebra $\Lambda \subseteq \Sigma$ is an event bisimulation, if $(S,\Lambda,(\tau_a)_{a \in \mathcal{A}},\pi)$ is still an LMP.
If $\Lambda$ is an event bisimulation, we also say the $\Lambda$-indistinguishable relation, denoted by $\mathcal{R}(\Lambda)$, is an event bisimulation relation.
We say $s,t \in S$ are event-bisimilar, denoted by $s \sim_{\mathrm{e}} t$, if there exists an event bisimulation relation $R$, s.\,t.\@ $s \mathrel{R} t$.
\end{definition}

\subsection{Relations of Bisimulations}
In \cite{DDLP06}, there are several results on the relation between state bisimulation and event bisimulation.
Basically, state bisimilarity always implies event bisimilarity.
For LMPs with analytic spaces as state spaces, event bisimilarity is equivalent to state bisimilarity.
However, for general LMPs, event bisimilarity does not imply state bisimilarity.
See \cite{T10} for a counterexample.

We show that state bisimilarity implies subdistribution bisimilarity.

\begin{theorem}\label{biimply}
Given an LMP with measurable single-point sets.
$s \sim_{\mathrm{s}} t$ implies $\delta_s \sim_{\mathrm{d}} \delta_t$, but $\delta_s \sim_{\mathrm{d}} \delta_t$ does not imply $s \sim_{\mathrm{s}} t$.
\end{theorem}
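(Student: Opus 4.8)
The plan is to prove the two halves separately: the positive implication by lifting a state bisimulation to a subdistribution bisimulation, and the negative implication by reusing the LMP of Example~\ref{subbiex}.

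For the forward direction, suppose $R \subseteq S \times S$ is a state bisimulation with $s \mathrel{R} t$. I would lift $R$ to a relation $\hat{R}$ on $subDist(S)$ by declaring $\mu \mathrel{\hat{R}} \nu$ exactly when $\mu(A)=\nu(A)$ for every $R$-closed set $A \in \Sigma$. Two preliminary observations make this work. First, the $R$-closed measurable sets form a sub-$\sigma$-algebra $\Sigma_R \subseteq \Sigma$: since $R$ is an equivalence relation, a set is $R$-closed iff it is a union of $R$-classes, and such sets are evidently closed under complement and countable union. Second, $\delta_s \mathrel{\hat{R}} \delta_t$, because $s \mathrel{R} t$ means every union of $R$-classes contains $s$ iff it contains $t$, so $\delta_s(A)=\delta_t(A)$ for all $A \in \Sigma_R$.

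It remains to check that $\hat{R}$ is a subdistribution bisimulation in the sense of Def.~\ref{defi:subbi}. The mass condition $\mu(S)=\nu(S)$ is immediate, as $S \in \Sigma_R$. For the transition condition, recall that our LMPs are deterministic, so $\mu \xrightarrow{a} \mu'$ forces $\mu'(\cdot)=\int \tau_a(x,\cdot)\,\mu(\dif x)$ and likewise for $\nu'$; hence I must show $\mu'(B)=\nu'(B)$ for every $B \in \Sigma_R$. The key point is that for a fixed $B \in \Sigma_R$ the map $x \mapsto \tau_a(x,B)$ is constant on each $R$-class: this is exactly the defining equation \eqref{eq:statebi} of state bisimulation. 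Consequently each superlevel set $\{x : \tau_a(x,B) \ge c\}$ is both measurable and $R$-closed, i.\,e.\@ lies in $\Sigma_R$, so $x \mapsto \tau_a(x,B)$ is $\Sigma_R$-measurable. Since $\mu$ and $\nu$ agree on $\Sigma_R$, the standard approximation argument (indicators, then simple functions, then monotone convergence) yields $\int \tau_a(x,B)\,\mu(\dif x)=\int \tau_a(x,B)\,\nu(\dif x)$, that is $\mu'(B)=\nu'(B)$. Thus $\mu' \mathrel{\hat{R}} \nu'$, the relation $\hat{R}$ is a bisimulation, and therefore $\delta_s \sim_{\mathrm{d}} \delta_t$.

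For the negative direction, I would take the LMP of Example~\ref{subbiex}, where $\delta_{s_0} \sim_{\mathrm{d}} \delta_{t_0}$ is already established, and argue that $s_0 \not\sim_{\mathrm{s}} t_0$. The absorbing states $s_2,t_3,t_6$, which can perform $a$ forever, must lie in one $\sim_{\mathrm{s}}$-class, and the deadlocks $s_3,t_4,t_5$ in another. Consequently $s_1,t_1,t_2$ send probabilities $\tfrac12,\tfrac13,\tfrac23$, respectively, into the absorbing class, so they are pairwise non-bisimilar. But under $a$ the state $s_0$ places all its mass on the class $[s_1]$, whereas $t_0$ splits its mass between $[t_1]$ and $[t_2]$; matching these via \eqref{eq:statebi} would require $[s_1]=[t_1]=[t_2]$, contradicting the above. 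Hence $s_0 \not\sim_{\mathrm{s}} t_0$ although $\delta_{s_0} \sim_{\mathrm{d}} \delta_{t_0}$.

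The main obstacle is the measure-theoretic core of the forward direction: one must verify cleanly that $\Sigma_R$ is a $\sigma$-algebra, that $\tau_a(\cdot,B)$ is $\Sigma_R$-measurable, and that agreement of two (sub)measures on a sub-$\sigma$-algebra propagates to equality of integrals of $\Sigma_R$-measurable functions. Everything else — the mass condition, determinism collapsing the existential in Def.~\ref{defi:subbi} to a unique successor, and the counterexample — is routine once this lifting lemma is in place.
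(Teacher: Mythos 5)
Your proposal is correct and follows essentially the same route as the paper: the same lifting of $R$ to agreement on all measurable $R$-closed sets, the same key observation that \eqref{eq:statebi} makes $\tau_a(\cdot,B)$ constant on $R$-classes so that its level sets are $R$-closed, the same simple-function/monotone-convergence approximation to equate the integrals, and the same counterexample (Exa.~\ref{subbiex}) for the converse. Packaging the argument via the sub-$\sigma$-algebra $\Sigma_R$ and spelling out why $s_0 \not\sim_{\mathrm{s}} t_0$ are only presentational refinements of what the paper does.
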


Consequently, we can extend state bisimilarity $\sim_\mathrm{s}$ to $subDist(S) \times subDist(S)$:
We write $\mu \sim_\mathrm{s} \nu$ if there is a state bisimulation $R$,
s.\,t.\@ for any $R$-closed set $A \in \varSigma$, $\mu(A)=\nu(A)$.
While $\mu \sim_\mathrm{s} \nu$ now implies $\mu \sim_\mathrm{d} \nu$, they are not equivalent.

In \cite{A13}, instead of $R$-closed sets, only equivalence classes are checked in Equ.~\eqref{eq:statebi} of Def.~\ref{defi:statebi}.
However, these two definitions differ, and using equivalence class has counterintuitive consequences.
In particular, Thm.~\ref{biimply} does not hold any more.
The following example shows this fact.

\begin{example}\label{ex1}
Let $\mathcal{M}=([0,1], \mathcal{B}([0,1]), (\tau_a)_{a \in \mathcal{A}},\pi)$ be an LMP, where $\mathcal{A}=\{a\}$ and $\tau_a$ is defined as follows:
\begin{align*}
\tau_a(0,A)=\frac 12 m(A);   \enspace \enspace\enspace \tau_a(1,A)=\frac 12 \int_{A} (x+\frac 12) \dif x; \\
\tau_a(s,\{0\})= \tau_a(s,\{1\})=\frac s2, \enspace \tau_a(s,(0,1))=0,\enspace 0<s<1.
\end{align*}
Here $m$ is the Lebesgue measure on $(\mathbb{R},\mathcal{B}(\mathbb{R}))$.
Let $R \subseteq [0,1] \times [0,1]$ be the smallest equivalence relation that satisfies $0 \mathrel{R} 1$.
Then, the set of equivalence classes contains all singletons $\{ x \}$, for $0 < x < 1$,
and $\{ 0,1 \}$.
It is easy to check that $R$ is not a state bisimulation,
since for the $R$-closed set $I=(0,1/2)$, $\tau_a(0,I) \ne \tau_a(1,I)$.
However, if we replace ``$R$-closed set'' with ``equivalence class'' in Def.~\ref{defi:statebi},
then $R$ is a state bisimulation.

Now we show that $\delta_0 \sim_{\mathrm{d}} \delta_1$ does not hold.
Otherwise,
there exists a bisimulation relation $R'$, s.\,t.\@ $\delta_0 \mathrel{R'} \delta_1$.
Now $\delta_0 \xrightarrow{a} \mu_0$ and $\delta_1 \xrightarrow{a} \mu_1$,
where $\mu_0$ has density $p_0(x)=1/2$ and $\mu_1$ has density $p_1(x)=1/2(x+1/2)$, both on $[0,1]$.
Then we consider the next step $\mu_0 \xrightarrow{a} \mu_0'$ and $\mu_1 \xrightarrow{a} \mu_1'$.
Here we have
$
\mu_0'(\{0\})=\mu_0'(\{1\})=\int_{(0,1)} \frac s 2 ~\mu_0(\dif s)=\int_{(0,1)} \frac 12\cdot \frac s 2 \dif s =\frac 18
$
and
$
\mu_1'(\{0\}) =\mu_1'(\{1\})=\int_{(0,1)} \frac s 2 ~\mu_1(\dif s)
=\int_{(0,1)} \frac 12\left(s+\frac 12\right)\cdot \frac s 2 \dif s =\frac {7}{48}
$.
Because $R'$ is a bisimulation relation, $\mu_0' \mathrel{R'} \mu_1'$, but $\mu_0'(S) \ne \mu_1'(S)$. Contradiction!

Intuitively, the states $0$ and $1$ should not be bisimilar,
since transitions from $0$ and $1$ induce different distributions on $(0,1)$,
where no states appear to be bisimilar.
Therefore, we prefer Def.~\ref{defi:statebi}.

In this example, if we replace $\tau_a(1,\cdot)$ with any non-uniform subdistribution that has measure $1/2$ on $[0,1]$ and mean $1/2$,
then we have $\delta_0 \sim_\mathrm{d} \delta_1$.
However, $0 \sim_\mathrm{s} 1$ still does not hold.
This is a counterexample with a continuous state space showing that subdistribution bisimulation does not imply state bisimulation.
\end{example}

The proof that event bisimulation implies subdistribution bisimulation is more intricate;
we postpone it to the end of the next section.

\section{Logical Characterisation}
\label{sec:logical-characterisation}
Inspired by \cite{BDEP97,DEP98,DEP02,DDLP06} and \cite{FZ14}, we construct a logic to characterise subdistribution bisimulation in this section.
Also, we compare our logical characterisation with that for state bisimulation (\cite{BDEP97}, \cite{DEP98}, \cite{DEP02}) and event bisimulation(\cite{DDLP06}).

\subsection{Logical Characterisation for Subdistribution Bisimulation}

\begin{definition}
We assume a fixed set $\mathcal{A}$ of actions and define a logic given by
\begin{align*}
\mathcal{L}_0::=\mathbb{T}~|~\varphi_1 \wedge \varphi_2~|~\langle a \rangle_q \varphi ~|~ \langle \varepsilon \rangle_q,
\end{align*}
where $a \in \mathcal{A}$ and $q \in \mathbb{Q} \cap [0,1]$, and the formula $\langle \varepsilon \rangle_q$ does not appear in the scope of any diamond operator $\langle a \rangle_q$.
Given an LMP $\mathcal{M}=(S,\Sigma,(\tau_a)_{a \in \mathcal{A}},\pi)$, the semantics are defined inductively as follows:
\begin{itemize}
\item $\mathcal{M},\mu \models \mathbb{T}$,
\item $\mathcal{M},\mu \models \varphi_1 \wedge \varphi_2$ iff $\mathcal{M},\mu \models \varphi_1$ and $\mathcal{M},\mu \models \varphi_2$,
\item $\mathcal{M},\mu \models \langle a \rangle_q \varphi$ iff $\mu'(S)\ge q$ and $\mu' \models \varphi$, where $\mu \xrightarrow{a} \mu'$,
\item $\mathcal{M},\mu \models \langle \varepsilon \rangle_q$ iff $\mu(S) \ge q$.
\end{itemize}
We write $\mathcal{M} \models \varphi$, if $\mathcal{M}, \pi \models \varphi$. If there is no misunderstanding, we simply write $\mu \models \varphi$ instead of $\mathcal{M},\mu \models \varphi$.
\end{definition}

Our formulae $\langle a \rangle_q \varphi$ look similar to the logic defined in \cite{BDEP97},
but their semantics are quite different.
We only care about whether the subdistribution of the next step satisfies $\varphi$
and not about the states any more.
In addition, we have added $\langle \varepsilon \rangle_q$
to measure how ``large'' the subdistribution is,
since subdistribution bisimulation requires that two subdistributions have the same measure on $S$.
If we only consider bisimulation between full distributions, then $\langle \varepsilon \rangle_q$ can be omitted.
Also, we request that $\langle \varepsilon \rangle_q$ does not appear in the scope of any diamond operator $\langle a \rangle_q$ because $\langle a \rangle_0 (\langle \varepsilon \rangle_q \wedge \varphi)$ is semantically equivalent to $\langle a \rangle_q \varphi$, and it is unnecessary to have so many complex formulas.

Now we show that the logic $\mathcal{L}_0$ characterises subdistribution bisimulation.

\begin{theorem}\label{logicalcharacterization}
$\mu \sim_{\mathrm{d}} \nu$ if and only if $\mu$ and $\nu$ satisfy the same formulae in $\mathcal{L}_0$, i.\,e.\@ $\mathcal{L}_0$ characterises subdistribution bisimulation.
\end{theorem}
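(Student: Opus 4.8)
The statement is a logical characterization result, so it splits into two implications: soundness (bisimilar subdistributions satisfy the same formulae) and completeness (logical equivalence implies bisimilarity). The plan is to treat these separately.

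For soundness, I would proceed by structural induction on the formula $\varphi \in \mathcal{L}_0$, proving that $\mu \sim_{\mathrm{d}} \nu$ implies $\mu \models \varphi \Leftrightarrow \nu \models \varphi$. The base cases $\mathbb{T}$ and $\langle \varepsilon \rangle_q$ are immediate: $\mathbb{T}$ holds everywhere, and $\langle \varepsilon \rangle_q$ depends only on the total mass $\mu(S)$, which equals $\nu(S)$ by the first clause of Def.~\ref{defi:subbi}. Conjunction is routine. The interesting case is $\langle a \rangle_q \varphi$: if $\mu \models \langle a \rangle_q \varphi$ then, writing $\mu \xrightarrow{a} \mu'$, we have $\mu'(S) \ge q$ and $\mu' \models \varphi$. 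Since $\mu \sim_{\mathrm{d}} \nu$, taking $\nu \xrightarrow{a} \nu'$ we get $\mu' \sim_{\mathrm{d}} \nu'$ (using determinism, the second clause of Def.~\ref{defi:subbi} gives exactly this matching). Then $\nu'(S) = \mu'(S) \ge q$ by the mass condition applied to the bisimilar pair $\mu' \sim_{\mathrm{d}} \nu'$, and $\nu' \models \varphi$ by the induction hypothesis, so $\nu \models \langle a \rangle_q \varphi$.

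For completeness, I would show that logical equivalence is itself a subdistribution bisimulation, which by Def.~\ref{defi:subbi} immediately yields $\mu \sim_{\mathrm{d}} \nu$. Let $R := \{(\mu,\nu) : \mu \text{ and } \nu \text{ satisfy the same formulae in } \mathcal{L}_0\}$; this is symmetric. I must verify the two clauses. The mass condition follows from the $\langle \varepsilon \rangle_q$ formulae: if $\mu(S) \ne \nu(S)$, pick a rational $q$ strictly between them, so one of $\mu, \nu$ satisfies $\langle \varepsilon \rangle_q$ and the other does not, contradicting $\mu \mathrel{R} \nu$. The transfer condition is the heart of the argument: given $\mu \mathrel{R} \nu$ and $\mu \xrightarrow{a} \mu'$, I take the unique $\nu \xrightarrow{a} \nu'$ (determinism) and must show $\mu' \mathrel{R} \nu'$, i.e.\ $\mu'$ and $\nu'$ satisfy the same formulae of $\mathcal{L}_0$. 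Suppose $\mu' \models \psi$ for some $\psi$. I need to propagate this back through the $\langle a \rangle$ modality. If $\psi$ contains no $\langle \varepsilon \rangle$ at top level, then $\mu \models \langle a \rangle_0 \psi$ forces $\nu \models \langle a \rangle_0 \psi$, hence $\nu' \models \psi$. The total-mass part of $\psi$ is recovered by combining with $\langle \varepsilon \rangle_q$ under $\langle a \rangle_0$, using the stated equivalence $\langle a \rangle_0(\langle \varepsilon \rangle_q \wedge \varphi) \equiv \langle a \rangle_q \varphi$.

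The main obstacle I anticipate is the syntactic restriction that $\langle \varepsilon \rangle_q$ may not appear inside any diamond, which means $\langle a \rangle_0 \psi$ is not always a legal formula of $\mathcal{L}_0$ when $\psi$ itself mentions $\langle \varepsilon \rangle_q$. To handle this cleanly I would first argue that it suffices to match formulae in a normal form where $\langle \varepsilon \rangle_q$ occurs only at the outermost level, and that every constraint on the mass of an intermediate subdistribution $\mu'$ can be expressed by a formula of the form $\langle a \rangle_q \varphi$ applied to $\mu$ (this is precisely what the remark after the logic definition is for). Thus the propagation of both $\psi$'s qualitative content and its mass requirements can each be encoded by a genuine $\mathcal{L}_0$-formula evaluated at $\mu$, and the equivalence between $\mu$ and $\nu$ transfers both to the successor pair. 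Once this encoding is set up, $R$ satisfies both clauses of Def.~\ref{defi:subbi}, completing the proof.
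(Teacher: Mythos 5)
Your proposal is correct and follows essentially the same route as the paper: soundness by structural induction on $\mathcal{L}_0$, and completeness by showing that the logical-equivalence relation $R$ is itself a subdistribution bisimulation, propagating a distinguishing formula $\varphi$ of a successor pair back to $\mu,\nu$ via a formula $\langle a \rangle_q \varphi$. The one place you go beyond the paper is the explicit treatment of the syntactic restriction that $\langle \varepsilon \rangle_q$ may not occur under a diamond --- the paper's own completeness argument silently forms $\langle a \rangle_q \varphi$ without checking that this is a legal formula, and your normal-form decomposition (splitting $\varphi$ into its top-level $\langle \varepsilon \rangle_q$ conjuncts, absorbed into the subscript of the diamond, and an $\langle \varepsilon \rangle$-free remainder) is exactly the right way to close that small gap.
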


Next we define four extensions of $\mathcal{L}_0$, which are inspired by \cite{BDEP97}, \cite{DEP98} and \cite{DEP02}.
\begin{align*}
\mathcal{L}_{\mathrm{Can}}&:=\mathcal{L}_0~|~\mathrm{Can}(a),
&
\mathcal{L}_{\neg}&:=\mathcal{L}_0~|~\neg \varphi, \\
\mathcal{L}_{\Delta}&:=\mathcal{L}_0~|~\Delta_a,
&
\mathcal{L}_{\wedge}&:=\mathcal{L}_\neg~|~\bigwedge_{i \in \mathbb{N}} \varphi_i,
\end{align*}
where $a \in \mathcal{A}$. Given an LMP $\mathcal{M}=(S,\Sigma,(\tau_a)_{a \in \mathcal{A}},\pi)$, the semantics are defined inductively as follows:
\begin{itemize}
\item $\mu \models \mathrm{Can}(a)$ iff $\mu'(S)>0$, where $\mu \xrightarrow{a} \mu'$;
\item $\mu \models \Delta_a$ iff $\mu'(S)=0$, where $\mu \xrightarrow{a} \mu'$;
\item $\mu \models \neg \varphi$ iff $\mu \not \models \varphi$;
\item $\mu \models \bigwedge_{i \in \mathbb{N}} \varphi_i$ iff for all $i \in \mathbb{N}$, $\mu \models \varphi_i$.
\end{itemize}

These four extended logics all characterise subdistribution bisimulation.

\begin{proposition}\label{pro:extendedlogic}
$\mathcal{L}_{\mathrm{Can}}$, $\mathcal{L}_{\Delta}$, $\mathcal{L}_{\neg}$ and $\mathcal{L}_{\wedge}$ all characterise subdistribution bisimulation.
\end{proposition}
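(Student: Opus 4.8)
The plan is to prove Proposition~\ref{pro:extendedlogic} by leveraging Theorem~\ref{logicalcharacterization}, which already establishes that the base logic $\mathcal{L}_0$ characterises subdistribution bisimulation. Since each of the four extended logics contains $\mathcal{L}_0$ as a sublogic, the implication ``$\mu$ and $\nu$ agree on all formulae of the extended logic $\Rightarrow$ they agree on all $\mathcal{L}_0$ formulae $\Rightarrow \mu \sim_{\mathrm{d}} \nu$'' is immediate for every one of them. Thus the only direction that requires work is: if $\mu \sim_{\mathrm{d}} \nu$, then $\mu$ and $\nu$ satisfy the same formulae of the richer logic. My strategy is to show, for each additional operator, that its satisfaction is already determined by the $\mathcal{L}_0$-theory of a subdistribution (equivalently, preserved under $\sim_{\mathrm{d}}$), so that adding the operator cannot distinguish bisimilar subdistributions.

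First I would handle $\mathcal{L}_{\mathrm{Can}}$ and $\mathcal{L}_{\Delta}$ together, since their new atoms are complementary threshold conditions on the mass of the successor. For $\mathrm{Can}(a)$, I would observe that $\mu \models \mathrm{Can}(a)$ iff $\mu'(S) > 0$ where $\mu \xrightarrow{a} \mu'$, and that this is equivalent to $\mu \models \langle a \rangle_q \langle \varepsilon \rangle_q$ holding for some rational $q > 0$; conversely $\mu \models \Delta_a$ iff $\mu \not\models \mathrm{Can}(a)$. Because the relevant successor masses are controlled by $\mathcal{L}_0$ formulae of the form $\langle a \rangle_q \mathbb{T}$, I would argue that $\mu \sim_{\mathrm{d}} \nu$ forces $\mu'(S) = \nu'(S)$ for the respective $a$-successors (their successors are themselves bisimilar by the transfer property of $\sim_{\mathrm{d}}$, and bisimilar subdistributions carry equal total mass), so $\mathrm{Can}(a)$ and $\Delta_a$ take the same truth value at $\mu$ and $\nu$. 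A routine induction on formula structure then extends this to all of $\mathcal{L}_{\mathrm{Can}}$ and $\mathcal{L}_{\Delta}$.

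Next I would treat negation, i.e.\ $\mathcal{L}_{\neg}$. The key point is that $\sim_{\mathrm{d}}$ is an equivalence relation (Proposition~\ref{transitivity}), so ``satisfying the same $\mathcal{L}_0$-formulae'' is itself an equivalence relation on subdistributions and is symmetric; hence if $\mu \sim_{\mathrm{d}} \nu$ and $\mu \models \neg\varphi$, then by the inductive hypothesis $\mu \not\models \varphi \Leftrightarrow \nu \not\models \varphi$, giving $\nu \models \neg\varphi$. The induction closes immediately because the truth value of $\neg\varphi$ at a subdistribution depends only on the truth value of $\varphi$ there, and that is preserved by bisimilarity. Infinite conjunction in $\mathcal{L}_{\wedge}$ is handled the same way: $\mu \models \bigwedge_{i} \varphi_i$ iff $\mu \models \varphi_i$ for every $i$, and each conjunct is preserved under $\sim_{\mathrm{d}}$ by the inductive hypothesis, so the whole conjunction is too; no measurability or cardinality subtlety arises because satisfaction is defined pointwise over a countable index set.

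The main obstacle I anticipate is the diamond case in the inductive step, which recurs in every one of the four logics: to show $\mu \models \langle a \rangle_q \varphi \Leftrightarrow \nu \models \langle a \rangle_q \varphi$ I must pass to the $a$-successors $\mu'$ and $\nu'$ and invoke that $\mu \sim_{\mathrm{d}} \nu$ implies $\mu' \sim_{\mathrm{d}} \nu'$. For our deterministic LMPs the successor under a fixed action $a$ is unique, so the transfer condition of Definition~\ref{defi:subbi} together with the Remark directly yields $\mu' \sim_{\mathrm{d}} \nu'$; combined with equality of total masses this gives both the threshold clause $\mu'(S) \ge q \Leftrightarrow \nu'(S) \ge q$ and, by the inductive hypothesis applied to $\varphi$ at the successors, $\mu' \models \varphi \Leftrightarrow \nu' \models \varphi$. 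Assembling these observations, each extended logic neither refines nor coarsens the equivalence induced by $\mathcal{L}_0$, so all four characterise subdistribution bisimulation.
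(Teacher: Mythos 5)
Your proposal is correct and follows essentially the same route as the paper: the completeness direction is free because each extended logic contains $\mathcal{L}_0$, and soundness is a structural induction using that bisimilar subdistributions have bisimilar (hence equal-mass) $a$-successors, which settles $\mathrm{Can}(a)$ and $\Delta_a$, while negation and countable conjunction follow directly from the inductive hypothesis. One cosmetic slip: $\langle a \rangle_q \langle \varepsilon \rangle_q$ is not a well-formed $\mathcal{L}_0$ formula (the paper forbids $\langle \varepsilon \rangle_q$ under a diamond), but your argument does not actually rely on it --- $\langle a \rangle_q \mathbb{T}$ serves the same purpose.
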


In previous research of state bisimulation (\cite{BDEP97}, \cite{DEP02}),
only $\mathcal{L}_\wedge$ characterises equivalence classes.
Here we have the following similar result.

\begin{proposition}
$\mathcal{L}_{\wedge}$ characterises bisimilarity equivalence classes,
i.\,e.\@ for any LMP and any equivalence class $C \subseteq subDist(S)$,
there exists a formula $\varphi \in \mathcal{L}_\wedge$,
s.\,t.\@ for any $\mu \in subDist(S)$, $\mu \in C$ if and only if $\mu \models \varphi$.
\end{proposition}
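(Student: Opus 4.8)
The plan is to leverage the characterisation results already in hand. By Proposition~\ref{pro:extendedlogic}, the logic $\mathcal{L}_\neg$ characterises subdistribution bisimulation, so $\mu \sim_{\mathrm{d}} \nu$ holds exactly when $\mu$ and $\nu$ satisfy the same $\mathcal{L}_\neg$-formulae. Hence an equivalence class $C$ is precisely the set of subdistributions whose $\mathcal{L}_\neg$-theory coincides with that of a fixed representative, and the task reduces to packaging that entire theory into a single formula. The observation that makes this possible is that $\mathcal{L}_\neg$ is \emph{countable}: every formula is a finite expression over the countable alphabet consisting of $\mathbb{T}$, $\wedge$, $\neg$, and the operators $\langle a \rangle_q$, $\langle \varepsilon \rangle_q$ with $a \in \mathcal{A}$ (countable) and $q \in \mathbb{Q} \cap [0,1]$ (countable), and the set of finite strings over a countable alphabet is countable. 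A whole $\mathcal{L}_\neg$-theory can therefore be written as one countable conjunction, which is a legal $\mathcal{L}_\wedge$-formula.

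Concretely, I would fix an equivalence class $C$ and a representative $\mu_0 \in C$, and enumerate the countably many $\mathcal{L}_\neg$-formulae as $\psi_1, \psi_2, \ldots$. For each $i$ set $\chi_i := \psi_i$ if $\mu_0 \models \psi_i$ and $\chi_i := \neg \psi_i$ otherwise; since $\neg$ is available in $\mathcal{L}_\neg \subseteq \mathcal{L}_\wedge$, each $\chi_i$ lies in $\mathcal{L}_\wedge$. Then define
\[
\varphi := \bigwedge_{i \in \mathbb{N}} \chi_i,
\]
which is a well-formed $\mathcal{L}_\wedge$-formula by the $\bigwedge$-clause and which satisfies $\mu_0 \models \varphi$ by construction.

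To verify that $\varphi$ characterises $C$, I would reason as follows. By the semantics of countable conjunction, $\nu \models \varphi$ iff $\nu \models \chi_i$ for every $i$; by the choice of the $\chi_i$ this is equivalent to: for every $\mathcal{L}_\neg$-formula $\psi_i$ one has $\nu \models \psi_i$ exactly when $\mu_0 \models \psi_i$, i.e.\@ $\nu$ and $\mu_0$ have the same $\mathcal{L}_\neg$-theory. By the characterisation of $\sim_{\mathrm{d}}$ via $\mathcal{L}_\neg$ (Proposition~\ref{pro:extendedlogic}), this is equivalent to $\nu \sim_{\mathrm{d}} \mu_0$, i.e.\@ to $\nu \in C$. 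Thus $\nu \in C$ if and only if $\nu \models \varphi$, as required.

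The argument carries no deep obstacle, but the two points that deserve care are precisely what single out $\mathcal{L}_\wedge$ among the four extensions, and I would highlight them. First, the countability of $\mathcal{L}_\neg$ is what keeps the infinitary conjunction inside $\mathcal{L}_\wedge$ (whose conjunctions are indexed by $\mathbb{N}$); nailing this down is the main technical step. Second, negation is indispensable: a purely positive theory can only force $\nu$ to satisfy \emph{at least} the formulae that $\mu_0$ satisfies, whereas isolating the class also requires \emph{excluding} the formulae that $\mu_0$ fails, which is exactly the role of the conjuncts $\neg\psi_i$. This is why $\mathcal{L}_\wedge$, combining countable conjunction with the negation inherited from $\mathcal{L}_\neg$, pins down equivalence classes, while the positive extensions $\mathcal{L}_{\mathrm{Can}}$ and $\mathcal{L}_\Delta$ (lacking one of these two features) are not expected to.
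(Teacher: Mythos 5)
Your proof is correct, but it takes a genuinely different --- and in fact more careful --- route than the paper. The paper's proof sets $F(C)$ to be the set of \emph{positive} $\mathcal{L}_0$ formulae satisfied by the members of $C$ and conjoins those alone, asserting that $\mu\models\bigwedge_{\psi\in F(C)}\psi$ already forces $\mu\in C$. You instead enumerate the countable logic $\mathcal{L}_\neg$ and conjoin the \emph{complete} theory of a representative $\mu_0$, inserting $\neg\psi_i$ for each formula $\mu_0$ fails. Your extra step is not cosmetic: satisfying all of $F(C)$ only yields $Th_{\mathcal{L}_0}(\mu)\supseteq Th_{\mathcal{L}_0}(\mu_0)$, and in $\mathcal{L}_0$ this containment can be strict without bisimilarity. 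The paper's own two-state example makes the point --- with $s$ looping with probability $1$ and $t$ with probability $0.5$, the distribution $\delta_s$ satisfies every $\mathcal{L}_0$ formula that $\delta_t$ satisfies (all thresholds $q_i\le 0.5^i$ are a fortiori met by masses equal to $1$), yet $\delta_s\not\sim_{\mathrm{d}}\delta_t$; so the purely positive conjunction for $C=[\delta_t]$ is also satisfied by $\delta_s\notin C$. Your use of negation, which is available since $\mathcal{L}_\wedge$ extends $\mathcal{L}_\neg$, together with the countability of the signature ($\mathcal{A}$ and $\mathbb{Q}\cap[0,1]$ are countable) and the characterisation of $\sim_{\mathrm{d}}$ by $\mathcal{L}_\neg$ from Prop.~\ref{pro:extendedlogic}, closes exactly this gap; the paper's argument as written would need the same repair (e.\,g.\@ by letting $F(C)$ range over $\mathcal{L}_\neg$ rather than $\mathcal{L}_0$).
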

\begin{proof}
Let $C \subseteq subDist(S)$ be a bisimilarity equivalence class. Let $F(C)$ be the set of $\mathcal{L}_0$ formulae which are satisfied by the subdistributions in $C$. It is easy to see that $F(C)$ is countable. Let $\varphi=\bigwedge_{\psi \in F(C)} \psi \in \mathcal{L}_\wedge$. Then for any $\mu \in subDist(S)$, $\mu \in C$ if and only if for any $\psi \in F(C)$, $\mu \models \psi$, i.\,e.\@ $\mu \models \varphi$.
\end{proof}

However, the other logics cannot characterise equivalence classes.

\begin{example}
Let $\mathcal{M}_0$ be an LMP with one action $a$ and only one state $s_0$ going to itself through the action $a$ with probability $1$.
Let $\mathcal{M}_n$ be an LMP with one action $a$ and $n$ states which can do the action $n-1$ times and finally goes to a dead state.
Consider the union LMP $\mathcal{M}=\bigcup_{n=0}^{\infty} \mathcal{M}_n$,
then the equivalence class of $\delta_{s_{0}}$ cannot be characterised by any finite formula.
\end{example}

While for state bisimulation, $\mathcal{L}_\neg$ characterises equivalence classes of any finite LMP,
its subdistribution bisimilarity equivalence classes still cannot be characterised by $\mathcal{L}_0$, $\mathcal{L}_{\mathrm{Can}}$, $\mathcal{L}_{\Delta}$ or $\mathcal{L}_{\neg}$,
as shown by the next example.

\begin{example}
Let $\mathcal{M}$ be an LMP with one action $a$ and two states:
$s$ going to itself with probability $1$,
and $t$ going to itself with probability $0.5$.
We can see that the two states (or rather $\delta_s$ and $\delta_t$) are not bisimilar.
%We can see that on $\mathcal{M}$ the bisimilarity relation ${\sim_{\mathrm{d}}}=\mathrm{ID}$.
Then the equivalence class $\{\sqrt 2 /2 ~\delta_s\}$ cannot be characterised by any finite formula
because an irrational number must be characterised by an infinite sequence of rational numbers.
Moreover, even $\mathcal{L}_0$ ($\mathcal{L}_{\mathrm{Can}}$, $\mathcal{L}_{\Delta}$ or $\mathcal{L}_{\neg}$) cannot characterise equivalence classes of distributions.
Consider the equivalence class $\{\sqrt 2 /2 ~\delta_s+(1-\sqrt 2 /2)~\delta_t\}$:
it is still impossible to characterise an irrational number.
\end{example}

\subsection{Comparison of Logical Characterisations}

In this part we recall the logical characterisation for state-based bisimulation and compare it with ours,
to understand the difference between them deeper.
Also, we will show that event bisimilarity implies subdistribution bisimilarity.
First let's recall the logic that characterises state-based bisimulation (\cite{BDEP97}, \cite{DEP02}).

\begin{definition}
We assume a fixed set $\mathcal{A}$ of actions and define a logic given by
\begin{align*}
\mathcal{L}::=\mathbb{T}~|~\phi_1 \wedge \phi_2~|~\langle a \rangle_q^\mathrm{st} \phi,
\end{align*}
where $a \in \mathcal{A}$ and $q \in \mathbb{Q}$. Given an LMP $\mathcal{M}=(S,\Sigma,(\tau_a)_{a \in \mathcal{A}},\pi)$, the semantics are defined inductively as follows:
\begin{itemize}
\item $\mathcal{M},s \models \mathbb{T}$,
\item $\mathcal{M},s \models \phi_1 \wedge \phi_2$ iff $\mathcal{M},s \models \phi_1$ and $\mathcal{M},s \models \phi_2$,
\item $\mathcal{M},s \models \langle a \rangle_q^\mathrm{st} \phi$ iff there exists $A \in \Sigma$, s.t. $\mu(A) \ge q$, and $t \models \phi$ for all $t \in A$.
\end{itemize}
If there is no misunderstanding, we simply write $s \models \phi$ instead of $\mathcal{M},s \models \phi$.
\end{definition}

The formula $\langle a \rangle_q^\mathrm{st} \phi$ looks similar to $\langle a \rangle_q \varphi$ in $\mathcal{L}_0$.
However, their semantics differ.
For $\langle a \rangle_q^\mathrm{st} \phi$, satisfibility requests a measurable set which is large enough and only contains states satisfying $\phi$,
but for $\langle a \rangle_q \varphi$, we only request that after an action $a$,
the resulting subdistribution should be large enough and satisfy $\varphi$.

From \cite{BDEP97} and \cite{DEP02}, we know that the logic $\mathcal{L}$ can characterise state bisimulation for LMPs with analytic state spaces.
In \cite{DDLP06}, it is proven that $\mathcal{L}$ characterises event bisimulation for arbitrary LMPs.
To conclude, we have the following results:

\begin{proposition}\label{prop:statebilogic}
(1) For an LMP with an analytic state space, $s \sim_\mathrm{s} t$ if and only if $s$ and $t$ satisfy the same formulae in $\mathcal{L}$.

(2) For any LMP, $s \sim_\mathrm{e} t$ if and only if $s$ and $t$ satisfy the same formulae in $\mathcal{L}$.
\end{proposition}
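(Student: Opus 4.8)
\noindent\emph{Proof proposal.}
The plan is to prove part~(2) from scratch using a canonical sub-$\sigma$-algebra generated by the formulae, and then to obtain part~(1) for free by combining (2) with the coincidence of state and event bisimilarity on analytic spaces, which is exactly the fact recalled earlier in this subsection (due to \cite{DEP02,DDLP06}). For a formula $\phi \in \mathcal{L}$ write $S_\phi := \{s \in S \mid s \models \phi\}$.

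For part~(2) I would first show by induction on $\phi$ that every $S_\phi$ lies in $\Sigma$, the crucial step being the identity
\begin{equation*}
S_{\langle a \rangle_q^{\mathrm{st}} \phi} = \{\, s \in S \mid \tau_a(s, S_\phi) \ge q \,\}.
\end{equation*}
This holds because every measurable $A$ all of whose states satisfy $\phi$ is contained in $S_\phi$, so the supremum of $\tau_a(s,A)$ over such $A$ is attained at $A = S_\phi$; measurability then follows from measurability of $s \mapsto \tau_a(s, S_\phi)$ and the induction hypothesis $S_\phi \in \Sigma$. \emph{Soundness} ($s \sim_\mathrm{e} t$ implies the same formulae) comes next: for \emph{any} event bisimulation $\Lambda$, the same induction—now using that $(S,\Lambda,(\tau_a)_a,\pi)$ being an LMP forces $s \mapsto \tau_a(s,B)$ to be $\Lambda$-measurable for $B \in \Lambda$—shows $S_\phi \in \Lambda$ for all $\phi$; hence $\Lambda$-indistinguishable states agree on membership in each $S_\phi$. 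For \emph{completeness} I would let $\Lambda_{\mathcal{L}} := \sigma(\{S_\phi \mid \phi \in \mathcal{L}\})$ and check that $\Lambda_{\mathcal{L}}$ is an event bisimulation, i.e.\ that $s \mapsto \tau_a(s,B)$ is $\Lambda_{\mathcal{L}}$-measurable for every $B \in \Lambda_{\mathcal{L}}$. On the generators $B = S_\phi$ this is immediate from the displayed identity, and a Dynkin (monotone class) argument over rational thresholds extends it to all of $\Lambda_{\mathcal{L}}$. Since $\Lambda_{\mathcal{L}}$ is generated by the $S_\phi$, the indistinguishability relation $\mathcal{R}(\Lambda_{\mathcal{L}})$ coincides with logical equivalence, so states satisfying the same formulae are $\Lambda_{\mathcal{L}}$-indistinguishable and therefore event-bisimilar.

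For part~(1) I would combine part~(2) with two transfer facts already stated in Sect.~\ref{sec:subdistribution-bisimulation}: state bisimilarity always implies event bisimilarity, and on an analytic state space the converse holds as well. Thus $s \sim_\mathrm{s} t \iff s \sim_\mathrm{e} t$, and part~(2) turns the right-hand side into logical equivalence in $\mathcal{L}$, giving $s \sim_\mathrm{s} t$ iff $s$ and $t$ satisfy the same $\mathcal{L}$-formulae.

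The routine steps are the inductive measurability claims and the Dynkin extension, which are standard. The genuine obstacle is entirely inside the borrowed ingredient for part~(1): the implication $s \sim_\mathrm{e} t \Rightarrow s \sim_\mathrm{s} t$ on analytic spaces. This is the deep direction, as it must upgrade a merely \emph{event}-level $\sigma$-algebra into an honest equivalence relation whose $R$-closed measurable sets realise $\Lambda_{\mathcal{L}}$; it relies on descriptive-set-theoretic properties of analytic spaces (Souslin's theorem and the unique-structure/separation results exploited in \cite{DEP02}) and fails in general, as the excerpt notes via \cite{T10}. I would therefore cite that result rather than reprove it, and concentrate the self-contained argument on part~(2), which holds for arbitrary LMPs.
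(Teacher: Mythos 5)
Your proposal is correct, and it is worth noting that the paper itself gives no proof of this proposition: it is stated purely as a summary of results imported from \cite{BDEP97,DEP02} (part 1) and \cite{DDLP06} (part 2). Your reconstruction of part (2) --- measurability of the sets $S_\phi$ via the identity $S_{\langle a\rangle_q^{\mathrm{st}}\phi}=\{s \mid \tau_a(s,S_\phi)\ge q\}$, soundness by showing $S_\phi\in\Lambda$ for every event bisimulation $\Lambda$, and completeness by verifying that $\sigma(\{S_\phi \mid \phi\in\mathcal{L}\})$ is itself an event bisimulation through a $\pi$--$\lambda$ argument (the $S_\phi$ form a $\pi$-system thanks to conjunction) --- is exactly the argument of \cite{DDLP06}, and deducing part (1) from part (2) together with the coincidence of state and event bisimilarity on analytic spaces is likewise how the cited literature organizes the result; so you are following essentially the route the paper implicitly relies on, with the genuinely deep ingredient (event implies state on analytic spaces) correctly identified and delegated to \cite{DEP02}.
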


Now we consider whether event bisimilarity implies subdistribution bisimilarity.
We only need to show that, if $s$ and $t$ satisfy the same formulae in $\mathcal{L}$, then $\delta_s$ and $\delta_t$
(provided that every single-point set is measurable)
satisfy the same formulae in $\mathcal{L}_0$.
We note that $\delta_s$ and $\delta_t$ satisfy the same formulae of the form $\langle \varepsilon \rangle_q$,
so we do not consider such formulae any more.
Then the syntaxes of the two logics $\mathcal{L}$ and $\mathcal{L}_0$ become very similar.
We inductively define a mapping $f:\mathcal{L} \to \mathcal{L}_0$ by:
\begin{itemize}
\item $f(\mathbb{T})=\mathbb{T}$,
\item $f(\phi_1 \wedge \phi_2)=f(\phi_1) \wedge f(\phi_2)$,
\item $f(\langle a \rangle_q^\mathrm{st} \phi)=\langle a \rangle_q f(\phi)$.
\end{itemize}
Basically we just replace every $\langle a \rangle_q^\mathrm{st}$ in $\mathcal{L}$ formulae with $\langle a \rangle_q$. Obviously this $f$ is surjective. First we have the following observation:

\begin{proposition}
(1) In $\mathcal{L}_0$, we have $\langle a \rangle_q (\varphi_1 \wedge \varphi_2) \equiv (\langle a \rangle_q \varphi_1) \wedge (\langle a \rangle_q \varphi_2)$, where $\equiv$ means semantic equivalence.

(2) In $\mathcal{L}$, $s \models \langle a \rangle_q^\mathrm{st} (\phi_1 \wedge \phi_2)$ implies $s \models \langle a \rangle_q^\mathrm{st} \phi_1$ and $s \models \langle a \rangle_q^\mathrm{st} \phi_2$.
\end{proposition}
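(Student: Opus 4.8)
The plan is to exploit the structural difference between the two modalities, which is exactly the determinism of the transition relation underlying $\mathcal{L}_0$ versus the existential quantifier over witness sets in $\mathcal{L}$.

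For part~(1), the central observation is that our LMPs carry no non-determinism: given $\mu$ and $a$, there is precisely one $\mu'$ with $\mu \xrightarrow{a} \mu'$, namely $\mu'(\cdot)=\int \tau_a(s,\cdot)\,\mu(\dif s)$. I would first fix this unique successor $\mu'$ and then unfold both sides of the claimed equivalence against it. The left-hand side $\mu \models \langle a \rangle_q (\varphi_1 \wedge \varphi_2)$ unfolds to $\mu'(S) \ge q$ together with $\mu' \models \varphi_1$ and $\mu' \models \varphi_2$. The right-hand side $\mu \models (\langle a \rangle_q \varphi_1) \wedge (\langle a \rangle_q \varphi_2)$ unfolds to the conjunction of ``$\mu'(S) \ge q$ and $\mu' \models \varphi_1$'' with ``$\mu''(S) \ge q$ and $\mu'' \models \varphi_2$'', where $\mu \xrightarrow{a} \mu''$; since $\mu''=\mu'$ by uniqueness, this collapses to exactly the same three conditions. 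Hence the two formulae are satisfied by the same subdistributions, which is the asserted semantic equivalence. (This is the place where the ``Remark'' after Def.~\ref{defi:subbi}, on the functionality of $\xrightarrow{a}$, is used silently.)

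For part~(2), I would argue by a direct weakening of witnesses. Suppose $s \models \langle a \rangle_q^\mathrm{st} (\phi_1 \wedge \phi_2)$, so there is a measurable set $A$ with $\tau_a(s,A) \ge q$ such that every $t \in A$ satisfies $\phi_1 \wedge \phi_2$. The same set $A$ then witnesses both $s \models \langle a \rangle_q^\mathrm{st} \phi_1$ and $s \models \langle a \rangle_q^\mathrm{st} \phi_2$, because $t \models \phi_1 \wedge \phi_2$ entails $t \models \phi_1$ and $t \models \phi_2$ for each $t \in A$, while the measure bound $\tau_a(s,A) \ge q$ is untouched.

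The step that needs the most care is recognising why part~(2) is stated only as an implication, so that one does not accidentally assert the converse. The obstacle is exactly the existential quantifier over witness sets in the semantics of $\langle a \rangle_q^\mathrm{st}$: a witness $A_1$ for $\phi_1$ and a witness $A_2$ for $\phi_2$ need not admit a common subset of measure at least $q$ on which both conjuncts hold simultaneously. This is precisely the feature that part~(1) lacks, since the deterministic successor $\mu'$ removes any choice of witness; contrasting the two is the conceptual content of the proposition and the reason the subsequent map $f$ can preserve satisfaction in one direction but must be argued carefully for the other.
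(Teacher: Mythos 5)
Your proof is correct and matches the paper's intent: the paper omits the argument entirely, remarking only that the proposition ``is easy to prove from the semantics,'' and your unfolding of the unique $a$-successor for part~(1) and the reuse of the single witness set for part~(2) is exactly that routine semantic check, with the added (accurate) observation of why the converse of~(2) fails.
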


The proposition is easy to prove from the semantics of $\mathcal{L}_0$ and $\mathcal{L}$. From this observation, first we can turn every formula in $\mathcal{L}_0$ to a conjunctive normal form (CNF) $\bigwedge_{i=1}^m \varphi_i$, where every $\varphi_i$ has the form $\langle a_{1} \rangle_{q_{1}} \cdots \langle a_{n_i} \rangle_{q_{n_i}} \mathbb{T}$. First we deal with formulae like $\phi_i$. We have the following proposition:

\begin{proposition}
Given an LMP with measurable single-point sets.
We have that % added to get a nicer line break
$s \models \langle a_1 \rangle_{q_1}^\mathrm{st} \cdots  \langle a_n \rangle_{q_n}^\mathrm{st} \mathbb{T}$ is equivalent to $\delta_s \models \langle a_1 \rangle_{q_1} \cdots \langle a_{n} \rangle_{q_{n}} \mathbb{T}$.
\end{proposition}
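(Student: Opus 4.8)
The plan is to argue by induction on the length $n$ of the action sequence, peeling off the outermost modality at each step. Since our LMPs carry no nondeterminism (see the remark after Def.~\ref{defi:subbi}), for every state $t$ and action $a$ the unique successor of $\delta_t$ is $\tau_a(t,\cdot)$, so both sides of the claimed equivalence can be rewritten directly in terms of the transition functions. On the state side I would define the satisfaction sets from the inside out: put $B_n=S$ and $B_{i-1}=\{t\in S: t\models\langle a_i\rangle_{q_i}^{\mathrm{st}}\cdots\langle a_n\rangle_{q_n}^{\mathrm{st}}\mathbb{T}\}$, so that $s\models\langle a_1\rangle_{q_1}^{\mathrm{st}}\cdots\langle a_n\rangle_{q_n}^{\mathrm{st}}\mathbb{T}$ means precisely that some measurable $A\subseteq B_1$ satisfies $\tau_{a_1}(s,A)\ge q_1$. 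On the distribution side I would set $\mu_0=\delta_s$ and let $\mu_i$ be the unique subdistribution with $\mu_{i-1}\xrightarrow{a_i}\mu_i$.

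A preliminary step is to verify, by downward induction on $i$, that every $B_i$ is measurable and admits a maximal witness. Clearly $B_n=S\in\Sigma$. Assuming $B_i\in\Sigma$, the map $t\mapsto\tau_{a_i}(t,B_i)$ is measurable (part of the definition of a transition function, recalled in Appendix~\ref{A0}), so $\{t:\tau_{a_i}(t,B_i)\ge q_i\}\in\Sigma$; and since a measurable $A\subseteq B_i$ with $\tau_{a_i}(t,A)\ge q_i$ exists iff the maximal choice $A=B_i$ already works, this set equals $B_{i-1}$, which is therefore in $\Sigma$ too. (Were some $B_i$ non-measurable I would instead invoke the measurable-kernel property of finite measures to attain the inner measure.) Unfolding the $\mathcal{L}_0$ diamond, the distribution side reduces to the chain of total-mass conditions $\mu_i(S)\ge q_i$ for all $i$, where $\mu_i(S)=\int\tau_{a_i}(t,S)\,\mu_{i-1}(\dif t)$ is the mass surviving the prefix $a_1\cdots a_i$.

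The main obstacle is the inductive step that reconciles these two rewritings, and I expect it to be genuinely delicate for two related reasons. First, peeling the first action off $\delta_s$ on the distribution side lands on the general subdistribution $\mu_1=\tau_{a_1}(s,\cdot)$, which is not a Dirac measure, so the induction hypothesis --- stated for $\delta_{s'}$ --- does not apply verbatim, and one must bridge from the Dirac world to the world of arbitrary subdistributions. Second, and more fundamentally, the two logics attach their thresholds to different quantities: the state side tests $q_1$ against the mass $\tau_{a_1}(s,B_1)$ that $a_1$ routes into the states able to continue, whereas the distribution side tests $q_1$ against the total mass $\mu_1(S)=\tau_{a_1}(s,S)$ and stacks the remaining thresholds on the aggregated survival masses $\mu_i(S)$. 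Matching the nested, localised thresholds of $\mathcal{L}$ with the stacked, averaged thresholds of $\mathcal{L}_0$ is exactly the point where state-based and distribution-based reasoning diverge; this reconciliation, rather than any routine computation, is the heart of the argument and the step that will demand the most care.
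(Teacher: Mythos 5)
Your setup is sound as far as it goes: the maximal-witness sets $B_i$ are indeed measurable by downward induction, the state side unfolds to $\tau_{a_1}(s,B_1)\ge q_1$ with $B_{i-1}=\{t:\tau_{a_i}(t,B_i)\ge q_i\}$, and the distribution side unfolds to the stacked conditions $\mu_i(S)\ge q_i$. But your proposal stops exactly at the step you yourself flag as the heart of the argument, and that step cannot be carried out: under the semantics as defined in the paper, the two unfolded conditions are genuinely inequivalent once $n\ge 2$, so no reconciliation exists. Take $\mathcal{A}=\{a\}$ and $S=\{s,t_1,t_2,d\}$ with all subsets measurable, $\tau_a(s,\{t_1\})=\tau_a(s,\{t_2\})=\frac12$, $\tau_a(t_1,\{d\})=1$, and $\tau_a(t_2,\cdot)=\tau_a(d,\cdot)=0$. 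Then $\delta_s\models\langle a\rangle_1\langle a\rangle_{1/2}\mathbb{T}$, since the surviving masses after one and two steps are $1$ and $\frac12$; yet $s\not\models\langle a\rangle_1^{\mathrm{st}}\langle a\rangle_{1/2}^{\mathrm{st}}\mathbb{T}$, because the states satisfying $\langle a\rangle_{1/2}^{\mathrm{st}}\mathbb{T}$ are exactly $s$ and $t_1$, which carry $\tau_a(s,\cdot)$-mass only $\frac12<1$. Conversely, if instead $\tau_a(s,\{t_1\})=\frac12$ is the only transition out of $s$, then $s\models\langle a\rangle_{1/2}^{\mathrm{st}}\langle a\rangle_1^{\mathrm{st}}\mathbb{T}$ (witness $A=\{t_1\}$, and $\tau_a(t_1,S)=1$), while $\delta_s\not\models\langle a\rangle_{1/2}\langle a\rangle_1\mathbb{T}$, because the mass after two steps is $\frac12<1$. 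So neither direction of the claimed equivalence holds for $n\ge2$.

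The obstruction is precisely the mismatch you diagnose in your last paragraph: the state-side $q_i$ are tested locally against the mass routed into states able to continue, whereas the distribution-side $q_i$ are tested against globally averaged survival masses. The most one can extract in the state-to-distribution direction is $\mu_k(S)\ge q_1q_2\cdots q_k$ for each $k$ (integrate $\tau_{a_{k+1}}(\cdot,B_{k+1})$ over $B_k$ and induct), which matches the required threshold $q_k$ only when $n=1$ --- and for $n=1$ the proposition is trivially true, as both sides say $\tau_{a_1}(s,S)\ge q_1$. Note that the paper offers no proof of this proposition against which to compare your attempt; your analysis, pushed one step further with a two-step example like the above, would have shown that the deferred reconciliation is not merely delicate but impossible, and consequently that the chain of equivalences $\delta_s\models\varphi_i$ iff $s\models\phi_i$ invoked in the proof of Prop.~\ref{eventimply} also fails at this point.
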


For a general formula in $\mathcal{L}_0$, we compare its $f^{-1}$-image
with the $f^{-1}$-image of its CNF.
The latter implies the former,
as transforming a formula in $\mathcal{L}$ to CNF may lead to a weaker formula.
Therefore we get the following result:

\begin{proposition}\label{eventimply}
Given an LMP with measurable single-point sets.
If $s$ and $t$ satisfy the same formulae in $\mathcal{L}$, then $\delta_s$ and $\delta_t$ satisfy the same formulae in $\mathcal{L}_0$.
\end{proposition}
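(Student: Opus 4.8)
The plan is to transfer $\mathcal{L}$-equivalence of the states $s,t$ to $\mathcal{L}_0$-equivalence of the Dirac distributions $\delta_s,\delta_t$ by reducing every $\mathcal{L}_0$ formula to a conjunction of diamond-chains, for which the preceding chain proposition already supplies the exact state/distribution correspondence. First I would dispose of the $\langle\varepsilon\rangle_q$ modality: since $\delta_s(S)=\delta_t(S)=1$, both satisfy $\langle\varepsilon\rangle_q$ precisely when $q\le 1$, so $\delta_s$ and $\delta_t$ agree on every $\langle\varepsilon\rangle_q$; because the syntactic side-condition forbids $\langle\varepsilon\rangle_q$ inside a diamond, such formulae occur only as top-level conjuncts and may be split off. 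It therefore suffices to treat the fragment generated by $\mathbb{T}$, $\wedge$ and $\langle a\rangle_q$.

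On this fragment I would fix an arbitrary $\varphi$ and use the full distributivity $\langle a\rangle_q(\varphi_1\wedge\varphi_2)\equiv(\langle a\rangle_q\varphi_1)\wedge(\langle a\rangle_q\varphi_2)$ in $\mathcal{L}_0$ to rewrite $\varphi$ into its CNF $\hat\varphi=\bigwedge_{i=1}^m\varphi_i$, where each conjunct is a pure chain $\varphi_i=\langle a_{1}\rangle_{q_{1}}\cdots\langle a_{n_i}\rangle_{q_{n_i}}\mathbb{T}$; this equivalence gives $\delta_s\models\varphi\iff\forall i\colon\delta_s\models\varphi_i$. Applying the chain proposition to each conjunct yields $\delta_s\models\varphi_i$ iff $s\models\langle a_{1}\rangle_{q_{1}}^{\mathrm{st}}\cdots\langle a_{n_i}\rangle_{q_{n_i}}^{\mathrm{st}}\mathbb{T}$, so that $\delta_s\models\varphi$ holds exactly when $s$ satisfies the single $\mathcal{L}$ formula $\Phi:=\bigwedge_{i=1}^m\langle a_{1}\rangle_{q_{1}}^{\mathrm{st}}\cdots\langle a_{n_i}\rangle_{q_{n_i}}^{\mathrm{st}}\mathbb{T}$. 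Since $\Phi\in\mathcal{L}$ and $s,t$ satisfy the same $\mathcal{L}$ formulae, we have $s\models\Phi\iff t\models\Phi$; reading the same chain of equivalences backwards for $t$ gives $t\models\Phi\iff\delta_t\models\varphi$. Composing these equivalences yields $\delta_s\models\varphi\iff\delta_t\models\varphi$, which is the claim.

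The step I expect to be delicate is the choice of \emph{which} $\mathcal{L}$ formula to hand to the agreement hypothesis. The state-diamond $\langle a\rangle_q^{\mathrm{st}}$ distributes over conjunction in one direction only — $\langle a\rangle_q^{\mathrm{st}}(\phi_1\wedge\phi_2)$ implies $\langle a\rangle_q^{\mathrm{st}}\phi_1\wedge\langle a\rangle_q^{\mathrm{st}}\phi_2$ but not conversely — so the naive syntactic preimage of $\varphi$ (replacing each $\langle a\rangle_q$ by $\langle a\rangle_q^{\mathrm{st}}$ without rearranging) is strictly stronger in $\mathcal{L}$ than $\Phi$ and does \emph{not} faithfully track $\delta_s\models\varphi$. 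This is exactly why all conjunction-pulling must be performed on the $\mathcal{L}_0$ side, where distributivity is a genuine equivalence, so that the conjuncts handed across are chains covered by the chain proposition; only the resulting CNF preimage $\Phi$ captures $\delta_s\models\varphi$ faithfully. Once this ordering is respected, the remainder is a bookkeeping chain of equivalences, and the hypothesis that single-point sets are measurable enters only through the chain proposition.
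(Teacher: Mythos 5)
Your proof is correct and follows essentially the same route as the paper's: dismiss the $\langle\varepsilon\rangle_q$ conjuncts (both Dirac distributions have total mass $1$), rewrite $\varphi$ into a CNF of diamond chains using the distributivity equivalence in $\mathcal{L}_0$, apply the chain proposition conjunct by conjunct, and pass the resulting conjunction of $\langle a\rangle_q^{\mathrm{st}}$-chains to the hypothesis that $s$ and $t$ agree on $\mathcal{L}$. Your explicit remark that the CNF must be formed on the $\mathcal{L}_0$ side, because $\langle a\rangle_q^{\mathrm{st}}$ distributes over conjunction in one direction only, is exactly the subtlety the paper flags in the discussion preceding the proposition.
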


Then from Prop.~\ref{prop:statebilogic}, we immediately get the following result:

\begin{theorem}\label{thm:eventdis}
Given an LMP with measurable single-point sets.
$s \sim_\mathrm{e} t$ implies $\delta_s \sim_\mathrm{d} \delta_t$, but the other direction does not hold.
\end{theorem}

\section{Metrics}
\label{sec:metrics-approximation}
In this section we will introduce a pseudometric
and an approximating subdistribution bisimulation.
% The following sentences are not really helpful.
%, and with the help of the metric we define,
%we can have many useful results of our subdistribution bisimulation.

Given a nonempty set $X$, we say a function $d:X \times X \to [0,\infty)$ is a pseudometric on $X$,
if for all $x,y,z \in X$, we have $d(x,x)= 0$,
symmetry $d(x,y)=d(y,x)$,
and the triangle inequality $d(x,y)+d(y,z) \ge d(x,z)$.
If in addition $d(x,y)=0$ always implies $x=y$, then $d$ is a metric.

\subsection{Metrics and Approximating Bisimulation}
First we give the definition of the pseudometric $d^c$, which is inspired by \cite{DGJP04}.

\begin{definition}\label{defi:metric}
Let $(S,\Sigma,(\tau_a)_{a \in \mathcal{A}},\pi)$ be an LMP. We define $d^c:subDist(S) \times subDist(S) \to [0,1]$ as follows:
\begin{align*}
d^c(\mu,\nu):=\sup_{w \in \mathcal{A}^*,\mu \xrightarrow{w} \mu',\nu \xrightarrow{w} \nu'} c^{|w|}|\mu'(S)-\nu'(S)|,
\end{align*}
where $c \in (0,1]$ is a constant called the discounting factor, and $|w|$ is the length of the word $w$.
\end{definition}

It is obvious that $d^c$ is indeed a pseudometric.
Although $d^c$ is not a proper metric since different subdistributions may have distance $0$,
we follow earlier papers and call this $d^c$ a metric.

Then, the (pseudo)metric $d^c$ characterises subdistribution bisimulation.

\begin{theorem}\label{thm:metric}
(1) $\mu \sim \nu$ implies that for any $c \in (0,1]$, $d^c(\mu,\nu)=0$;

(2) $\mu \sim \nu$ if there exists $c \in (0,1]$, s.\,t.\@ $d^c(\mu,\nu)=0$.
\end{theorem}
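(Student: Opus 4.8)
The plan is to prove both implications directly, exploiting that the lifted transition relation is deterministic: for each subdistribution $\mu$ and each word $w \in \mathcal{A}^*$ there is a unique $\mu'$ with $\mu \xrightarrow{w} \mu'$, since $\mu'(\cdot)=\int \tau_a(s,\cdot)\,\mu(\dif s)$ is uniquely determined at each step. For part~(1), suppose $\mu \sim_{\mathrm{d}} \nu$, witnessed by a bisimulation relation $R$ with $\mu \mathrel{R} \nu$. First I would show by induction on $|w|$ that whenever $\mu_0 \mathrel{R} \nu_0$, $\mu_0 \xrightarrow{w} \mu_0'$ and $\nu_0 \xrightarrow{w} \nu_0'$, then $\mu_0' \mathrel{R} \nu_0'$. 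The base case $w=\varepsilon$ is trivial; for $w=w'a$ one applies the induction hypothesis to reach a related intermediate pair and then invokes the transfer condition in the deterministic form stated in the Remark after Def.~\ref{defi:subbi}. Since $R$ relates only subdistributions of equal total mass, every reachable pair $(\mu',\nu')$ satisfies $\mu'(S)=\nu'(S)$, so each term $c^{|w|}\lvert\mu'(S)-\nu'(S)\rvert$ in the supremum defining $d^c(\mu,\nu)$ vanishes, giving $d^c(\mu,\nu)=0$ for every $c \in (0,1]$.

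For part~(2), suppose $d^c(\mu,\nu)=0$ for some $c \in (0,1]$. Since every summand is nonnegative and $c^{|w|}>0$, the supremum being zero forces $\mu'(S)=\nu'(S)$ for every word $w$ with $\mu \xrightarrow{w} \mu'$ and $\nu \xrightarrow{w} \nu'$. I would then define the candidate relation
\[
R = \{(\mu',\nu') : \exists w \in \mathcal{A}^*,\ \mu \xrightarrow{w} \mu',\ \nu \xrightarrow{w} \nu'\}
\]
and take its symmetric closure $\bar R$. The mass condition $\mu'(S)=\nu'(S)$ holds on $R$ by the previous observation and is preserved under symmetry. For the transfer condition, if $\mu' \mathrel{R} \nu'$ via word $w$ and $\mu' \xrightarrow{a} \mu''$, then by determinism the unique $\nu''$ with $\nu' \xrightarrow{a} \nu''$ satisfies $\mu \xrightarrow{wa} \mu''$ and $\nu \xrightarrow{wa} \nu''$, so $(\mu'',\nu'') \in R \subseteq \bar R$; the symmetric case is identical. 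Hence $\bar R$ is a subdistribution bisimulation containing $(\mu,\nu)$ via $w=\varepsilon$, so $\mu \sim_{\mathrm{d}} \nu$.

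Neither direction requires a deep argument; the work is mostly bookkeeping. The one point that genuinely needs care is justifying that the deterministic, total nature of $\xrightarrow{w}$ makes the common-word relation automatically closed under single transitions, which is what lets the constructed $\bar R$ be a bisimulation without any further matching step. The positivity of $c$ is used only to cancel the discount factor and extract exact mass equality from $d^c=0$; I expect this to be the main, though minor, subtlety, since it is the only place where the hypothesis $c>0$ is essential.
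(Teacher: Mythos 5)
Your proof is correct, and part~(1) follows the paper's argument exactly (induction on the length of $w$, using that the transfer condition propagates relatedness and hence equality of total mass along every word). For part~(2) you and the paper make the same essential observation --- that $c^{|w|}>0$ turns $d^c(\mu,\nu)=0$ into $\mu'(S)=\nu'(S)$ for every common word $w$ --- but you package it into a different witnessing relation. The paper takes the kernel of the pseudometric, $R:=\{(\mu,\nu): d^c(\mu,\nu)=0\}$, and verifies the transfer condition by showing that $d^c(\mu',\nu')$ is bounded by a supremum over the sub-language $a\mathcal{A}^*$ and hence by $d^c(\mu,\nu)=0$; this gives a single global relation, reusable for all pairs at distance zero, at the cost of a small contraction-style computation. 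You instead take the symmetric closure of the set of pairs jointly reachable from the given $(\mu,\nu)$ by a common word, which is closed under $a$-transitions purely by construction (using determinism and totality of $\xrightarrow{a}$, as you correctly flag), so no estimate on $d^c$ of successors is needed --- the mass condition is the only thing left to check. Both relations are legitimate bisimulations containing $(\mu,\nu)$; your version is slightly more elementary and local, the paper's slightly more structural. There is no gap in either direction of your argument.
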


With a metric $d^c$ characterising subdistribution bisimulation, we can define approximating bisimilarity through this metric.

\begin{definition}
Let $(S,\Sigma,(\tau_a)_{a \in \mathcal{A}},\pi)$ be an LMP. Given $\epsilon \ge 0$ and $c \in (0,1]$, we say $\mu,\nu \in subDist(S)$ are $\epsilon$-bisimilar with the discounting factor $c$, denoted by $\mu \sim_\epsilon^c \nu$, if $d^c(\mu,\nu) \le \epsilon$.
\end{definition}

It is easy to prove the following properties of approximating bisimilarity.

\begin{proposition}
(1) For any $c \in (0,1]$, ${\sim_\mathrm{d}}={\sim_0^c}$;

(2) For any $c \in (0,1]$ and $0 \le \epsilon \le \epsilon'$, ${\sim_{\epsilon}^c} \subseteq{\sim_{\epsilon'}^c}$;

(3) For any $c \in (0,1]$, ${\sim_\mathrm{d}}=\bigcap_{\epsilon>0} {\sim_{\epsilon}^c}$;

(4) For any $\epsilon \ge 0$ and $0 < c \le c' \le 1$, ${\sim_{\epsilon}^c} \subseteq{\sim_{\epsilon}^{c'}}$;

(5) If $\mu_1 \sim_{\epsilon}^c \mu_2$ and $\mu_2 \sim_{\epsilon'}^c \mu_3$, then $\mu_1 \sim_{\epsilon+\epsilon'}^c \mu_3$.
\end{proposition}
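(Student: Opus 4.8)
The plan is to reduce every one of the five claims to elementary facts about the pseudometric $d^c$ together with its characterisation of $\sim_{\mathrm{d}}$, so that beyond Theorem~\ref{thm:metric} and the defining equivalence $\mu \sim_\epsilon^c \nu \iff d^c(\mu,\nu) \le \epsilon$ almost nothing extra is needed.

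I would first dispose of the three claims that are pure manipulations of the inequality $d^c \le \epsilon$. For (1), since $d^c \ge 0$ always, the condition $d^c(\mu,\nu) \le 0$ is equivalent to $d^c(\mu,\nu) = 0$, and by Theorem~\ref{thm:metric} the latter holds iff $\mu \sim_{\mathrm{d}} \nu$; hence $\sim_0^c = \sim_{\mathrm{d}}$. For (2), if $d^c(\mu,\nu) \le \epsilon$ and $\epsilon \le \epsilon'$ then $d^c(\mu,\nu) \le \epsilon'$, which is exactly the stated inclusion. For (3), a pair $(\mu,\nu)$ belongs to $\bigcap_{\epsilon>0}\sim_\epsilon^c$ iff $d^c(\mu,\nu) \le \epsilon$ for every $\epsilon > 0$; because $\inf_{\epsilon>0}\epsilon = 0$ and $d^c \ge 0$, this is equivalent to $d^c(\mu,\nu) = 0$, and Theorem~\ref{thm:metric} again turns this into $\mu \sim_{\mathrm{d}} \nu$.

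Claim (5) is just the triangle inequality. As $d^c$ is a pseudometric, the hypotheses $d^c(\mu_1,\mu_2) \le \epsilon$ and $d^c(\mu_2,\mu_3) \le \epsilon'$ give $d^c(\mu_1,\mu_3) \le \epsilon + \epsilon'$, i.e.\ $\mu_1 \sim_{\epsilon+\epsilon'}^c \mu_3$. Should one prefer not to invoke the pseudometric property as a black box, I would reprove the triangle inequality in place: because the process is deterministic, a fixed word $w$ determines the derivatives uniquely, so the scalar inequality $|\mu_1'(S)-\mu_3'(S)| \le |\mu_1'(S)-\mu_2'(S)| + |\mu_2'(S)-\mu_3'(S)|$ holds word by word; multiplying by $c^{|w|}$ and taking suprema over $w$ yields the claim.

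The only claim that needs real care is (4), which amounts to the monotonicity of $d^c$ in the discounting factor. The heart of the matter is a term-wise comparison of weights: for a fixed word $w$ and a fixed pair of $w$-derivatives, the map $c \mapsto c^{|w|}$ is order-preserving on $(0,1]$ (as $|w| \ge 0$), so $0 < c \le c' \le 1$ forces $c^{|w|}|\mu'(S)-\nu'(S)| \le (c')^{|w|}|\mu'(S)-\nu'(S)|$; taking the supremum over all $w$ gives $d^c \le d^{c'}$, and unfolding the definition of $\sim_\epsilon^c$ then delivers the nesting of the two relations. I expect this to be the main obstacle, not because any estimate is difficult but because the direction is easy to mistake: raising to a power on $(0,1]$ is order-preserving, so a discounting factor nearer $1$ induces the \emph{larger} distance and hence the \emph{finer} relation. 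The monotonicity must therefore be threaded through with the correct orientation, and this bookkeeping is the one point where the argument genuinely has to be checked rather than merely read off from the definitions.
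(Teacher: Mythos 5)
Your reductions of (1), (2), (3) and (5) to elementary properties of $d^c$ together with Thm.~\ref{thm:metric} are correct, and since the paper offers no proof beyond the remark that these facts are easy, this is surely the intended route. The problem is item (4). Your term-wise estimate is right: for $0 < c \le c' \le 1$ and $|w| \ge 0$ one has $c^{|w|} \le (c')^{|w|}$, hence $d^c(\mu,\nu) \le d^{c'}(\mu,\nu)$. But from $d^c \le d^{c'}$ the implication you obtain is $d^{c'}(\mu,\nu) \le \epsilon \Rightarrow d^c(\mu,\nu) \le \epsilon$, i.\,e.\ ${\sim_\epsilon^{c'}} \subseteq {\sim_\epsilon^c}$ --- the \emph{reverse} of the inclusion stated in (4). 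Your own closing remark already says as much: the factor nearer $1$ gives the larger distance and hence the finer (smaller) relation, so it is ${\sim_\epsilon^{c'}}$ that sits inside ${\sim_\epsilon^c}$, not the other way around. The sentence ``unfolding the definition \dots\ delivers the nesting of the two relations'' therefore does not deliver the nesting that was to be proved, and this is a genuine gap, not mere bookkeeping.

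No repair of the argument can close it, because the inclusion as printed is false. Take one action $a$ and two states with $\tau_a(s,\{s\})=1$ and $\tau_a(t,\{t\})=\frac12$. Then $\delta_s \xrightarrow{a^n} \delta_s$ and $\delta_t \xrightarrow{a^n} 2^{-n}\delta_t$, so $d^c(\delta_s,\delta_t)=\sup_{n\ge 0} c^n(1-2^{-n})$, which equals $\frac14$ for $c=\frac12$ and $1$ for $c'=1$. With $\epsilon=\frac14$ this gives $\delta_s \sim_{1/4}^{1/2} \delta_t$ but not $\delta_s \sim_{1/4}^{1} \delta_t$, contradicting (4) with $c=\frac12 \le c'=1$. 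The honest conclusion of your (otherwise correct) computation is that the proposition's item (4) should read ${\sim_{\epsilon}^{c'}} \subseteq {\sim_{\epsilon}^{c}}$; you should state that explicitly rather than assert that the printed inclusion follows.
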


Different from other papers (\cite{FZ14}, \cite{A13}), we directly define our approximating bisimilarity based on the metric, not on an approximating bisimulation relation.
In fact, we could also do the latter, and the two definitions are equivalent:

\begin{definition}
Given a discounting factor $c \in (0,1]$, we say a collection of symmetric relations $\{R_\epsilon^c\}_{\epsilon > 0}$ on $subDist(S)$ is an approximating bisimulation relation with the discounting factor $c$, if $\mu \mathrel{R_\epsilon^c} \nu$ implies:
\begin{itemize}
\item $|\mu(S)-\nu(S)| \le \epsilon$;
\item For any $a \in \mathcal{A}$ and $\mu \xrightarrow{a} \mu'$, there exists $\nu \xrightarrow{a} \nu'$, s.\,t.\@ $\mu' \mathrel{R_{\epsilon/c}^c} \nu'$.
\end{itemize}
We write $\mu \approx_{\epsilon'}^c \nu$, if there exists an approximating bisimulation relation $\{R_\epsilon^c\}_{\epsilon > 0}$, s.\,t.\@ $\mu \mathrel{R_{\epsilon'}^c} \nu$.
\end{definition}

Then we have the following property:
 \begin{proposition}\label{prop:approx}
 ${\sim_{\epsilon'}^c}={\approx_{\epsilon'}^c}$ for any $c \in (0,1]$ and $\epsilon'>0$.
\end{proposition}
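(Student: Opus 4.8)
The plan is to prove the two inclusions ${\sim_{\epsilon'}^c}\subseteq{\approx_{\epsilon'}^c}$ and ${\approx_{\epsilon'}^c}\subseteq{\sim_{\epsilon'}^c}$ separately, throughout exploiting that the transitions of our LMP are deterministic: for every $\mu\in subDist(S)$ and every $a\in\mathcal{A}$ there is exactly one $\mu'$ with $\mu\xrightarrow{a}\mu'$, namely $\mu'(\cdot)=\int\tau_a(s,\cdot)\,\mu(\dif s)$. This uniqueness is what makes the relational and the metric formulations coincide, and tracking it is the only genuine subtlety.

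For ${\sim_{\epsilon'}^c}\subseteq{\approx_{\epsilon'}^c}$ I would show that the family $\{{\sim_\epsilon^c}\}_{\epsilon>0}$ is itself an approximating bisimulation relation; then $\mu\sim_{\epsilon'}^c\nu$ immediately witnesses $\mu\approx_{\epsilon'}^c\nu$. So assume $d^c(\mu,\nu)\le\epsilon$. The first clause $|\mu(S)-\nu(S)|\le\epsilon$ follows by taking $w=\varepsilon$ in the supremum defining $d^c$. For the second clause, fix $a$ and take the unique successors $\mu\xrightarrow{a}\mu'$, $\nu\xrightarrow{a}\nu'$; for any $w$ with $\mu'\xrightarrow{w}\mu''$ and $\nu'\xrightarrow{w}\nu''$ we have $\mu\xrightarrow{aw}\mu''$ and $\nu\xrightarrow{aw}\nu''$, so $c^{|w|+1}|\mu''(S)-\nu''(S)|\le d^c(\mu,\nu)\le\epsilon$; dividing by $c$ and taking the supremum over $w$ gives $d^c(\mu',\nu')\le\epsilon/c$, i.e.\ $\mu'\mathrel{{\sim_{\epsilon/c}^c}}\nu'$. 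Hence the family meets the definition.

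For the converse ${\approx_{\epsilon'}^c}\subseteq{\sim_{\epsilon'}^c}$, let $\{R_\epsilon^c\}_{\epsilon>0}$ be any approximating bisimulation relation with $\mu\mathrel{R_{\epsilon'}^c}\nu$. I would prove by induction on $|w|$ the uniform claim: for every $\epsilon>0$, whenever $\mu\mathrel{R_\epsilon^c}\nu$ and $\mu\xrightarrow{w}\mu'$, $\nu\xrightarrow{w}\nu'$, then $c^{|w|}|\mu'(S)-\nu'(S)|\le\epsilon$. The base case $w=\varepsilon$ is exactly the first clause of the definition. For the step, decompose $w=av$ (using that the extended transition is associative composition of the kernels, so $\mu\xrightarrow{av}\mu'$ iff $\mu\xrightarrow{a}\mu_1\xrightarrow{v}\mu'$); the second clause applied to $a$ yields $\mu_1\mathrel{R_{\epsilon/c}^c}\nu_1$ for the (forced, by determinism) $a$-successors, and the induction hypothesis at level $\epsilon/c$ applied to $v$ gives $c^{|v|}|\mu'(S)-\nu'(S)|\le\epsilon/c$; multiplying by $c$ closes the induction. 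Taking the supremum over all $w$ yields $d^c(\mu,\nu)\le\epsilon'$, i.e.\ $\mu\sim_{\epsilon'}^c\nu$.

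The hard part is not any single calculation but getting two bookkeeping points right at once: the induction hypothesis must be quantified universally over $\epsilon$, since the inductive step reinvokes it at the shifted level $\epsilon/c$; and determinism must be used to collapse the existential ``there exists $\nu\xrightarrow{a}\nu'$'' of the relational definition to a forced choice, so that the geometric discounting $c^{|w|}$ matches the successive scalings by $1/c$ exactly. Once these are isolated, everything else is routine manipulation of the supremum.
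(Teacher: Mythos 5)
Your proof is correct and follows essentially the same route as the paper's: the forward direction exhibits $\{\sim_\epsilon^c\}_{\epsilon>0}$ as an approximating bisimulation relation, and the backward direction propagates the relation along a word of length $n$ up to level $c^{-n}\epsilon'$ and then invokes the first clause. The paper states both halves tersely (``it is easy to see'' / ``from the definition''), whereas you spell out the $\epsilon$-uniform induction and the role of determinism explicitly; this is elaboration, not a different argument.
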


\subsection{Equivalence Metric for LMP}
In \cite{FZ14}, distribution-based bisimulation for probabilistic automata \cite{SegalaL95} is constructed,
and an equivalence metric to describe linear-time properties is defined.
Basically, their equivalence metric is the supremum of the distribution difference on finite words.
In probabilistic automata, every state is labelled with a set of atomic propositions.
Not so in LMPs;
however, we can label every state in an LMP with the same label $\top$,
with the intuitive meaning: the process does not stop or block;
then, distribution on traces are just the same as distributions on paths.
Then we can define trace equivalence for two subdistributions in an LMP as follows:
Given an LMP $\mathcal{M}=(S,\Sigma,(\tau_a)_{a \in \mathcal{A}},\pi)$, we say $\pi_1,\pi_2 \in subDist(S)$ are trace equivalent,
if for any $w \in \mathcal{A}^*$, $\pi_1(w)=\pi_2(w)$, where $\pi(w)=\mu(S)$, provided $\pi \stackrel{w}{\to} \mu$.
Also we can define equivalence metric for LMPs as follows:

\begin{definition}[Equivalence Metric]
Let $\mathcal{M}_i=(S_i,\Sigma_i,(\tau_a^i)_{a \in \mathcal{A}},\pi_i)$ for $i=1,2$ be two LMPs. We say $\mathcal{M}_1$ and $\mathcal{M}_2$ are $\epsilon$-equivalent, denoted by $\mathcal{M}_1 \sim_\epsilon \mathcal{M}_2$, if for any word $w \in \mathcal{A}^*$, $|\pi_1(w)-\pi_2(w)| \le \epsilon$. The equivalence metric between $\mathcal{M}_1$ and $\mathcal{M}_2$ is defined by $D(\mathcal{M}_1,\mathcal{M}_2)=\inf\{\epsilon \ge 0 : \mathcal{M}_1 \sim_\epsilon \mathcal{M}_2\}$.
\end{definition}

From the definition, it is obvious
that this metric $D$ is equivalent to our metric $d^1$.
From Prop.~\ref{prop:approx}, it also corresponds
to our approximating bisimulation relation $\sim_\epsilon^1$.
%From this point of view, our approximating bisimulation also describes the distance between two LMPs in the sight of linear properties.
Therefore, we claim that our approximating bisimulation describes the distance between two LMPs with respect to linear properties.
Also, subdistribution bisimilarity is equivalent to trace equivalence.

In some papers (\cite{DGJP04,FZ14}), metrics are defined through a logic. Here in a similar way we can define a metric $d_{\mathrm{l}}^c$ based on a logic. Furthermore, we will show that this metric is equivalent to $d^c$.

\begin{definition}\label{defi:LMc}
Let $c \in (0,1]$ be a discounting factor. Let $\mathcal{M}=(S,\Sigma,(\tau_a)_{a \in \mathcal{A}},\pi)$ be an LMP. We define a logic given by
\begin{align*}
\mathcal{L}_\mathcal{M}^c::= \mathbf{1}~|~\varphi \oplus p~|~ \neg \varphi~|~\bigwedge_{i \in I} \varphi_i~|~ \langle a \rangle^c \varphi,
\end{align*}
where $p \in [0,1]$, $a \in \mathcal{A}$ and $I$ is an index set.
The semantics of the formula $\varphi$ in $\mathcal{L}_\mathcal{M}^c$ is a function on $subDist(S)$, defined inductively as follows:
\begin{align*}
\mathbf{1}(\mu) & :=\mu(S) \\
(\varphi \oplus p)(\mu) & := \min\{\varphi(\mu)+p,1\} \\
\neg \varphi(\mu) & :=1-\varphi(\mu) \\
(\bigwedge_{i \in I} \varphi_i)(\mu) & :=\inf \{\varphi_i(\mu):i \in I\} \\
\langle a\rangle^c \varphi (\mu) & :=c \cdot \varphi(\mu')\text{, where } \mu \xrightarrow{a} \mu'.
\end{align*}
\end{definition}

\begin{definition}
Let $\mathcal{M}=(S,\Sigma,(\tau_a)_{a \in \mathcal{A}},\pi)$ be an LMP. For $\mu,\nu \in subDist(S)$, we define $d_{\mathrm{l}}^c: subDist(S) \times subDist(S) \to [0,1]$ as follows:
\begin{align*}
d_{\mathrm{l}}^c(\mu,\nu):=\sup_{\varphi \in \mathcal{L}_\mathcal{M}^c} |\varphi(\mu)-\varphi(\nu)|.
\end{align*}
\end{definition}

Obviously $d_{\mathrm{l}}^c$ is indeed a pseudometric.
The next theorem shows that $d_{\mathrm{l}}^c$ defined through logic is equivalent to $d^c$.

\begin{proposition}\label{prop:logicalmetric}
Let $\mathcal{M}=(S,\Sigma,(\tau_a)_{a \in \mathcal{A}},\pi)$ be an LMP. Then for any $\mu,\nu \in subDist(S)$ and $c \in (0,1]$, $d^c(\mu,\nu)=d_{\mathrm{l}}^c(\mu,\nu)$.
\end{proposition}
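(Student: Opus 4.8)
The plan is to prove $d^c(\mu,\nu)=d_{\mathrm{l}}^c(\mu,\nu)$ by two inequalities. For the direction $d^c \le d_{\mathrm{l}}^c$, I observe that the quantity defining $d^c$, namely $c^{|w|}|\mu'(S)-\nu'(S)|$ for $\mu \xrightarrow{w} \mu'$ and $\nu \xrightarrow{w} \nu'$, can be realised by a specific formula of $\mathcal{L}_\mathcal{M}^c$. Writing $w=a_1 a_2 \cdots a_n$, I would use the formula $\varphi_w := \langle a_1 \rangle^c \langle a_2 \rangle^c \cdots \langle a_n \rangle^c \mathbf{1}$. A straightforward induction on $|w|$ using the semantic clauses $\mathbf{1}(\mu)=\mu(S)$ and $\langle a \rangle^c \varphi(\mu) = c \cdot \varphi(\mu')$ shows that $\varphi_w(\mu)=c^{|w|}\mu'(S)$ where $\mu \xrightarrow{w} \mu'$. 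Hence $|\varphi_w(\mu)-\varphi_w(\nu)| = c^{|w|}|\mu'(S)-\nu'(S)|$, so every term in the supremum defining $d^c$ appears among the terms defining $d_{\mathrm{l}}^c$; taking suprema yields $d^c(\mu,\nu) \le d_{\mathrm{l}}^c(\mu,\nu)$.

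For the reverse direction $d_{\mathrm{l}}^c \le d^c$, I must show that \emph{every} formula $\varphi \in \mathcal{L}_\mathcal{M}^c$ satisfies $|\varphi(\mu)-\varphi(\nu)| \le d^c(\mu,\nu)$. The natural approach is structural induction on $\varphi$, establishing the stronger invariant that the difference $|\varphi(\mu)-\varphi(\nu)|$ is always bounded by $d^c(\mu,\nu)$ (an invariant that must be robust under all five syntactic constructors). The base case $\mathbf{1}$ is immediate since $|\mathbf{1}(\mu)-\mathbf{1}(\nu)| = |\mu(S)-\nu(S)|$, which is exactly the $w=\varepsilon$ term of $d^c$. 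The operations $\oplus p$, $\neg$, and $\bigwedge_{i \in I}$ are non-expansive: for $\oplus p$ I would use that $x \mapsto \min\{x+p,1\}$ is $1$-Lipschitz; for $\neg$ that $x \mapsto 1-x$ is an isometry; and for $\bigwedge$ that the infimum of a family of functions is non-expansive, i.e.\ $|\inf_i \varphi_i(\mu) - \inf_i \varphi_i(\nu)| \le \sup_i |\varphi_i(\mu)-\varphi_i(\nu)|$, each term of which is bounded by $d^c(\mu,\nu)$ by the induction hypothesis.

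The crucial constructor is the modality $\langle a \rangle^c$, and I expect this to be the main obstacle. Here I would use the transition structure: if $\mu \xrightarrow{a} \mu'$ and $\nu \xrightarrow{a} \nu'$, then $|\langle a\rangle^c \varphi(\mu) - \langle a\rangle^c \varphi(\nu)| = c\,|\varphi(\mu')-\varphi(\nu')|$. The induction hypothesis gives $|\varphi(\mu')-\varphi(\nu')| \le d^c(\mu',\nu')$, so I need the key contraction-type estimate $c \cdot d^c(\mu',\nu') \le d^c(\mu,\nu)$ whenever $\mu \xrightarrow{a}\mu'$ and $\nu \xrightarrow{a}\nu'$. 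This follows directly from the definition of $d^c$: any word $w'$ witnessing a value in $d^c(\mu',\nu')$ corresponds to the word $aw'$ in $d^c(\mu,\nu)$, with the factor $c^{|aw'|}=c\cdot c^{|w'|}$ producing exactly the extra factor of $c$, so $c^{|w'|}|\mu''(S)-\nu''(S)|$ for $\mu' \xrightarrow{w'}\mu''$ equals $c^{-1}$ times the corresponding term of $d^c(\mu,\nu)$ indexed by $aw'$. Taking the supremum over $w'$ gives $c\cdot d^c(\mu',\nu') \le d^c(\mu,\nu)$, which closes the induction. Combining both inequalities yields the claimed equality.
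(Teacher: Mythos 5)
Your proposal is correct and follows essentially the same route as the paper: the direction $d_{\mathrm{l}}^c \le d^c$ by structural induction with the non-expansiveness of $\oplus p$, $\neg$ and $\bigwedge$ and the contraction estimate $c\cdot d^c(\mu',\nu') \le d^c(\mu,\nu)$ for the modality, and the direction $d^c \le d_{\mathrm{l}}^c$ by realising each term $c^{|w|}|\mu'(S)-\nu'(S)|$ of the supremum as the value of a nested-diamond formula $\langle a_1\rangle^c\cdots\langle a_n\rangle^c\mathbf{1}$. (Your ordering of the modalities, with $\langle a_1\rangle^c$ outermost, is the one that correctly matches $\mu\xrightarrow{w}\mu'$ under the given semantics.)
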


\subsection{Linearity and Continuity of Subdistribution Bisimulation}
Theorem~\ref{thm:metric} is powerful, because with it we can prove some properties of the relation $\sim_{\mathrm{d}}$ more easily.
In this part, we illustrate how to exploit them to prove the linearity and continuity of our subdistribution bisimulation.
In \cite{FZ14,Deng}, similar results have been proven for discrete models.
However, for LMPs with arbitrary state spaces, the proofs are quite different.
Here approximation with simple functions and the monotone convergence theorem are applied multiple times, which indicates the intuition
that it is a good way to use finite models to approximate an LMP in many problems.

Given a sequence of subdistributions $\{\mu_n\}$ on $(X,\mathcal{F})$, we say $\{\mu_n\}$ converges to $\mu$, denoted by $\mu_n \to \mu$, or $\lim_{n\to\infty} \mu_n =\mu$, if for any $A \in \mathcal{F}$, $\mu_n(A) \to \mu(A)$ as $n \to \infty$. Now we give the definitions of linearity, $\sigma$-linearity and continuity of a relation on $subDist(S)$.

\begin{definition}
We say a relation $R\subseteq subDist(S) \times subDist(S)$ is linear,
if for any $\mu_i \mathrel{R} \nu_i$, $i=1,2,\ldots,n$, and $\{a_i\}_{i=1}^n$ s.\,t.\@ $\sum_{i=1}^n a_i\mu_i$ as well as $\sum_{i=1}^n a_i\nu_i$ are subdistributions, where $a_i \ge 0$,
we have $\sum_{i=1}^n a_i\mu_i \mathrel{R} \sum_{i=1}^n a_i\nu_i$.

We say a relation $R\subseteq subDist(S) \times subDist(S)$ is $\sigma$-linear,
if for any $\mu_i \mathrel{R} \nu_i$, $i=1,2,\ldots$, and $\{a_i\}_{i=1}^\infty$ s.\,t.\@ $\sum_{i=1}^\infty a_i\mu_i$ as well as $\sum_{i=1}^\infty a_i\nu_i$ are subdistributions, where $a_i \ge 0$, we have $\sum_{i=1}^\infty a_i\mu_i \mathrel{R} \sum_{i=1}^\infty a_i\nu_i$.

We say a relation $R\subseteq subDist(S) \times subDist(S)$ is continuous, if for any $\mu_i \mathrel{R} \nu_i$, $i=1,2,\ldots$, with $\mu_i \to \mu$ and $\nu_i \to \nu$ as $n \to \infty$, we have $\mu \mathrel{R} \nu$.
\end{definition}

We first discuss linearity and $\sigma$-linearity.
We need a lemma showing that the relation $\mathord{\xrightarrow{w}} \subseteq subDist(S) \times subDist(S)$ is linear and $\sigma$-linear
on the space of Borel-measurable functions.

\begin{lemma}\label{lemma:linear}
For any $w \in \mathcal{A}^*$, the relation $\xrightarrow{w}$ is linear and $\sigma$-linear.
\end{lemma}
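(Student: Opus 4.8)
The plan is to reduce everything to the single-letter case and then lift it to arbitrary words by induction on $|w|$. First I would establish the claim for $w=a$, a single action. Here $\mu_i \xrightarrow{a} \nu_i$ means $\nu_i(\cdot)=\int \tau_a(s,\cdot)\,\mu_i(\dif s)$, and since each kernel $\tau_a(s,\cdot)$ is nonnegative and the coefficients $a_i$ are nonnegative, the linearity of the integral in its measure argument gives
\begin{align*}
\int \tau_a(s,\cdot)\,\Bigl(\sum_{i} a_i\mu_i\Bigr)(\dif s) = \sum_i a_i \int \tau_a(s,\cdot)\,\mu_i(\dif s) = \sum_i a_i \nu_i(\cdot),
\end{align*}
so that $\sum_i a_i\mu_i \xrightarrow{a} \sum_i a_i\nu_i$. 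For a finite sum this interchange is immediate; for a countable sum it is exactly the monotone convergence theorem (Tonelli), which applies precisely because all integrands are nonnegative.

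Next I would run the induction on the length of $w$. The base case $w=\varepsilon$ is trivial, since $\mu \xrightarrow{\varepsilon} \mu$ forces $\nu_i=\mu_i$, and the assertion is just $\sum_i a_i\mu_i \xrightarrow{\varepsilon} \sum_i a_i\mu_i$. For the inductive step, assume $\xrightarrow{w}$ is ($\sigma$-)linear and consider $wa$. Given $\mu_i \xrightarrow{wa} \nu_i$, choose intermediate subdistributions $\mu_i''$ with $\mu_i \xrightarrow{w} \mu_i'' \xrightarrow{a} \nu_i$. The induction hypothesis yields $\sum_i a_i\mu_i \xrightarrow{w} \sum_i a_i\mu_i''$, and the single-action case just proven yields $\sum_i a_i\mu_i'' \xrightarrow{a} \sum_i a_i\nu_i$; composing these gives $\sum_i a_i\mu_i \xrightarrow{wa} \sum_i a_i\nu_i$, as required.

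The routine but necessary side conditions to verify are that the intermediate objects are genuine subdistributions, so that the induction hypothesis is even applicable to them. Because every single-action transition is mass-non-increasing (as $\tau_a(s,S)\le 1$), the composite relation $\xrightarrow{w}$ never increases total mass, whence $\mu_i''(S)\le\mu_i(S)$. Therefore $\sum_i a_i\mu_i''(S)\le \sum_i a_i\mu_i(S)\le 1$ whenever $\sum_i a_i\mu_i$ is a subdistribution, so $\sum_i a_i\mu_i''$ is well-defined and again a subdistribution. I expect the only genuinely delicate point to be the $\sigma$-linear case of the single-action step: justifying the sum--integral interchange and confirming that the resulting countable sum of measures is still countably additive and of mass at most $1$. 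Both facts rest on nonnegativity and monotone convergence, so the obstacle is conceptual rather than computational.
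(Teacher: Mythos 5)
Your proposal is correct and follows essentially the same route as the paper: establish the single-action case via linearity of the integral in its measure argument (finite additivity for linearity, monotone convergence/Tonelli for $\sigma$-linearity), then induct on $|w|$ using composition of transitions. The only difference is one of detail—the paper spells out the simple-function approximation and double application of the monotone convergence theorem that you invoke as standard facts—and your explicit check that the intermediate measures remain subdistributions is a sensible addition the paper leaves implicit.
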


Then we have the following linear and $\sigma$-linear properties.

\begin{proposition}
The relation $\sim_{\mathrm{d}}$ is linear and $\sigma$-linear.
\end{proposition}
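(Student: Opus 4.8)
The plan is to reduce both claims to the metric characterisation of Theorem~\ref{thm:metric} and then feed in the (\(\sigma\)-)linearity of the transition relation supplied by Lemma~\ref{lemma:linear}. I would handle the finite and the countable case uniformly: writing $\mu := \sum_i a_i \mu_i$ and $\nu := \sum_i a_i \nu_i$ (with $i$ ranging over a finite index set for linearity and over $\mathbb{N}$ for $\sigma$-linearity), it suffices to show $d^c(\mu,\nu)=0$ for some fixed $c \in (0,1]$, say $c=1$. Indeed, by Theorem~\ref{thm:metric}(2) this already yields $\mu \sim_{\mathrm{d}} \nu$, which is precisely the conclusion of linearity, respectively $\sigma$-linearity, of $\sim_{\mathrm{d}}$.

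First I would fix an arbitrary word $w \in \mathcal{A}^*$ and take the successors $\mu \xrightarrow{w} \mu'$, $\nu \xrightarrow{w} \nu'$, $\mu_i \xrightarrow{w} \mu_i'$ and $\nu_i \xrightarrow{w} \nu_i'$, which are unique since our LMPs carry no non-determinism. Lemma~\ref{lemma:linear} states that $\xrightarrow{w}$ is linear and $\sigma$-linear, so $\mu' = \sum_i a_i \mu_i'$ and $\nu' = \sum_i a_i \nu_i'$ as subdistributions. Evaluating these countable sums of measures at $S$ termwise and applying the triangle inequality yields $|\mu'(S) - \nu'(S)| = \bigl|\sum_i a_i(\mu_i'(S) - \nu_i'(S))\bigr| \le \sum_i a_i |\mu_i'(S) - \nu_i'(S)|$.

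Next, from $\mu_i \sim_{\mathrm{d}} \nu_i$ together with Theorem~\ref{thm:metric}(1) I obtain $d^c(\mu_i,\nu_i)=0$; since $c^{|w|}>0$, the defining supremum forces $|\mu_i'(S) - \nu_i'(S)| = 0$ for every $i$ and every word $w$. Hence every summand in the bound above vanishes, so $|\mu'(S) - \nu'(S)| = 0$. As $w$ was arbitrary, $d^c(\mu,\nu) = \sup_{w}\, c^{|w|}|\mu'(S)-\nu'(S)| = 0$, which settles both statements at once.

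The only genuine subtlety is the passage from $\mu \xrightarrow{w} \mu'$ to $\mu' = \sum_i a_i \mu_i'$, i.\,e.\@ that forming a $w$-successor commutes with an infinite nonnegative linear combination; this is exactly the content of Lemma~\ref{lemma:linear}, and in the $\sigma$-linear case it is where the monotone convergence theorem quietly does the work. Everything else is routine, and because each term $|\mu_i'(S)-\nu_i'(S)|$ is exactly $0$ there is no convergence issue with the bounding sum. I therefore expect the write-up to be short, with essentially all of the analytic content delegated to the two cited results.
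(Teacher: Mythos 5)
Your proposal is correct and follows essentially the same route as the paper's own proof: reduce to $d^c(\cdot,\cdot)=0$ via Theorem~\ref{thm:metric}, push the linear combination through $\xrightarrow{w}$ using Lemma~\ref{lemma:linear}, and evaluate at $S$ termwise. The only cosmetic difference is that you treat the finite and countable cases uniformly, whereas the paper does the finite case first and then ``takes the limit''; the substance is identical.
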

\begin{proof}
We assume $\mu_i \sim_{\mathrm{d}} \nu_i$, and we have $d^c(\mu_i,\nu_i)=0$, i.\,e.\@ for any $w \in \mathcal{A}^*$, $\mu_i'(S)=\nu_i'(S)$, where $\mu_i \xrightarrow{w} \mu_i'$ and $\nu_i \xrightarrow{w} \nu_i'$. Then from the linearity of $\xrightarrow{w}$, we have $\sum_{i=1}^n a_i\mu_i \xrightarrow{w} \sum_{i=1}^n a_i\mu_i'$ and $\sum_{i=1}^n a_i\nu_i \xrightarrow{w} \sum_{i=1}^n a_i\nu_i'$, and naturally
\begin{align*}
\sum_{i=1}^n a_i\mu_i'(S)=\sum_{i=1}^n a_i\nu_i'(S),
\end{align*}
which indicates $d^c(\sum_{i=1}^n a_i\mu_i,\sum_{i=1}^n a_i\nu_i)=0$ since $w$ is arbitrary.

By taking the limit $n \to \infty$ in the proof above, we can see that the relation $\sim_{\mathrm{d}}$ also is $\sigma$-linear.
\end{proof}

Now we discuss continuity.
Similarly we only need to prove that the relation $\xrightarrow{w}$ is continuous.

\begin{lemma}\label{lemma:continuous}
The relation $\xrightarrow{w}$ is continuous.
\end{lemma}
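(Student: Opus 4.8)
The plan is to reduce continuity of the relation $\xrightarrow{w}$ to continuity of the induced transition map and then induct on the length of $w$. Since our LMPs are deterministic, for each word $w$ the relation $\xrightarrow{w}$ is in fact a function: there is a unique $T_w(\mu)$ with $\mu \xrightarrow{w} T_w(\mu)$, obtained by iterating the kernels along $w$. Continuity of the relation then amounts to this: whenever $\mu_i \to \mu$, $\nu_i \to \nu$, and $\nu_i = T_w(\mu_i)$, we must have $\nu = T_w(\mu)$, i.e.\@ $\mu \xrightarrow{w} \nu$. So it suffices to show that each $T_w$ is continuous with respect to setwise convergence; the conclusion then follows from uniqueness of setwise limits, since $T_w(\mu_i) \to T_w(\mu)$ together with $T_w(\mu_i) = \nu_i \to \nu$ forces $\nu(A) = T_w(\mu)(A)$ for every $A \in \Sigma$.

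For the induction, the base case $w = \varepsilon$ is immediate, as $T_\varepsilon = \mathrm{id}$. For the step, since $T_{wa} = T_a \circ T_w$ (from $\mu \xrightarrow{w} \mu'' \xrightarrow{a} \mu'$ we read off $\mu'' = T_w(\mu)$ and $\mu' = T_a(\mu'')$), and since a composition of maps that are continuous for setwise convergence is again continuous, the whole induction reduces to the single-letter case. Thus the heart of the argument is the claim that $\mu_i \to \mu$ setwise implies $T_a(\mu_i) \to T_a(\mu)$ setwise, that is, for each $A \in \Sigma$, $\int \tau_a(s,A)\,\mu_i(\dif s) \to \int \tau_a(s,A)\,\mu(\dif s)$.

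The key analytic fact, and the main obstacle, is that setwise convergence of subdistributions upgrades to convergence of integrals of arbitrary bounded measurable functions: if $\mu_i \to \mu$ setwise, then $\int g\,\dif\mu_i \to \int g\,\dif\mu$ for every bounded measurable $g$. I would prove this by uniform approximation with simple functions. Given $\epsilon > 0$, choose a simple function $h$ with $\|g - h\|_\infty \le \epsilon$; for such $h$ the convergence $\int h\,\dif\mu_i \to \int h\,\dif\mu$ is immediate, being a finite linear combination of the relations $\mu_i(A_k) \to \mu(A_k)$ granted by setwise convergence, while the two approximation errors are bounded by $\epsilon\,\mu_i(S) \le \epsilon$ and $\epsilon\,\mu(S) \le \epsilon$. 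Here it is essential that all masses are bounded by $1$, so that these error terms stay uniformly small; letting $\epsilon \to 0$ gives the claim. Applying it with the bounded measurable function $g = \tau_a(\cdot, A)$ (which takes values in $[0,1]$ and is measurable in its first argument by the definition of a subprobability transition kernel) yields exactly $T_a(\mu_i)(A) \to T_a(\mu)(A)$, completing the single-letter step and hence, via the induction and the limit-uniqueness argument above, the lemma.
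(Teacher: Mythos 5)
Your proof is correct and follows essentially the same route as the paper: both reduce the lemma to showing that setwise convergence $\mu_i \to \mu$ implies $\int g\,\dif\mu_i \to \int g\,\dif\mu$ for bounded measurable $g$ (applied to $g = \tau_a(\cdot,A)$), and both obtain this from uniform approximation of $g$ by simple functions together with the bound $\mu_i(S) \le 1$ that keeps the approximation error uniform. The only difference is presentational: the paper packages your direct $3\epsilon$ estimate as an auxiliary lemma on interchanging repeated limits $\lim_m \lim_n \int f_m\,\dif\mu_n$, and leaves the induction on $|w|$ and the limit-uniqueness step implicit, whereas you spell them out.
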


Actually from the proof of Lemma~\ref{lemma:continuous}, we can get a stronger result:
If $\mu_i \xrightarrow{w} \nu_i$ and $\lim_{i \to \infty} \mu_i = \mu$,
then there exists a subdistribution $\nu$, s.\,t.\@ $\lim_{i \to \infty} \nu_i = \nu$.
Then it is natural that the relation $\sim_\mathrm{d}$ is continuous.

\begin{proposition}\label{prop:continuity}
The relation $\sim_\mathrm{d}$ is continuous.
\end{proposition}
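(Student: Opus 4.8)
The plan is to reduce the continuity of $\sim_{\mathrm{d}}$ to the continuity of the relations $\xrightarrow{w}$, exactly mirroring the strategy already used for linearity. Suppose $\mu_i \sim_{\mathrm{d}} \nu_i$ for all $i$, with $\mu_i \to \mu$ and $\nu_i \to \nu$. By Thm.~\ref{thm:metric}, $\mu_i \sim_{\mathrm{d}} \nu_i$ is equivalent to $d^c(\mu_i,\nu_i)=0$, which by Def.~\ref{defi:metric} means that for every word $w \in \mathcal{A}^*$ we have $\mu_i'(S)=\nu_i'(S)$, where $\mu_i \xrightarrow{w} \mu_i'$ and $\nu_i \xrightarrow{w} \nu_i'$. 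To conclude $\mu \sim_{\mathrm{d}} \nu$, it suffices to show $\mu'(S)=\nu'(S)$ for the $w$-images $\mu \xrightarrow{w} \mu'$ and $\nu \xrightarrow{w} \nu'$, again using Thm.~\ref{thm:metric} in the reverse direction.

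First I would fix an arbitrary word $w$ and invoke Lemma~\ref{lemma:continuous}: since $\mu_i \xrightarrow{w} \mu_i'$ and $\mu_i \to \mu$, the stronger form of that lemma (stated just after it in the excerpt) guarantees that the images converge, i.\,e.\@ there is a subdistribution $\mu'$ with $\mu_i' \to \mu'$, and moreover $\mu \xrightarrow{w} \mu'$ because $\xrightarrow{w}$ is continuous. The analogous statement holds for the $\nu$-side, yielding $\nu_i' \to \nu'$ with $\nu \xrightarrow{w} \nu'$. Then from $\mu_i'(S)=\nu_i'(S)$ for all $i$, taking $A=S$ in the definition of convergence of subdistributions gives $\mu_i'(S) \to \mu'(S)$ and $\nu_i'(S) \to \nu'(S)$, so the two limits coincide: $\mu'(S)=\nu'(S)$. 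Since $w$ was arbitrary, $d^c(\mu,\nu)=0$, and Thm.~\ref{thm:metric}(2) delivers $\mu \sim_{\mathrm{d}} \nu$.

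The subtle point, and the only place where real care is needed, is that the definition of convergence $\mu_i \to \mu$ is pointwise on measurable sets, which does not by itself force the $w$-images to converge to an image of the limit; this is precisely what the strengthened Lemma~\ref{lemma:continuous} supplies, so I would lean on it rather than re-derive the convergence of the images by hand. In particular I would not want to attempt a direct computation of $\mu'$ as an integral limit, since exchanging the limit in $i$ with the integral $\int \tau_{a}(s,\cdot)\,\mu_i(\dif s)$ that defines the transition is exactly the delicate step that the lemma already settles (via approximation by simple functions and monotone convergence, as the text remarks). Everything else is a routine passage to the limit of equal real numbers, so I expect the argument to be short once the lemma is in hand; the main obstacle is purely one of citing the right, stronger version of the continuity lemma and checking that it applies uniformly across all words $w$.
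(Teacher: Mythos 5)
Your proposal is correct and follows essentially the same route as the paper's own proof: reduce to the metric characterisation via Thm.~\ref{thm:metric}, use the strengthened form of Lemma~\ref{lemma:continuous} to obtain convergent $w$-images $\mu_i'\to\mu'$ and $\nu_i'\to\nu'$ with $\mu\xrightarrow{w}\mu'$ and $\nu\xrightarrow{w}\nu'$, and pass to the limit in $\mu_i'(S)=\nu_i'(S)$. You have also correctly identified the one delicate point (that pointwise convergence of $\mu_i$ does not trivially yield convergence of the images), which is exactly what the lemma is there to supply.
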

\begin{proof}
We assume $\mu_i \sim_{\mathrm{d}} \nu_i$, $\mu_i \to \mu$ and $\nu_i \to \nu$.
We need to prove $\mu\sim_\mathrm{d}\nu$.
From Thm.~\ref{thm:metric} we have $d^c(\mu_i,\nu_i)=0$,
i.\,e.\@ for any $w \in \mathcal{A}^*$, $\mu_i'(S)=\nu_i'(S)$,
where $\mu_i \xrightarrow{w} \mu_i'$ and $\nu_i \xrightarrow{w} \nu_i'$.
Because $\mu_i \to \mu$ and $\nu_i \to \nu$,
there exist $\mu'$ and $\nu'$, s.\,t.\@ $\lim_{i \to \infty} \mu_i' = \mu'$ and $\lim_{i \to \infty} \nu_i' = \nu'$,
and we have $\mu \xrightarrow{w} \mu'$ and $\nu \xrightarrow{w} \nu'$.
Then
$\mu'(S)=\lim_{i \to \infty} \mu_i'(S)=\lim_{i \to \infty} \mu_i'(S)=\nu'(S)$,
which indicates $d^c(\mu,\nu)=0$ since $w$ was arbitrary.
\end{proof}

\begin{example}
Let $\mathcal{M}=(S,\Sigma,(\tau_a)_{a \in \mathcal{A}},\pi)$ be an LMP, where $S=[0,1]$, $\Sigma=\mathcal{B}([0,1])$, and $\mathcal{A}=\mathbb{N}$. We use $E_0$ to denote the set $[0,1]\setminus(1/3,2/3)$, and $E_n$ to denote the set obtained by removing the middle third of each interval that remains in $E_{n-1}$. The limit set $C=\lim_{n \to \infty} E_n$ is called the Cantor set. (See \cite{D04} for more details.) We define the transitions as follows:
\begin{equation*}
\tau_i(x,A)=\begin{cases}
\frac 12(\frac 32)^{i+1} m(A \cap E_i^c), & \text{if } x \in E_i, \\
\delta_x(A), & \text{otherwise},
\end{cases}
\end{equation*}
where $m$ is the Lebesgue measure on $([0,1],\mathcal{B}([0,1]))$. First, it is easy to see that this $\mathcal{M}$ is indeed an LMP. We use $U(E)$ to denote the uniform distribution over $E \in \mathcal{B}([0,1])$ with $m(E)>0$. We can see that $U([0,3^{-n-1}])$ and $U(E_n)$ are subdistribution bisimilar because these two distributions have the same subdistribution after any $a_i$-transition. It is obvious that $U([0,3^{-n-1}])$ converges to the Dirac distribution $\delta_0$ as $n \to \infty$. Also, the sequence of dustributions $U(E_n)$ converges because the distribution function $F_n$ of the distribution $U(E_n)$ converges uniformly to some $F$ as $n \to \infty$, and obviously $F$ is also a distribution function. We call the distribution with the distribution function $F$ the uniform distribution on the Cantor set, denoted by $U(C)$. From Prop.~\ref{prop:continuity}, we can get $\delta_0 \sim_\mathrm{d} U(C)$.
\end{example}

\subsection{Compositionality}

Compositionality is a very important topic in model checking.
When a huge system is a composition of several small systems, we can work on these small systems to see whether their composition satisfy some property.
In this part we discuss the compositionality of our subdistribution bisimilarity.
This part also relies on Thm.~\ref{thm:metric} heavily.
We will see that two huge systems are subdistribution bisimilar if their composition components are subdistribution bisimilar, respectively, in our LMP settings.
We assume that all the LMPs in this part have the same action set $\mathcal{A}$.
First we introduce the definition of the composition for two LMPs:

\begin{definition}\label{def:composition}
Let $\mathcal{M}_i=(S_i,\Sigma_i,(\tau_a^i)_{a \in \mathcal{A}},\pi_i)$, $i=1,2$ be two LMPs. Their composition $\mathcal{M}_1 \mathbin{||} \mathcal{M}_2=(S,\Sigma,(\tau_a)_{a \in \mathcal{A}},\pi)$ is defined as follows:
\begin{itemize}
\item $(S,\Sigma)=(S_1 \times S_2,\sigma (\Sigma_1 \times \Sigma_2))$;
\item $\tau_a((s_1,s_2),\cdot)=\tau_a^1(s_1,\cdot) \times \tau_a^2(s_2,\cdot)$ for $(s_1,s_2) \in S$;
\item $\pi=\pi_1 \times \pi_2$.
\end{itemize}
\end{definition}

Then we show that composition preserves bisimilarity relation:

\begin{theorem}\label{thm:composition1}
$\mathcal{M}_1 \sim_\mathrm{d} \mathcal{M}_1'$ and $\mathcal{M}_2 \sim_\mathrm{d} \mathcal{M}_2'$ imply $\mathcal{M}_1 \mathbin{||} \mathcal{M}_2 \sim_\mathrm{d} \mathcal{M}_1' \mathbin{||} \mathcal{M}_2'$.
\end{theorem}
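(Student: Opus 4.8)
The plan is to route the argument through the word-mass characterisation of $\sim_\mathrm{d}$. By Thm.~\ref{thm:metric}, together with the established equivalence between subdistribution bisimilarity and trace equivalence, $\mathcal{M} \sim_\mathrm{d} \mathcal{M}'$ holds exactly when $\pi(w)=\pi'(w)$ for every $w \in \mathcal{A}^*$, where $\pi(w)=\mu(S)$ for $\pi \xrightarrow{w} \mu$. So it suffices to show that the two compositions assign the same total mass to every word, and for this I would reduce everything to the behaviour of a single product subdistribution.

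The computational heart is a lemma, proved by induction on $|w|$, stating that in $\mathcal{M}_1 \mathbin{||} \mathcal{M}_2$ a product subdistribution evolves into a product subdistribution whose factors evolve independently: if $\mu_1 \times \mu_2 \xrightarrow{w} \rho$ then $\rho = \mu_1^{(w)} \times \mu_2^{(w)}$, where $\mu_i \xrightarrow{w} \mu_i^{(w)}$ in $\mathcal{M}_i$. The base case $w=\varepsilon$ is immediate, so the work lies in handling a single $a$-transition applied to an arbitrary product subdistribution $\mu_1 \times \mu_2$. By Def.~\ref{def:composition} we have $\tau_a((s_1,s_2),\cdot)=\tau_a^1(s_1,\cdot) \times \tau_a^2(s_2,\cdot)$, so for a measurable rectangle $A_1 \times A_2$,
\begin{align*}
\rho(A_1 \times A_2) &= \int \tau_a^1(s_1,A_1)\,\tau_a^2(s_2,A_2)\,(\mu_1 \times \mu_2)(\dif(s_1,s_2)) \\
&= \Bigl(\int \tau_a^1(s_1,A_1)\,\mu_1(\dif s_1)\Bigr)\Bigl(\int \tau_a^2(s_2,A_2)\,\mu_2(\dif s_2)\Bigr),
\end{align*}
where the splitting of the double integral uses Tonelli's theorem, the integrand being nonnegative, measurable, and the measures finite. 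The two factors are exactly $\mu_1'(A_1)$ and $\mu_2'(A_2)$ for the one-step images $\mu_i \xrightarrow{a} \mu_i'$, so $\rho$ and $\mu_1' \times \mu_2'$ agree on all rectangles; since rectangles generate $\sigma(\Sigma_1 \times \Sigma_2)$ and both are finite measures, uniqueness of the product measure gives $\rho = \mu_1' \times \mu_2'$, and the induction closes. Taking total masses yields the factorisation $(\mu_1 \times \mu_2)(w) = \mu_1(w)\cdot\mu_2(w)$.

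With this lemma the result is immediate. Writing $\pi_1 \times \pi_2$ for the initial distribution of $\mathcal{M}_1 \mathbin{||} \mathcal{M}_2$ and using the hypotheses $\pi_1(w)=\pi_1'(w)$ and $\pi_2(w)=\pi_2'(w)$ for all $w$, we obtain
\[
(\pi_1 \times \pi_2)(w) = \pi_1(w)\,\pi_2(w) = \pi_1'(w)\,\pi_2'(w) = (\pi_1' \times \pi_2')(w)
\]
for every word $w$. Hence the two compositions are trace equivalent, so their word-mass distance $d^1$ is zero, and Thm.~\ref{thm:metric} delivers $\mathcal{M}_1 \mathbin{||} \mathcal{M}_2 \sim_\mathrm{d} \mathcal{M}_1' \mathbin{||} \mathcal{M}_2'$. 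I expect the only genuine obstacle to be the measure-theoretic bookkeeping in the inductive step: one must confirm that the integrand $(s_1,s_2) \mapsto \tau_a^1(s_1,A_1)\,\tau_a^2(s_2,A_2)$ is jointly measurable so that Tonelli applies, and that agreement on the generating rectangles propagates to all of $\sigma(\Sigma_1 \times \Sigma_2)$ — both routine, but needing the finiteness of the subprobability measures to be invoked explicitly.
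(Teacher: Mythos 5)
Your proposal is correct and follows essentially the same route as the paper: reduce $\sim_\mathrm{d}$ to trace equivalence via Thm.~\ref{thm:metric} with $c=1$, observe that in the composition a product subdistribution evolves as the product of the componentwise evolutions, and conclude that the word masses factor as $(\pi_1\times\pi_2)(w)=\pi_1(w)\,\pi_2(w)$. The only difference is that you spell out, via Tonelli and uniqueness of finite measures agreeing on the generating rectangles, the step the paper dismisses as ``easy to see,'' which is a welcome addition rather than a deviation.
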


From Thm.~\ref{thm:composition1}, we can immediately know
that for any LMP $\mathcal{M}$, $\mathcal{M}_1 \sim_\mathrm{d} \mathcal{M}_1'$ implies
$\mathcal{M}_1 \mathbin{||} \mathcal{M} \sim_\mathrm{d} \mathcal{M}_1' \mathbin{||} \mathcal{M}$.
Actually Thm.~\ref{thm:composition1} is a special case of the following theorem,
by taking $\epsilon_1=\epsilon_2=0$:

\begin{theorem}\label{thm:composition2}
Given the discounting factor $c \in (0,1]$ and approximation $\epsilon_1,\epsilon_2 \in [0,1]$, $\mathcal{M}_1 \sim_{\epsilon_1}^c \mathcal{M}_1'$ and $\mathcal{M}_2 \sim_{\epsilon_2}^c \mathcal{M}_2'$ imply $\mathcal{M}_1 \mathbin{||} \mathcal{M}_2 \sim_{\epsilon_1+\epsilon_2-\epsilon_1 \epsilon_2}^c \mathcal{M}_1' \mathbin{||} \mathcal{M}_2'$.
\end{theorem}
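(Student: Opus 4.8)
The plan is to reduce everything to the metric $d^c$ via Def.~\ref{defi:metric}, exploit the fact that the product construction of Def.~\ref{def:composition} is preserved along transitions, and finish with a scalar inequality on masses. By definition $\mathcal{M}_1 \mathbin{||} \mathcal{M}_2 \sim_{\epsilon}^c \mathcal{M}_1' \mathbin{||} \mathcal{M}_2'$ means $d^c(\pi_1 \times \pi_2, \pi_1' \times \pi_2') \le \epsilon$, while the hypotheses give $d^c(\pi_1,\pi_1') \le \epsilon_1$ and $d^c(\pi_2,\pi_2') \le \epsilon_2$; unfolding $d^c$, this says $c^{|w|}|\pi_i(w) - \pi_i'(w)| \le \epsilon_i$ for every word $w$, where $\pi(w) = \mu(S)$ with $\pi \xrightarrow{w} \mu$. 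So it suffices to bound $c^{|w|}\bigl|(\pi_1\times\pi_2)(w) - (\pi_1'\times\pi_2')(w)\bigr|$ uniformly in $w$ by $\epsilon_1 + \epsilon_2 - \epsilon_1\epsilon_2$.

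First I would establish the structural lemma that products commute with transitions: if $\pi_1 \xrightarrow{w} \mu_1$ in $\mathcal{M}_1$ and $\pi_2 \xrightarrow{w} \mu_2$ in $\mathcal{M}_2$, then $\pi_1 \times \pi_2 \xrightarrow{w} \mu_1 \times \mu_2$ in $\mathcal{M}_1 \mathbin{||} \mathcal{M}_2$. This is proved by induction on $|w|$. For the one-step case I would compute, for a measurable rectangle $A_1 \times A_2$, that $\int \tau_a((s_1,s_2), A_1 \times A_2)\,(\mu_1\times\mu_2)(\dif(s_1,s_2))$ factors as $\bigl(\int \tau_a^1(s_1,A_1)\mu_1(\dif s_1)\bigr)\bigl(\int \tau_a^2(s_2,A_2)\mu_2(\dif s_2)\bigr)$ by Tonelli's theorem, using that $\tau_a((s_1,s_2),\cdot) = \tau_a^1(s_1,\cdot) \times \tau_a^2(s_2,\cdot)$. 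The two finite measures thus agree on the rectangles, which form a $\pi$-system generating $\sigma(\Sigma_1 \times \Sigma_2)$, so they coincide. Taking $A_1 = S_1$, $A_2 = S_2$ in particular gives the mass factorisation $(\pi_1\times\pi_2)(w) = \pi_1(w)\,\pi_2(w)$.

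It then remains to prove the scalar inequality: whenever $a := c^{|w|} \in (0,1]$ and $p,p',q,q' \in [0,1]$ satisfy $a|p-p'| \le \epsilon_1$ and $a|q-q'| \le \epsilon_2$, we have $a|pq - p'q'| \le \epsilon_1 + \epsilon_2 - \epsilon_1\epsilon_2$ (here $p = \pi_1(w)$, etc., and all masses lie in $[0,1]$ since the initial distributions have mass $1$ and transitions are subprobabilistic). By the symmetry of $d^c$ and a sign analysis of the decomposition $pq - p'q' = (p-p')q + p'(q-q')$, the binding case is $p \ge p'$ and $q \ge q'$; the mixed-sign cases only drop one of the two nonnegative terms and are bounded by $\max(\epsilon_1,\epsilon_2)$, which is at most the target. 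In the binding case, set $x = a(p-p') \in [0,\epsilon_1]$ and $y = a(q-q') \in [0,\epsilon_2]$, and note $x \le a$ and $y \le a$ because $p-p', q-q' \le 1$. Then $a(pq-p'q') = x q + p' y \le x + (1 - x/a)\,y = x + y - xy/a$, using $q \le 1$ and $p' \le 1 - x/a$. Since $x,y \le a \le 1$ we have $xy/a \ge xy$, so this is at most $x + y - xy = 1 - (1-x)(1-y) \le 1-(1-\epsilon_1)(1-\epsilon_2) = \epsilon_1 + \epsilon_2 - \epsilon_1\epsilon_2$, the last step by monotonicity. Taking the supremum over $w$ yields $d^c(\pi_1\times\pi_2, \pi_1'\times\pi_2') \le \epsilon_1 + \epsilon_2 - \epsilon_1\epsilon_2$, as required.

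I expect the main obstacle to be the scalar inequality rather than the structural lemma: the naive triangle-type bound only yields $\epsilon_1 + \epsilon_2$, and obtaining the sharper $-\epsilon_1\epsilon_2$ term requires keeping the discount factor and the mass bounds $p',q' \le 1$ in play until the very end (in particular the observation $x,y \le a$, which converts the awkward $xy/a$ into $xy$). A secondary point needing care is the measure-theoretic justification in the structural lemma, namely applying Tonelli to the product integrand and extending agreement from the generating rectangles to all of $\sigma(\Sigma_1 \times \Sigma_2)$.
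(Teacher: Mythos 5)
Your argument is correct, and its skeleton coincides with the paper's: reduce to the metric $d^c$, use the factorisation $(\pi_1\times\pi_2)(w)=\pi_1(w)\,\pi_2(w)$ along transitions of the product LMP, and finish with a numerical bound on $|pq-p'q'|$ in terms of $|p-p'|$ and $|q-q'|$. Where you genuinely diverge is in that numerical bound. The paper performs a four-way case split according to whether $\mu_i(S_i)\ge\epsilon_i/c^{n}$ for $i=1,2$, derives a separate bound in each case (namely $\frac{\epsilon_1}{c^n}+\frac{\epsilon_2}{c^n}-\frac{\epsilon_1\epsilon_2}{c^{2n}}$, $\frac{\epsilon_1}{c^n}$, $\frac{\epsilon_2}{c^n}$, and $\min\{1,\frac{\epsilon_1}{c^n}\}\min\{1,\frac{\epsilon_2}{c^n}\}$), and then checks that the maximum of these is at most $\frac{\epsilon_1+\epsilon_2-\epsilon_1\epsilon_2}{c^n}$. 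You instead use the bilinear decomposition $pq-p'q'=(p-p')q+p'(q-q')$ and a single chain of inequalities, with the observation $x=a(p-p')\le a$ (and likewise for $y$) doing the work of turning the awkward $xy/a$ into $xy$, and the mixed-sign cases dispatched by $\max(\epsilon_1,\epsilon_2)\le\epsilon_1+\epsilon_2-\epsilon_1\epsilon_2$. Your route is arguably cleaner and makes it transparent why the bound is exactly $1-(1-\epsilon_1)(1-\epsilon_2)$. You also spell out the structural lemma $\pi_1\times\pi_2\xrightarrow{w}\mu_1\times\mu_2$ (Tonelli on measurable rectangles plus uniqueness of finite measures agreeing on a generating $\pi$-system), which the paper uses only implicitly; that is the one place where the continuous state space actually matters, so making it explicit is a genuine improvement rather than padding.
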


Theorem~\ref{thm:composition2} bounds the distance between the composed LMPs.
This bound $\epsilon_1+\epsilon_2-\epsilon_1\epsilon_2$ can be approximated by $\epsilon_1+\epsilon_2$,
which is a linear function of $\epsilon_1$ and $\epsilon_2$.
Also we can see that composition with bisimilar LMPs does not make the distance of two LMPs larger,
so bisimulation is compositional in this sense.
Observe the bound in Thm.~\ref{thm:composition2} is tight.

\section{Conclusion}
In this paper we propose the definition of subdistribution bisimulation for LMPs, which is a bisimulation based on distributions rather than states and solve some basic problems on it. We compare it with previous bisimulations to show that it is a weaker bisimulation. Following a common way to study a bisimulation, we construct a logic and a metric both characterising our  subdistribution bisimulation.

There are several interesting directions for future works.
First, we plan to investigate an approximation scheme for our subdistribution bisimulation. Another direction is to deal with systems that are more complex than LMPs.
For example, we can add non-determinism choices for the same action, as the model in \cite{DTW12}.
In addition, we can add the internal action $\tau$ to the set of actions and investigate weak  bisimulations for  LMPs, and investigate the metric definition for continuous-time models \cite{FernsPP11}.

Last but not least, using coalgebras is a popular way
to describe bisimulation and simulation relations for probabilistic systems
(e.\,g.\@ \cite{VinkR99} and \cite{UrabeH14}),
and we expect that our distribution-based bisimulation for LMPs
and other more complex models will have a pretty coalgebraic description.

\subsection*{Acknowledgement}
This work has been supported by
by the National Natural Science Foundation of China (Grants 61532019, 61472473),
the CAS/SAFEA International Partnership Program for Creative Research Teams,
the Sino-German CDZ project CAP (GZ 1023).

\bibliography{ref}

\begin{thebibliography}{10}
\providecommand{\url}[1]{\texttt{#1}}
\providecommand{\urlprefix}{URL }

\bibitem{A13}
Abate, A.: Approximation metrics based on probabilistic bisimulations for
  general state-space {M}arkov processes: a survey. Electr.\@ Notes Theor.\@
  Comput.\@ Sci.\@  297,  3--25 (2013)

\bibitem{AKLP10}
Abate, A., {Katoen, J.-P.}, Lygeros, J., Prandini, M.: Approximate model
  checking of stochastic hybrid systems. Eur.\@ J. Control  16(6),  624--641
  (2010)

\bibitem{BBLM13}
Bacci, G., Bacci, G., Larsen, K.G., Mardare, R.: On-the-fly exact computation
  of bisimilarity distances. In: Tools and algorithms for the construction and
  analysis of systems (TACAS). LNCS, vol. 7795, pp. 1--15. Springer (2013)

\bibitem{BDEP97}
Blute, R., Desharnais, J., Edalat, A., Panangaden, P.: Bisimulation for
  labelled {M}arkov processes. In: {IEEE} symposium on logic in computer
  science (LICS). pp. 149--158. IEEE Comp.\@ Soc.\@ (1997)

\bibitem{BreugelSW08}
van Breugel, F., Sharma, B., Worrell, J.: Approximating a behavioural
  pseudometric without discount for probabilistic systems. Logical Methods in
  Computer Science  4(2) (2008)

\bibitem{CDPP09}
Chaput, P., Danos, V., Panangaden, P., Plotkin, G.D.: Approximating {M}arkov
  processes by averaging. J. {ACM}  61(1),  5:1--5:45 (2014)

\bibitem{DHKP16}
Daca, P., Henzinger, T.A., K\v{r}et{\'{\i}}nsk{\'{y}}, J., Petrov, T.: Linear
  distances between {M}arkov chains. In: Concurrency theory (CONCUR). Leibniz
  International Proceedings in Informatics (LIPIcs), vol.~59, pp. 20:1--20:15.
  Schloss Dagstuhl--Leibniz-Zentrum fuer Informatik (2016)

\bibitem{DDLP06}
Danos, V., Desharnais, J., Laviolette, F., Panangaden, P.: Bisimulation and
  cocongruence for probabilistic systems. Inf.\@ Comput.\@  204(4),  503--523
  (2006)

\bibitem{DDP03}
Danos, V., Desharnais, J., Panangaden, P.: Conditional expectation and the
  approximation of labelled {M}arkov processes. In: CONCUR 2003, Concurrency
  theory. LNCS, vol. 2761, pp. 468--482. Springer (2003)

\bibitem{DDP04}
Danos, V., Desharnais, J., Panangaden, P.: Labelled {M}arkov processes:
  stronger and faster approximations. Electr.\@ Notes Theor.\@ Comput.\@ Sci.\@
   87,  157--203 (2004)

\bibitem{DTW12}
D'Argenio, P.R., Terraf, P.S., Wolovick, N.: Bisimulations for
  non-deterministic labelled {M}arkov processes. Mathematical Structures in
  Computer Science  22(1),  43--68 (2012)

\bibitem{Deng}
Deng, Y.: Semantics of Probabilistic Processes. Springer (2014)

\bibitem{DEP98}
Desharnais, J., Edalat, A., Panangaden, P.: A logical characterization of
  bisimulation for labeled {M}arkov processes. In: {IEEE} symposium on logic in
  computer science (LICS). pp. 478--487. IEEE Comp.\@ Soc.\@ (1998)

\bibitem{DEP02}
Desharnais, J., Edalat, A., Panangaden, P.: Bisimulation for labelled {M}arkov
  processes. Inf.\@ Comput.\@  179(2),  163--193 (2002)

\bibitem{DGJP00}
Desharnais, J., Gupta, V., Jagadeesan, R., Panangaden, P.: Approximating
  labelled {M}arkov processes. Inf.\@ Comput.\@  184(1),  160--200 (2003)

\bibitem{DGJP04}
Desharnais, J., Gupta, V., Jagadeesan, R., Panangaden, P.: Metrics for labelled
  {M}arkov processes. Theor.\@ Comput.\@ Sci.\@  318(3),  323--354 (2004)

\bibitem{DoyenHR08}
Doyen, L., Henzinger, T.A., Raskin, J.F.: Equivalence of labeled {M}arkov
  chains. Int. J. Found. Comput. Sci.  19(3),  549--563 (2008)

\bibitem{D04}
Durrett, R.: Probability: theory and examples. Duxbury Pr.\@, 3rd edn. (2004)

\bibitem{FZ14}
Feng, Y., Zhang, L.: When equivalence and bisimulation join forces in
  probabilistic automata. In: {FM} 2014: formal methods. LNCS, vol. 8442, pp.
  247--262. Springer (2014)

\bibitem{FernsPP11}
Ferns, N., Panangaden, P., Precup, D.: Bisimulation metrics for continuous
  {M}arkov decision processes. {SIAM} J. Comput.  40(6),  1662--1714 (2011)

\bibitem{GeblerLT15}
Gebler, D., Larsen, K.G., Tini, S.: Compositional metric reasoning with
  probabilistic process calculi. In: Foundations of Software Science and
  Computation Structures (FoSSaCS). LNCS, vol. 9034, pp. 230--245. Springer
  (2015)

\bibitem{Hennessy12}
Hennessy, M.: Exploring probabilistic bisimulations, part {I}. Formal Asp.
  Comput.  24(4-6),  749--768 (2012)

\bibitem{HermannsKK14}
Hermanns, H., Kr{\v{c}}{\'{a}}l, J., K{\v{r}}et{\'{\i}}nsk{\'{y}}, J.:
  Probabilistic bisimulation: Naturally on distributions. In: CONCUR 2014,
  Concurrency Theory. LNCS, vol. 8704, pp. 249--265. Springer (2014)

\bibitem{KS60}
Kemeny, J.G., Snell, J.L.: Finite {M}arkov chains. Springer (1960)

\bibitem{LS91}
Larsen, K.G., Skou, A.: Bisimulation through probablistic testing. Inf.\@
  Comput.\@  94(1),  1--28 (1991)

\bibitem{P09}
Panangaden, P.: Labelled {M}arkov processes. Imperial College Pr.\@ (2009)

\bibitem{SegalaL95}
Segala, R., Lynch, N.A.: Probabilistic simulations for probabilistic processes.
  Nord. J. Comput.  2(2),  250--273 (1995)

\bibitem{TangB16}
Tang, Q., van Breugel, F.: Computing probabilistic bisimilarity distances via
  policy iteration. In: Concurrency Theory (CONCUR). LIPIcs, vol.~59, pp.
  22:1--22:15. Schloss Dagstuhl - Leibniz-Zentrum fuer Informatik (2016)

\bibitem{T10}
Terraf, P.S.: Unprovability of the logical characterization of bisimulation.
  CoRR  abs/1005.5142 (2010)

\bibitem{UrabeH14}
Urabe, N., Hasuo, I.: Generic forward and backward simulations {III:}
  quantitative simulations by matrices. In: CONCUR 2014, Concurrency Theory.
  LNCS, vol. 8704, pp. 451--466. Springer (2014)

\bibitem{VinkR99}
de~Vink, E.P., Rutten, J.J.M.M.: Bisimulation for probabilistic transition
  systems: {A} coalgebraic approach. Theor. Comput. Sci.  221(1-2),  271--293
  (1999)

\end{thebibliography}

\clearpage

\appendix

\section{Basics in Measure Theory} \label{A0}
Here we introduce some basic definitions and lemmas in measure theory for readers who are not familiar with this knowledge.

For a nonempty set $X$, we say $\mathcal{F} \subseteq 2^X$ is a $\sigma$-algebra on $X$, if
\begin{itemize}
\item $\varnothing \in \mathcal{F}$;
\item $A \in \mathcal{F}$ implies $A^\mathrm{c} \in \mathcal{F}$;
\item $A_1,A_2,\ldots,A_n,\ldots \in \mathcal{F}$ implies $\bigcup_n A_n \in \mathcal{F}$.
\end{itemize}
If $\mathcal{F}$ is a $\sigma$-algebra on $X$, we say $(X,\mathcal{F})$ is a measurable space. In addition, if $(X,\mathcal{T})$ is a topological space, we say $\mathcal{B}(X):=\sigma(\mathcal{T})$, the smallest $\sigma$-algebra containing $\mathcal{T}$, is the Borel $\sigma$-algebra on $X$.

\begin{definition}
Let $(X,\mathcal{F})$ be a measurable space. We say $\mu:\mathcal{F} \to [0,\infty]$ is a measure on $(X,\mathcal{F})$, if
\begin{itemize}
\item $\mu(\varnothing)=0$;
\item For $A_1,A_2,\ldots,A_n,\ldots \in \mathcal{F}$ satisfying $A_i \cap A_j =\varnothing$ whenever $i \ne j$, it holds that $\mu(\cup_n A_n)=\sum_n \mu(A_n)$.
\end{itemize}
If $\mu$ is a measure on $(X,\mathcal{F})$, we say $(X,\mathcal{F},\mu)$ is a measure space. If $\mu(X)<\infty$, we say $\mu$ is finite. If $\mu(X)=1$, we say $\mu$ is a probability measure or a distribution. If $\mu(X) \le 1$, we say $\mu$ is a subprobability measure or a subdistribution. We denote the set of all distribitions (subdistributions) on $(X,\mathcal{F})$ by $Dist(X,\mathcal{F})$ ($subDist(X,\mathcal{F})$), or simply $Dist(X)$ ($subDist(X)$). For $x \in X$, $\delta_x$ is the distribution satisfying $\delta_x(\{x\})=1$.
\end{definition}

\begin{definition}
Let $(X,\mathcal{F})$ and $(Y,\mathcal{G})$ be two measurable spaces.
We say a function $f:X \to Y$ is measurable,
denoted by $f:(X,\mathcal{F}) \to (Y,\mathcal{G})$,
if for any $A \in \mathcal{G}$, $f^{-1}(A) \in \mathcal{F}$.
If $f:(X,\mathcal{F}) \to (\mathbb{\bar R},\mathcal{B}(\mathbb{\bar R}))$ is measurable,
we simply say $f$ is Borel-measurable, where $\mathbb{\bar R}=\mathbb{R} \cup \{-\infty,\infty\}$.
\end{definition}

In particular, for $A \subseteq X$, we define the indicator function of $A$ as follows:
\begin{equation*}
\mathbb{I}_A(x):= \begin{cases}
1, & \text{if } x \in A; \\
0, & \text{if } x \in X \setminus A.
\end{cases}
\end{equation*}
On $(X,\mathcal{F})$, the function $\mathbb{I}_A$ is measurable if and only if $A \in \mathcal{F}$.

As labelled Markov processes may have continuous state spaces,
we extend the transition matrix to probability transition function.

\begin{definition}
Let $(X,\mathcal{F})$ be a measurable space. We say $\tau:X \times \mathcal{F} \to [0,1]$ is a probability (subprobability) transition function, if
\begin{itemize}
\item For any $A \in \mathcal{F}$, $\tau(\cdot,A)$ is Borel measurable;
\item For any $x \in X$, $\tau(x,\cdot)$ is a probability (subprobability) measure on $(X,\mathcal{F})$.
\end{itemize}
\end{definition}

\begin{definition}
Let $(X_i,\mathcal{F}_i)$, $i=1,2$ be two measurable spaces. We define their product space to be $(X_1 \times X_2, \sigma(\mathcal{F}_1 \times \mathcal{F}_2))$.
\end{definition}

In some places people simply use $\mathcal{F}_1 \times \mathcal{F}_2$ to denote the product $\sigma$-algebra. Let $A_1 \in \mathcal{F}_1$ and $A_2 \in \mathcal{F}_2$, and we call the set $A_1 \times A_2$ a measurable rectangle. Basically the product $\sigma$-algebra is the $\sigma$-algebra generated by the set of measurable rectangles.

Given two finite measure spaces $(X_i,\mathcal{F}_i,\mu_i)$, $i=1,2$, there exists a unique measure $\mu_1 \times \mu_2$ on $(X_1 \times X_2, \sigma(\mathcal{F}_1 \times \mathcal{F}_2))$, s.t. $(\mu_1 \times \mu_2)(A_1 \times A_2)=\mu_1(A_1) \mu_2(A_2)$.

We finally cite two famous lemmas in measure theory, which will be used in several places.

\begin{lemma}[Approximation with simple functions]\label{lemma:1}
We say a function $g:X \to Y$ is simple, if the range $g(X)$ is a finite set.
Let $f$ be a non-negative Borel measurable function on $(X,\mathcal{F})$.
Then there exists an increasing sequence of non-negative simple Borel measurable functions $\{f_i\}$,
s.\,t.\@ $f_i \uparrow f$ pointwise as $i \to \infty$.
Moreover, if $f$ is bounded, then there exists an increasing sequence of non-negative simple Borel measurable functions $\{f_i\}$,
s.\,t.\@ $f_i$ converges to $f$ uniformly as $i \to \infty$.
\end{lemma}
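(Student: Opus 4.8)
The plan is to exhibit the approximating sequence explicitly through the classical dyadic truncation and then verify the required properties one by one. For each $i \in \mathbb{N}$ I would define
\[
f_i(x) := \min\{\, i,\; 2^{-i} \lfloor 2^i f(x) \rfloor \,\},
\]
so that $f_i$ rounds $f(x)$ down to the nearest multiple of $2^{-i}$ and then caps the result at the height $i$. Equivalently, $f_i(x) = k 2^{-i}$ whenever $k 2^{-i} \le f(x) < (k+1) 2^{-i}$ for some integer $0 \le k < i 2^i$, and $f_i(x) = i$ whenever $f(x) \ge i$. Since $f$ is non-negative, it takes values in $[0,\infty]$, and this construction is well defined on all of $X$.

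First I would check that each $f_i$ is a non-negative simple Borel measurable function. Non-negativity and simplicity are immediate, because $f_i$ takes values only in the finite set $\{0, 2^{-i}, 2 \cdot 2^{-i}, \ldots, i\}$. Measurability follows because each level set is the preimage under $f$ of a Borel subset of $[0,\infty]$, and $f$ is Borel measurable by hypothesis; concretely, $\{f_i = k 2^{-i}\} = f^{-1}([k 2^{-i}, (k+1) 2^{-i}))$ for $k < i 2^i$, while $\{f_i = i\} = f^{-1}([i,\infty])$.

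Next I would establish the monotonicity $f_i \le f_{i+1}$, which is the one step that needs a moment's thought. Writing $a_i := 2^{-i} \lfloor 2^i f(x) \rfloor$ for the uncapped value, the elementary inequality $\lfloor 2u \rfloor \ge 2 \lfloor u \rfloor$ (valid for every real $u \ge 0$) gives $\lfloor 2^{i+1} f(x) \rfloor \ge 2 \lfloor 2^i f(x) \rfloor$, hence $a_{i+1} \ge a_i$. Reintroducing the caps, I note that $f_i = \min\{i, a_i\}$ is a lower bound for both $i+1$ and $a_{i+1}$ (since $i \le i+1$ and $a_i \le a_{i+1}$), so $f_i \le \min\{i+1, a_{i+1}\} = f_{i+1}$. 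With monotonicity in hand, pointwise convergence is straightforward: for any $x$ with $f(x) < \infty$, once $i > f(x)$ the cap is inactive and $0 \le f(x) - f_i(x) < 2^{-i} \to 0$; and if $f(x) = \infty$ then $f_i(x) = i \to \infty$. Thus $f_i \uparrow f$ pointwise.

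Finally, for the bounded case suppose $f \le M$. Then for every $i > M$ the cap $i$ is never reached, so the estimate $0 \le f(x) - f_i(x) < 2^{-i}$ holds simultaneously for all $x$, yielding $\sup_x |f(x) - f_i(x)| \le 2^{-i} \to 0$, i.\,e.\@ uniform convergence. I expect no genuine obstacle here: the result is entirely standard, and the only verification requiring any care is the monotonicity step, which reduces to the inequality $\lfloor 2u \rfloor \ge 2 \lfloor u \rfloor$; the uniform bound in the bounded case is simply the same pointwise estimate made uniform once the cap ceases to act.
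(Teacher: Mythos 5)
Your construction and verification are correct; this is the standard dyadic approximation argument, and the one step that genuinely needs care (monotonicity via $\lfloor 2u\rfloor \ge 2\lfloor u\rfloor$) is handled properly. The paper does not prove this lemma at all --- it is cited as a standard fact from the measure-theory literature --- so your proposal simply supplies the classical proof the paper takes for granted.
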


\begin{lemma}[Monotone convergence theorem]\label{lemma:2}
Let $\{f_i\}$ be a sequence of non-negative Borel measurable functions on $(X,\mathcal{F},\mu)$ and $f_i\uparrow f$ as $i \to \infty$. Then $f$ is Borel measurable, and
\begin{align*}
\lim_{i \to \infty}\int f_i \dif \mu = \int f \dif \mu.
\end{align*}
\end{lemma}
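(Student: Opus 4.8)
The plan is to follow the classical three-part argument for the monotone convergence theorem, relying only on the definition of the integral of a non-negative Borel-measurable function as the supremum of integrals of simple functions dominated by it, together with continuity of measure from below.

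First I would establish that $f$ is Borel measurable. Since $f_i \uparrow f$ pointwise, we have $f = \sup_i f_i$, so for every $a \in \mathbb{R}$,
\[
\{x \in X : f(x) > a\} = \bigcup_{i} \{x \in X : f_i(x) > a\}.
\]
Each set in the union lies in $\mathcal{F}$ because $f_i$ is measurable, and $\mathcal{F}$ is closed under countable unions; hence $f$ is measurable.

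Next I would dispose of the easy inequality $\lim_{i} \int f_i \dif \mu \le \int f \dif \mu$. Monotonicity $f_i \le f_{i+1} \le f$ gives $\int f_i \dif \mu \le \int f_{i+1} \dif \mu \le \int f \dif \mu$, so the sequence of integrals is nondecreasing, its limit $L := \lim_{i} \int f_i \dif \mu$ exists in $[0,\infty]$, and $L \le \int f \dif \mu$.

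The main obstacle is the reverse inequality $L \ge \int f \dif \mu$, which I expect to carry out with the standard simple-function truncation trick. Fix any simple measurable function $s$ with $0 \le s \le f$ and any constant $c \in (0,1)$, and put $A_i := \{x \in X : f_i(x) \ge c\, s(x)\}$. Because $c\, s < f$ wherever $f > 0$ while $s = 0$ wherever $f = 0$, and because $f_i \uparrow f$, the sets $A_i$ increase to $X$. Then
\[
\int f_i \dif \mu \ge \int_{A_i} f_i \dif \mu \ge c \int_{A_i} s \dif \mu.
\]
Since $E \mapsto \int_{E} s \dif \mu$ is a finite measure, being a non-negative linear combination of the measures $E \mapsto \mu(E \cap A)$ attached to the level sets of $s$, continuity from below yields $\int_{A_i} s \dif \mu \to \int_{X} s \dif \mu$ as $i \to \infty$; hence $L \ge c \int s \dif \mu$. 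Letting $c \uparrow 1$ gives $L \ge \int s \dif \mu$, and taking the supremum over all simple $s$ with $0 \le s \le f$ gives $L \ge \int f \dif \mu$ by the definition of the integral. Combined with the previous step, this yields $L = \int f \dif \mu$, completing the proof.
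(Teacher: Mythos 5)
Your proof is correct: it is the classical argument for the monotone convergence theorem---measurability of $f$ via $\{f>a\}=\bigcup_i\{f_i>a\}$, the easy inequality $\lim_i\int f_i\dif\mu\le\int f\dif\mu$ by monotonicity, and the reverse inequality via the truncation sets $A_i=\{f_i\ge c\,s\}$ together with continuity of measure from below. There is nothing in the paper to compare it against: the lemma is stated in Appendix~\ref{A0} as a cited standard result (with the proof deferred to the literature), so supplying the textbook proof is exactly what is called for. One minor inaccuracy: the set function $E\mapsto\int_E s\dif\mu$ need not be a \emph{finite} measure, since the paper's definition of a measure space allows $\mu(X)=\infty$ (take $s$ a positive constant on a set of infinite measure); but continuity from below holds for arbitrary measures, so this wording slip does not affect the validity of the step or of the proof as a whole.
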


\section{Proof of Proposition~\ref{transitivity}} \label{A1}
\begin{proof}
Reflexivity and symmetry are trivial, and we only need to check transitivity.
Let $\mu_1,\mu_2,\mu_3 \in subDist(S)$ satisfy $\mu_1 \sim_{\mathrm{d}} \mu_2$ and $\mu_2 \sim_\mathrm{d} \mu_3$.
Then there exist two bisimulation relations $R_1$ and $R_2$, s.\,t.\@ $\mu_1 \mathrel{R_1} \mu_2$ and $\mu_2 \mathrel{R_2} \mu_2$.
Let $R:=\{(\mu,\mu'):\exists \nu.~(\mu \mathrel{R_1} \nu \wedge \nu \mathrel{R_2} \mu')\}$.
Then we have $\mu_1 \mathrel{R} \mu_3 $.
It suffices to show that $R$ is a bisimulation relation.
We assume that $\mu \mathrel{R} \mu'$.
Then there exists $\nu \in subDist(S)$, s.\,t.\@ $\mu \mathrel{R_1} \nu$ and $\nu \mathrel{R_2} \mu'$.
Because $R_1$ and $R_2$ are bisimulation relations, we have $\mu(S)=\nu(S)=\mu'(S)$.
For any $a \in \mathcal{A}$, let $\mu \xrightarrow{a} \tilde{\mu}$, $\nu \xrightarrow{a} \tilde{\nu}$, and $\mu' \xrightarrow{a} \tilde{\mu}'$.
Because $R_1$ and $R_2$ are bisimulation relations,
we have $\tilde{\mu} \mathrel{R_1} \tilde{\nu}$ and $\tilde{\nu} \mathrel{R_2} \tilde{\mu}'$.
From the definition of $R$, we have $\tilde{\mu} \mathrel{R} \tilde{\mu}'$.
Therefore, $R$ is a bisimulation relation.
\end{proof}

\section{Proof of Theorem~\ref{biimply}}
\begin{proof}
We assume $s \sim_{\mathrm{s}} t$.
Then there exists a state bisimulation relation $R \subseteq S \times S$, s.\,t.\@ $s \mathrel{R} t$.
We define the lifted relation $\tilde{R} \subseteq subDist(S) \times subDist(S)$ as follows:
$\mu \mathrel{\tilde{R}} \nu$ if and only if for all $R$-closed sets $A \in \Sigma$, $\mu(A)=\nu(A)$.
It is easy to check $\delta_s \mathrel{\tilde{R}} \delta_t$, so it remains to prove that $\tilde{R}$ is a bisimulation relation.
We assume $\mu \mathrel{\tilde{R}} \nu$, $\mu \xrightarrow{a} \mu'$ and $\nu \xrightarrow{a} \nu'$.
Obviously we have $\mu(S)=\nu(S)$ since $S$ is $R$-closed.
We need to show that, for any $R$-closed set $A \in \Sigma$,
$\int \tau_a(s,A) \mu(\dif s)=\int \tau_a(s,A) \nu(\dif s)$.

We first assume $\tau_a(\cdot,A)$ is a simple function, i.\,e.\@ $\tau_a(\cdot,A)=\sum_{i=1}^n a_i \mathbb{I}_{A_i}$,
where $A_i \in \Sigma$, $A_i \cap A_j =\varnothing$ and $a_i \ne a_j$ whenever $i \ne j$.
We notice that $\tau_a(\cdot,A)$ is constant on every $R$-equivalence class (see \eqref{eq:statebi}), so every $A_i$ is $R$-closed.
Then
\begin{align*}
\int \tau_a(s,A) \mu(\dif s)&=\sum_{i=1}^n a_i\mu(A_i)=\sum_{i=1}^n a_i\nu(A_i)\\
&=\int \tau_a(s,A) \nu(\dif s).
\end{align*}
For general $\tau_a(\cdot,A)$, we let
\begin{align*}
f_n=\sum_{i=1}^{2^n} \frac{i}{2^n} \mathbb{I}_{\{\frac{k}{2^n} \le \tau_a(\cdot,A) <\frac{k+1}{2^n}\}}.
\end{align*}
Then every $f_n$ is a simple Borel-measurable function and attains a constant on every $R$-equivalence class, so we have
$\int f_n \dif\mu=\int f_n \dif \nu$.
It is easy to check that $f_n$ is increasing and $0 \le \tau_a(\cdot,A)-f_n \le 2^{-n}$, so we have $f_n \uparrow \tau_a(\cdot,A)$.
From the monotone convergence theorem, as $n \to \infty$, we have
$ \int f_n \dif\mu \to \int \tau_a(s,A) \mu(\dif s)$ and $\int f_n \dif \nu \to \int \tau_a(s,A) \nu(\dif s)$,
which indicates that
\begin{align*}
\int \tau_a(s,A) \mu(\dif s)=\int \tau_a(s,A) \nu(\dif s).
\end{align*}
Therefore $\mu' \mathrel{\tilde{R}} \nu'$, i.\,e.\@ $\tilde{R}$ is a bisimulation relation.

For the other direction, one counterexample is just Exa.~\ref{subbiex}. We have $\delta_{s_0} \sim_{\mathrm{d}} \delta_{t_0}$, but $s_0 \sim_{\mathrm{s}} t_0$ does not hold. This is because the behaviour of $s_1$ can not be simulated by $t_1$ or $t_2$.
\end{proof}

\section{Proof of Theorem~\ref{logicalcharacterization}} \label{A3}
We divide the proof into two parts, soundness and completeness.

\begin{lemma}[Soundness]\label{soundness}
If $\mu \sim_{\mathrm{d}} \nu$, then $\mu$ and $\nu$ satisfy the same formulae in $\mathcal{L}_0$.
\end{lemma}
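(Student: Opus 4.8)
The plan is to prove soundness by structural induction on the formula $\varphi \in \mathcal{L}_0$. I would assume $\mu \sim_{\mathrm{d}} \nu$ throughout, which by Definition~\ref{defi:subbi} means there is a bisimulation relation witnessing $\mu \mathrel{R} \nu$, and in particular $\mu(S) = \nu(S)$ and the transition-matching condition holds. The goal is to show that for every $\varphi$, we have $\mu \models \varphi$ if and only if $\nu \models \varphi$. Since the relation $\sim_{\mathrm{d}}$ is symmetric, it suffices to prove one direction, say $\mu \models \varphi$ implies $\nu \models \varphi$.

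The base cases are straightforward. For $\varphi = \mathbb{T}$, both satisfy it trivially. For $\varphi = \langle \varepsilon \rangle_q$, satisfaction means $\mu(S) \ge q$, and since $\mu(S) = \nu(S)$ we immediately get $\nu \models \langle \varepsilon \rangle_q$. The conjunction case $\varphi_1 \wedge \varphi_2$ follows directly from the induction hypotheses applied to $\varphi_1$ and $\varphi_2$ separately. The main case, and the one I expect to carry the real content, is the diamond $\langle a \rangle_q \varphi$. Here I would argue as follows: suppose $\mu \models \langle a \rangle_q \varphi$, so writing $\mu \xrightarrow{a} \mu'$ we have $\mu'(S) \ge q$ and $\mu' \models \varphi$. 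Since $\mu \sim_{\mathrm{d}} \nu$, the bisimulation condition yields $\nu \xrightarrow{a} \nu'$ with $\mu' \sim_{\mathrm{d}} \nu'$. By the induction hypothesis applied to the (structurally simpler) formula $\varphi$ and the bisimilar pair $\mu' \sim_{\mathrm{d}} \nu'$, we get $\nu' \models \varphi$. Moreover $\mu' \sim_{\mathrm{d}} \nu'$ gives $\mu'(S) = \nu'(S) \ge q$. Hence $\nu \models \langle a \rangle_q \varphi$.

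The one subtlety I want to be careful about is that the induction hypothesis must be stated for \emph{all} bisimilar pairs, not just for the fixed pair $\mu,\nu$, since in the diamond step I invoke it at the derived pair $\mu',\nu'$. So I would phrase the inductive claim as: for every subformula $\psi$ and every pair $\mu_1 \sim_{\mathrm{d}} \mu_2$, we have $\mu_1 \models \psi \iff \mu_2 \models \psi$. The induction is then on the structure of $\psi$, with the quantification over pairs built into the statement. I also want to record that the transition $\mu \xrightarrow{a} \mu'$ is uniquely determined (since LMPs here are deterministic in the sense of the Remark after Definition~\ref{defi:subbi}), so the matching $\nu'$ is the unique $a$-successor of $\nu$, which keeps the argument clean. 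No real obstacle arises beyond this bookkeeping; soundness is the easy half, with completeness (the converse, presumably requiring a construction of a bisimulation from logical equivalence) being the harder companion lemma handled separately.
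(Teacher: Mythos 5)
Your proof is correct and follows essentially the same route as the paper: structural induction on $\mathcal{L}_0$ with the induction hypothesis quantified over all bisimilar (or $R$-related) pairs, so that it can be invoked at the successor pair $\mu',\nu'$ in the diamond case, exactly as the paper does. The subtlety you flag about strengthening the inductive claim to all pairs is precisely the point the paper's phrasing ``for all $(\mu,\nu)\in R$'' is making.
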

\begin{proof}
We assume $\mu \sim \nu$. Then there exists a bisimulation relation $R$, s.\,t.\@ $\mu \mathrel{R} \nu$. We show that $\mu \mathrel{R} \nu$ implies that $\mu$ and $\nu$ satisfy the same formulae by structural induction on $\mathcal{L}_0$.
\begin{itemize}
\item It is obvious that for all $(\mu,\nu) \in R$, $\mu \models \mathbb{T}$ and $\nu \models \mathbb{T}$.
\item If for all $(\mu,\nu) \in R$, $\mu \models \varphi_i$ if and only if $\nu \models \varphi_i$, $i=1,2$, then obviously $\mu \models \varphi_1\wedge\varphi_2$ if and only if $\nu \models \varphi_1\wedge\varphi_2$.
\item If for all $(\mu,\nu) \in R$, $\mu \models \varphi$ if and only if $\nu \models \varphi$, then for any $\mu \mathrel{R} \nu$,
\begin{align*}
\mu \models \langle a \rangle_q \varphi & \quad \text{iff (by Definition of $\langle a \rangle_q$)}\\
\mu' \models \varphi \text{ and } \mu'(S) \ge q & \quad \text{iff (by I. H and $\mu'(S) = \nu'(S)$)}\\
\nu' \models \varphi \text{ and } \nu'(S) \ge q & \quad \text{iff (by Definition of $\langle a \rangle_q$)}\\
\nu \models \langle a \rangle_q \varphi\makebox[0pt][l]{,}
\end{align*}
where $\mu \xrightarrow{a} \mu'$ and $\nu \xrightarrow{a} \nu'$.
\item For all $(\mu,\nu) \in R$, we have $\mu(S)=\nu(S)$, so $\mu \models \langle \varepsilon \rangle_q$ if and only if $\nu \models \langle \varepsilon \rangle_q$.
\end{itemize}
Therefore, $\mu$ and $\nu$ satisfy the same formulae in $\mathcal{L}_0$.
\end{proof}

\begin{lemma}[Completeness]\label{completeness}
If $\mu,\nu \in subDist(S)$ satisfy the same formulae in $\mathcal{L}_0$, then $\mu \sim_{\mathrm{d}} \nu$.
\end{lemma}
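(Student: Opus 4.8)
The plan is to show that \emph{logical equivalence is itself a subdistribution bisimulation}, which immediately yields the claim. Concretely, I define the symmetric relation
\[
R := \{(\mu,\nu) \in subDist(S) \times subDist(S) : \mu \text{ and } \nu \text{ satisfy the same formulae in } \mathcal{L}_0\}.
\]
The hypothesis of the lemma is exactly $\mu \mathrel{R} \nu$, so if I verify that $R$ meets the two conditions of Def.~\ref{defi:subbi}, then $R$ witnesses $\mu \sim_{\mathrm{d}} \nu$ and the proof is complete.

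The first condition, $\mu(S) = \nu(S)$, follows from the formulae $\langle \varepsilon \rangle_q$. Since $\mu \models \langle \varepsilon \rangle_q$ iff $\mu(S) \ge q$, logical equivalence gives $\mu(S) \ge q \Leftrightarrow \nu(S) \ge q$ for every $q \in \mathbb{Q} \cap [0,1]$; were $\mu(S) \ne \nu(S)$, a rational strictly between them would separate $\mu$ and $\nu$, a contradiction. Hence $\mu(S) = \nu(S)$.

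For the second condition I would exploit the Remark after Def.~\ref{defi:subbi}: because the transitions are functional, for each $a$ there is a unique $\mu'$ with $\mu \xrightarrow{a} \mu'$ and a unique $\nu'$ with $\nu \xrightarrow{a} \nu'$, so it suffices to prove $\mu' \mathrel{R} \nu'$, i.\,e.\@ that $\mu'$ and $\nu'$ satisfy the same $\mathcal{L}_0$ formulae. I would establish this by structural induction on a formula $\chi \in \mathcal{L}_0$, lifting it to a formula at the level of $\mu$ and $\nu$. The cases $\mathbb{T}$ and $\chi_1 \wedge \chi_2$ are immediate (the conjunction reducing to its conjuncts by the induction hypothesis). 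For $\chi = \langle a' \rangle_{q'} \psi$, note that $\psi$ lies in the scope of a diamond and therefore contains no $\langle \varepsilon \rangle$ subformula, so $\langle a \rangle_0 \chi$ is a legal $\mathcal{L}_0$ formula; since $\mu \models \langle a \rangle_0 \chi$ iff $\mu' \models \chi$ (the guard $0$ being vacuous) and likewise for $\nu$, logical equivalence of $\mu$ and $\nu$ transfers to $\mu'$ and $\nu'$. For $\chi = \langle \varepsilon \rangle_q$, prefixing $\langle a \rangle_0$ would place $\langle \varepsilon \rangle_q$ illegally inside a diamond, so instead I use $\langle a \rangle_q \mathbb{T}$: indeed $\mu \models \langle a \rangle_q \mathbb{T}$ iff $\mu'(S) \ge q$ iff $\mu' \models \langle \varepsilon \rangle_q$, and the same for $\nu$.

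The main obstacle is precisely this asymmetry in the lifting, forced by the syntactic restriction that $\langle \varepsilon \rangle_q$ may not occur under a diamond: one cannot uniformly prepend $\langle a \rangle_0$. The bridge is the observation, already noted when $\mathcal{L}_0$ was defined, that $\langle a \rangle_0 (\langle \varepsilon \rangle_q \wedge \varphi) \equiv \langle a \rangle_q \varphi$, which shows that the guard $q$ on the diamond already encodes the size test $\langle \varepsilon \rangle_q$ one level down; this is what legitimises replacing the would-be $\langle a \rangle_0 \langle \varepsilon \rangle_q$ by $\langle a \rangle_q \mathbb{T}$. Once both liftings are in place the structural induction closes, $R$ is a subdistribution bisimulation, and completeness of $\mathcal{L}_0$ follows.
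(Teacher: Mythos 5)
Your proposal is correct and follows essentially the same route as the paper: both show that logical equivalence is itself a subdistribution bisimulation, derive $\mu(S)=\nu(S)$ from the $\langle\varepsilon\rangle_q$ formulae, and establish $\mu'\mathrel{R}\nu'$ by transferring formulae across the $a$-step via a diamond prefix. Your case split between $\langle a\rangle_0\chi$ for diamond formulae and $\langle a\rangle_q\mathbb{T}$ for $\langle\varepsilon\rangle_q$ is in fact slightly more careful than the paper's single contradiction formula $\psi=\langle a\rangle_q\varphi$ with $q=\min\{\mu'(S),\nu'(S)\}$, which tacitly assumes that $q$ is rational and that the distinguishing $\varphi$ contains no $\langle\varepsilon\rangle$ subformula (otherwise $\psi$ would violate the syntactic restriction of $\mathcal{L}_0$).
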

\begin{proof}
It suffices to show that the relation
\begin{align*}
R:=\{(\mu,\nu):\mu \text{ and }\nu \text{ satisfy the same formulae in }\mathcal{L}_0\}
\end{align*}
is a bisimulation relation.
We assume $\mu \mathrel{R} \nu$, $\mu \xrightarrow{a} \mu'$ and $\nu \xrightarrow{a} \nu'$.
First we show $\mu(S)=\nu(S)$.
Since $\mu \mathrel{R} \nu$, $\mu \models \langle \varepsilon \rangle_q$ if and only if $\nu \models \langle \varepsilon \rangle_q$,
i.\,e.\@ for any $q \in \mathbb{Q}$, $\mu(S) \ge q$ if and only if $\nu(S) \ge q$,
which implies $\mu(S)=\nu(S)$.

Now assume that $\mu' \mathrel{R} \nu'$ does not hold,
so w.\,l.\,o.\,g.\@ there exists a formula $\varphi$, s.\,t.\@ $\mu' \models \varphi$ and $\nu' \not \models \varphi$.
Then we consider the formula $\psi=\langle a \rangle_q \varphi$, where $q=\min\{\mu'(S),\nu'(S)\}$.
$\mu' \models \varphi$ implies $\mu \models \psi$, so $\nu \models \psi$.
Then we must have $\nu' \models \varphi$. Contradiction!
Therefore, $\mu' \mathrel{R} \nu'$, and $R$ is indeed a bisimulation relation.
\end{proof}

\section{Proof of Proposition~\ref{pro:extendedlogic}}
\begin{proof}
From Lemmas~\ref{soundness} and \ref{completeness}, it suffices to show by structural induction that, $\mu \sim_{\mathrm{d}} \nu$ implies that they satisfy the same formulae in $\mathcal{L}_{\mathrm{Can}}$, $\mathcal{L}_{\Delta}$, $\mathcal{L}_{\neg}$ and $\mathcal{L}_{\wedge}$, and we only need to check $\mathrm{Can}(a)$, $\Delta_a$, $\neg \varphi$, and $\bigwedge_{i \in \mathbb{N}} \varphi_i$. We assume $\mu \xrightarrow{a} \mu'$ and $\nu \xrightarrow{a} \nu'$. Since $\mu \sim_{\mathrm{d}} \nu$, we have $\mu' \sim_{\mathrm{d}} \nu'$, and naturally $\mu'(S)=\nu'(S)$.
\begin{itemize}
\item $\mu \models \mathrm{Can}(a)$ iff $\mu'(S)>0$ iff $\nu'(S)>0$ iff $\nu \models \mathrm{Can}(a)$.
\item $\mu \models \Delta_a$ iff $\mu'(S)=0$ iff $\nu'(S)=0$ iff $\nu \models \Delta_a$.
\item The proof of negation and countable conjunction is obvious.
\end{itemize}
\end{proof}

\section{Proof of Proposition~\ref{eventimply}}
\begin{proof}
For any $\varphi \in \mathcal{L}_0$, we turn it into CNF $\bigwedge_{i=1}^m \varphi_i$, where $\varphi_i=\langle a_{i,1} \rangle_{q_{i,1}} \cdots \langle a_{i,n_i} \rangle_{q_{i,n_i}} \mathbb{T}$.
Consider $\phi=\bigwedge_{i=1}^m \phi_i \in \mathcal{L}$, where $\phi_i=\langle a_{i,1} \rangle_{q_{i,1}}^\mathrm{st} \cdots  \langle a_{i,n_i} \rangle_{q_{i,n_i}}^\mathrm{st} \mathbb{T}$. Then we have
\begin{align*}
\delta_s \models \varphi \enspace\text{ iff} \\
\delta_s \models \varphi_i, \enspace i=1,\ldots,m \enspace\text{ iff} \\
s \models \phi_i, \enspace i=1,\ldots,m \enspace\text{ iff} \\
t \models \phi_i, \enspace i=1,\ldots,m \enspace\text{ iff} \\
\delta_t \models \varphi_i, \enspace i=1,\ldots,m \enspace\text{ iff} \\
\delta_t \models \varphi.\enspace\enspace\enspace
\end{align*}
\end{proof}

\section{Proof of Theorem~\ref{thm:metric}}
\begin{proof}
If $\mu \sim \nu$, then from the definition of subdistribution bisimulation, one easily proves $d^c(\mu,\nu)=0$
by induction on the length of the word $w$ in Def.~\ref{defi:metric}.
%(Just do induction on the length of the word $w$
%to prove that for any $w \in \mathcal{A}^*$, $\mu'(S)=\nu'(S)$,
%where $\mu \xrightarrow{w} \mu'$ and $\nu \xrightarrow{w} \nu'$.)

For the other direction, we only need to show that the relation $R:=\{(\mu,\nu):d^c(\mu,\nu)=0\}$ is a bisimulation.
We assume that $\mu \mathrel{R} \nu$.
Because $d^c(\mu,\nu)=0$, we have $\mu(S)=\nu(S)$.
Then for any $a \in \mathcal{A}$, let $\mu \xrightarrow{a} \mu'$ and $\nu \xrightarrow{a} \nu'$. We have
\begin{align*}
d^c(\mu',\nu')&=\sup_{w \in \mathcal{A}^*,\mu' \xrightarrow{w} \mu'',\nu' \xrightarrow{w} \nu''} c^{|w|}|\mu''(S)-\nu''(S)| \\
&=\sup_{w \in a\mathcal{A}^*,\mu \xrightarrow{w} \mu'',\nu \xrightarrow{w} \nu''}c^{|w|}|\mu''(S)-\nu''(S)| \\
& \le \sup_{w \in \mathcal{A}^*,\mu \xrightarrow{w} \mu'',\nu \xrightarrow{w} \nu''}c^{|w|}|\mu''(S)-\nu''(S)|
=d^c(\mu,\nu)=0,
\end{align*}
where $a\mathcal{A}^*:=\{aw:w \in \mathcal{A}^*\}$. Therefore, $d^c(\mu',\nu')=0$ and we have $\mu' \mathrel{R} \nu'$, which implies that $R$ is a bisimulation relation.
\end{proof}

\section{Proof of Proposition~\ref{prop:approx}} \label{A5}
\begin{proof}
First we assume $\mu \sim_{\epsilon'}^c \nu$. It is easy to see that $\{\sim_{\epsilon}^c\}$ is an approximating bisimulation relation,
so $\mu \approx_{\epsilon'}^c \nu$.

Now we assume $\mu \approx_{\epsilon'}^c \nu$. Then there exists an approximating bisimulation relation $\{R_\epsilon^c\}$, s.\,t.\@ $\mu \mathrel{R_{\epsilon'}^c} \nu$. It suffices to show $d^c(\mu,\nu) \le \epsilon'$, i.\,e.\@ for any $w \in \mathcal{A}^*$, $|\mu'(S)-\nu'(S)| \le c^{-|w|}\epsilon'$, where $\mu \xrightarrow{w} \mu'$ and $\nu \xrightarrow{w} \nu'$. From the definition, we can see that, for $w \in \mathcal{A}^*$ with $|w|=n$, we have $\mu' \mathrel{R_{c^{-n}\epsilon'}^c} \nu'$, and thus $|\mu'(S)-\nu'(S)| \le c^{-|w|}\epsilon'$.
\end{proof}

\section{Proof of Proposition~\ref{prop:logicalmetric}} \label{A6}
\begin{proof}
First we prove $d^c \ge d_{\mathrm{l}}^c$. It suffices to show by structural induction that, for any $\varphi \in \mathcal{L}_\mathcal{M}^c$, $d^c(\mu,\nu) \ge |\varphi(\mu)-\varphi(\nu)|$.
\begin{itemize}
\item $\varphi=\mathbf{1}$. Then $|\varphi(\mu)-\varphi(\nu)|=|\mu(S)-\nu(S)| \le d^c(\mu,\nu)$.
\item $\varphi=\varphi' \oplus p$. Without loss of generality, we assume $\varphi'(\mu) \ge \varphi'(\nu)$. Then $\varphi(\mu) \ge \varphi(\nu)$, and
$|\varphi(\mu)-\varphi(\nu)|=\min\{\varphi'(\mu)+p,1\}-\min\{\varphi'(\nu)+p,1\} \le \varphi'(\mu)-\varphi'(\nu) \le d^c(\mu,\nu)$.
\item $\varphi=\neg \varphi'$. Then
$|\varphi(\mu)-\varphi(\nu)|=|1-\varphi'(\mu)-1+\varphi'(\nu)| = \varphi'(\mu)-\varphi'(\nu) \le d^c(\mu,\nu)$.
\item $\varphi=\bigwedge_{i \in I} \varphi_i$. Without loss of generality, we assume $\varphi(\mu) \ge \varphi(\nu)$. For any $\epsilon>0$, there exist $j \in I$, s.\,t.\@ $\varphi_j(\nu) \le \varphi(\nu)+\epsilon$. Then
$|\varphi(\mu)-\varphi(\nu)|=\varphi(\mu)-\varphi(\nu) \le \varphi_j(\mu)-\varphi_j(\nu) +\epsilon \le d^c(\mu,\nu)+\epsilon$,
and we have $|\varphi(\mu)-\varphi(\nu)| \le d^c(\mu,\nu)$ since $\epsilon>0$ is arbitrary.
\item $\varphi=\langle a\rangle^c \varphi'$. Let $\mu \xrightarrow{a} \mu'$ and $\nu \xrightarrow{a} \nu'$. Then
$|\varphi(\mu)-\varphi(\nu)|=c|\varphi'(\mu')-\varphi'(\nu')| \le c\cdot d^c(\mu',\nu') \le d^c(\mu,\nu)$.
\end{itemize}

Then we show $d^c \le d_{\mathrm{l}}^c$. We define a sub-logic of $\mathcal{L}_\mathcal{M}^c$ as follows:
\begin{align*}
\mathcal{L}'::=\mathbf{1}~|~\langle a \rangle^c \varphi.
\end{align*}
Then
\begin{align*}
d^c(\mu,\nu)&=\sup_{w \in \mathcal{A}^*,\mu \xrightarrow{w} \mu',\nu \xrightarrow{w} \nu'} c^{|w|}|\mu'(S)-\nu'(S)|\\
&=\sup_{w \in \mathcal{A}^*} |\langle w \rangle^c \mathbf{1}(\mu)-\langle w \rangle^c \mathbf{1}(\nu)| \\
&=\sup_{\varphi \in \mathcal{L}'} |\varphi(\mu)-\varphi(\mu)| \le \sup_{\varphi \in \mathcal{L}_\mathcal{M}^c} |\varphi(\mu)-\varphi(\mu)| \\
&=d_{\mathrm{l}}^c(\mu,\nu),
\end{align*}
where for $w=a_1 \cdots a_n$, let $\langle w \rangle^c \mathbf{1}$ be an abbreviation for $\langle a_n \rangle^c \cdots \langle a_1 \rangle^c \mathbf{1}$.
\end{proof}

\section{Proof of Lemma~\ref{lemma:linear}} \label{A7}
\begin{proof}
If $w=\varepsilon$, the proof is trivial.

Now we assume $w=a \in \mathcal{A}$. We prove that, for any measure $\mu,\nu$ on $(X,\mathcal{F})$ and any non-negative Borel measurable function $f$,
\begin{align}\label{eq:easy1}
\int f \dif \mu + \int f \dif \nu = \int f \dif~ (\mu+\nu).
\end{align}
If $f$ is a simple function, i.\,e.\@ $f=\sum_{i=1}^m b_i \mathbb{I}_{B_i}$, where $b_i \ge 0$ and $B_i \in \mathcal{F}$, then
\begin{align*}
\int f \dif \mu + \int f \dif \nu &= \sum_{i=1}^m b_i\mu(B_i)+\sum_{i=1}^m b_i\nu(B_i)\\
&=\sum_{i=1}^m b_i(\mu(B_i)+\nu(B_i))  \\
&=\sum_{i=1}^m b_i(\mu+\nu)(B_i)=\int f \dif~ (\mu+\nu).
\end{align*}
Now we assume that $f$ is a non-negative Borel measurable function. Then there exists a sequence of non-negative simple measurable functions $\{f_i\}$, s.\,t.\@ $f_i \uparrow f$. From the monotone convergence theorem, as $i \to \infty$, we have
\begin{align*}
\int f_i \dif~ (\mu+\nu) \to \int f \dif~ (\mu+\nu),
\end{align*}
and
\begin{align*}
\int f_i \dif~ (\mu+\nu) =\int f_i \dif \mu + \int f_i \dif \nu \to \int f \dif \mu + \int f \dif \nu,
\end{align*}
which imply \eqref{eq:easy1}. In addition, following the similar way it is easy to prove that, for any $a \ge 0$,
\begin{align}\label{eq:easy2}
\int f \dif ~(a\mu) = a \int f \dif \mu.
\end{align}
Then directly from \eqref{eq:easy1} and \eqref{eq:easy2}, we can see that the relation $\xrightarrow{a}$ is linear.

For longer $w$, we can prove it by doing induction on its length and using the fact that if $\mu\xrightarrow{w}\mu'$ and $\mu'\xrightarrow{w'}\mu''$, then $\mu\xrightarrow{ww'}\mu''$.

For $\sigma$-linearity, the proof is quite similar , except for proving the fact that for any measure $\mu_1,\mu_2,\ldots$ and on $(X,\mathcal{F})$ and any non-negative Borel measurable function $f$,
\begin{align*}%\label{eq:easy3}
\sum_{n=1}^\infty \int f \dif \mu_n  = \int f \dif~ \left(\sum_{n=1}^\infty \mu_n\right).
\end{align*}
If $f$ is a simple function, i.\,e.\@ $f=\sum_{i=1}^m b_i \mathbb{I}_{B_i}$, where $b_i \ge 0$ and $B_i \in \mathcal{F}$, then
\begin{align*}
\sum_{n=1}^\infty \int f \dif \mu_n &=\sum_{n=1}^\infty \sum_{i=1}^m b_i\mu_n(B_i)=\sum_{i=1}^m b_i\sum_{n=1}^\infty \mu_n(B_i) \\
&=\sum_{i=1}^m b_i\left(\sum_{n=1}^\infty\mu_n\right)(B_i)=\int f \dif~ \left(\sum_{n=1}^\infty \mu_n\right).
\end{align*}
Now we assume that $f$ is a non-negative Borel measurable function. Then there exists a sequence of non-negative simple measurable functions $\{f_i\}$, s.\,t.\@ $f_i \uparrow f$. From the monotone convergence theorem, we first have
\begin{align*}
\int f_i \dif~ \left(\sum_{n=1}^\infty\mu_n\right) \to \int f \dif~ \left(\sum_{n=1}^\infty\mu_n\right), \enspace i \to \infty,
\end{align*}
and
\begin{align*}
\int f_i \dif \mu_n \to \int f \dif \mu_n, \enspace i \to \infty.
\end{align*}
Then by again applying the monotone convergence theorem, we have
\begin{align*}
\int f_i \dif~\left(\sum_{n=1}^\infty\mu_n\right) &=\sum_{n=1}^\infty \int f_i \dif \mu_n=\int \dif\#\int f_i \dif \mu_n \\
&\to \int \dif\#\int f \dif \mu_n  =\sum_{n=1}^\infty\int f \dif \mu_n
\end{align*}
as $i \to \infty$, where $\#$ is the counting measure on $(\mathbb{N},2^{\mathbb{N}})$, i.\,e.\@ $\#(A)$ is the number of elements in $A$.
\end{proof}

\section{Proof of Lemma~\ref{lemma:continuous}} \label{A8}
To prove Lemma~\ref{lemma:continuous}, we need the following lemma.

\begin{lemma}\label{lemma:cenvergence}
Let $L:\mathbb{N} \times \mathbb{N} \to \mathbb{R}$ be a function. If the following conditions hold:
\begin{itemize}
\item $\lim_{m \to \infty} L(m,n)$ and $\lim_{n \to \infty} L(m,n)$ exist for all $n$ and $m$, respectively;
\item for any $\epsilon>0$, there exists $M \in \mathbb{N}$, s.\,t.\@ for any $m_1,m_2>M$ and $n \in \mathbb{N}$, $|L(m_1,n)-L(m_2,n)| \le \epsilon$
($L(\cdot,n)$ converges uniformly in $n$),
\end{itemize}
then the repeated limits $\lim_{m \to \infty} \lim_{n \to \infty} L(m,n)$ and $\lim_{n \to \infty} \lim_{m \to \infty} L(m,n)$ exist, and
\begin{align*}
\lim_{m \to \infty} \lim_{n \to \infty} L(m,n)=\lim_{n \to \infty} \lim_{m \to \infty} L(m,n).
\end{align*}
\end{lemma}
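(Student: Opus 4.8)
The plan is to recognise this as the classical Moore--Osgood double-limit theorem and to prove it by the standard ``$3\epsilon$'' argument, in which the uniform-in-$n$ convergence of $L(\cdot,n)$ plays the role of a Cauchy condition. First I would fix the notation $a_m := \lim_{n \to \infty} L(m,n)$ and $b_n := \lim_{m \to \infty} L(m,n)$; both families exist by the first hypothesis. The goal is to show that $\lim_{m \to \infty} a_m$ and $\lim_{n \to \infty} b_n$ exist and coincide.

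The first step establishes one iterated limit. Given $\epsilon > 0$, the uniform-convergence hypothesis supplies $M$ with $|L(m_1,n) - L(m_2,n)| \le \epsilon$ for all $m_1, m_2 > M$ and all $n$. Letting $n \to \infty$ (each $a_{m_i}$ exists) yields $|a_{m_1} - a_{m_2}| \le \epsilon$, so $\{a_m\}$ is Cauchy and converges to some $A$; thus $\lim_{m \to \infty} \lim_{n \to \infty} L(m,n) = A$ exists.

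The second step shows $b_n \to A$, giving both the existence of the other iterated limit and the desired equality. For $\epsilon > 0$, I would first let $m_2 \to \infty$ in the Cauchy estimate to obtain $M$ with $|L(m,n) - b_n| \le \epsilon$ for every $m > M$ and every $n$, then enlarge $M$ so that also $|a_m - A| \le \epsilon$ for $m > M$. Now fix a single $m > M$; since $a_m = \lim_{n \to \infty} L(m,n)$, there is $N$ with $|L(m,n) - a_m| \le \epsilon$ for $n > N$. The triangle inequality then gives, for $n > N$, $|b_n - A| \le |b_n - L(m,n)| + |L(m,n) - a_m| + |a_m - A| \le 3\epsilon$, whence $b_n \to A$ and the two repeated limits agree.

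The main difficulty is organisational rather than conceptual: in the second step the indices must be chosen in the correct order---$M$ first (uniformly in $n$), then a fixed $m > M$, and only then $N$ depending on that $m$. It is precisely the \emph{uniformity in $n$} that lets the choice of $m$ be decoupled from $n$; without it the sequence $b_n$ need not converge at all, so this hypothesis is exactly what drives the argument.
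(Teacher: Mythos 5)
Your proposal is correct and follows essentially the same route as the paper: establish $\lim_m a_m = A$ via the Cauchy criterion using the uniform-in-$n$ hypothesis, then run the standard $3\epsilon$ triangle-inequality argument (choosing $M$ uniformly in $n$ first, then a fixed $m$, then $N$) to show $b_n \to A$. If anything, your derivation of the bound $|L(m,n)-b_n|\le\epsilon$ by letting $m_2\to\infty$ in the Cauchy estimate is slightly more explicit than the paper's corresponding step.
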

\begin{proof}
Let $g(m)=\lim_{n \to \infty} L(m,n)$ and $h(n)=\lim_{m \to \infty} L(m,n)$. For any $\epsilon>0$, there exists $M \in \mathbb{N}$, s.\,t.\@ for any $m_1,m_2>M$ and any $n \in \mathbb{N}$, $|L(m_1,n)-L(m_2,n)| \le \epsilon$.
Taking the limit $n \to \infty$, we get $|g(m_1)-g(m_2)|\le \epsilon$.
From the Cauchy convergence criterion, we know that
$a=\lim_{m \to \infty} \lim_{n \to \infty} L(m,n)$
exists.

There exists $M' \in \mathbb{N}$, s.\,t.\@ for any $m>M'$, $|h(n)-L(m,n)|<\epsilon/3$ and $|g(m)-a|<\epsilon/3$. Since we have $g(m)=\lim_{n \to \infty} L(m,n)$, then there exists $N \in  \mathbb{N}$, s.\,t.\@ for any $n>N$, $|L(M',n)-g(M')|<\epsilon/3$. Now for any $n>N$,
$|h(n)-a| \le |h(n)-L(M',n)|+|L(M',n)-g(M')|+|g(M')-a| <\epsilon$,
which indicates that $\lim_{n \to \infty} \lim_{m \to \infty} L(m,n)=a$.
\end{proof}

Then we start to prove Lemma~\ref{lemma:continuous}.

\begin{proof}
It suffices to show that, for any non-negative bounded Borel measurable function $f$, if $\mu_n \to \mu$ as $n \to \infty$, then
\begin{align*}
\lim_{n \to \infty}\int f \dif \mu_n=\int f \dif \mu.
\end{align*}
If $f$ is a simple function, i.\,e.\@ $f=\sum_{i=1}^m b_i \mathbb{I}_{B_i}$, where $b_i \ge 0$ and $B_i \in \mathcal{F}$, then
\begin{align*}
\int f \dif \mu_n=\sum_{i=1}^m b_i \mu_n(B_i) \to \sum_{i=1}^m b_i \mu(B_i)=\int f \dif \mu
\end{align*}
as $n \to \infty$. Now we assume that $f$ is a Borel-measurable function. Then there exists an increasing sequence $\{f_i\}$, s.\,t.\@ $f_i$ converges to $f$ uniformly, i.\,e.\@ for any $\epsilon>0$, there exists $M>0$, s.\,t.\@ for any $m_1,m_2>M$, $|f_{m_1}-f_{m_2}|<\epsilon$. Let
\begin{align*}
L(m,n)=\int f_m \dif \mu_n.
\end{align*}
First, $\lim_{n \to \infty} L(m,n)$ exists from the proof above. Then, $\lim_{m \to \infty} L(m,n)$ exists from the monotone convergence theorem. Finally, for any $m_1,m_2>M$ and any $n \in \mathbb{N}$,
\begin{align*}
|L(m_1,n)-L(m_2,n)|&=\left|\int f_{m_1} \dif \mu_n-\int f_{m_2} \dif \mu_n\right| \\
&=\left|\int (f_{m_1}-f_{m_2}) \dif \mu_n \right|\\
&\le \int |f_{m_1}-f_{m_2}| \dif \mu_n < \epsilon.
\end{align*}
From Lemma~\ref{lemma:cenvergence}, we know $\lim_{m \to \infty} \lim_{n \to \infty} L(m,n)=\lim_{n \to \infty} \lim_{m \to \infty} L(m,n)$. From the monotone convergence theorem, we have
\begin{align*}
\lim_{m \to \infty} \lim_{n \to \infty} L(m,n)=\lim_{m \to \infty} \int f_m \dif \mu=\int f \dif \mu,
\end{align*}
and
\begin{align*}
\lim_{n \to \infty} \lim_{m \to \infty} L(m,n)=\lim_{n \to \infty}\int f \dif \mu_n.
\end{align*}
Therefore, we have
\begin{align*}
\lim_{n \to \infty}\int f \dif \mu_n=\int f \dif \mu.
\end{align*}
\end{proof}

\iffalse
\section{Proof of Theorem~\ref{thm:composition1}} \label{A9}
\begin{proof}
Let $\mathcal{M}_i=(S_i,\Sigma_i,(\tau_a^i)_{a \in \mathcal{A}},\pi_i)$ and $\mathcal{M}_i'=(S_i',\Sigma_i',(\tau_a^i{}')_{a \in \mathcal{A}},\pi_i')$, $i=1,2$.
From Thm.~\ref{thm:metric} we have $d^1(\pi_1,\pi_1')=d^1(\pi_2,\pi_2')=0$,
i.\,e.\@ for any word $w \in \mathcal{A}^*$, $\mu_1(S_1)=\mu_1'(S_1')$ and $\mu_2(S_2)=\mu_2'(S_2')$,
where $\pi_i \stackrel{w}{\to} \mu_i$ and $\pi_i' \stackrel{w}{\to} \mu_i'$, $i=1,2$.
Then $(\mu_1 \times \mu_2)(S_1 \times S_2)=\mu_1(S_1) \times \mu_2(S_2)=\mu_1'(S_1') \times \mu_2'(S_2')=(\mu_1' \times \mu_2')(S_1' \times S_2')$.
It is easy to see $\pi_1 \times \pi_2  \stackrel{w}{\to} \mu_1 \times \mu_2$ and $\pi_1' \times \pi_2' \stackrel{w}{\to} \mu_1' \times \mu_2'$ in $\mathcal{M}_1 \mathbin{||} \mathcal{M}_2$ and $\mathcal{M}_1' \mathbin{||} \mathcal{M}_2'$, respectively.
Then we can see $d^1(\pi_1 \times \pi_2,\pi_1' \times \pi_2')=0$ since $w$ is arbitaray.
We again use Thm.~\ref{thm:metric} and get $\mathcal{M}_1 \mathbin{||} \mathcal{M}_2 \sim_\mathrm{d} \mathcal{M}_1' \mathbin{||} \mathcal{M}_2'$.
\end{proof}
\fi

\section{Proof of Theorem~\ref{thm:composition2}}
\begin{proof}
Let $\mathcal{M}_i=(S_i,\Sigma_i,(\tau_a^i)_{a \in \mathcal{A}},\pi_i)$ and $\mathcal{M}_i'=(S_i',\Sigma_i',(\tau_a^i{}')_{a \in \mathcal{A}},\pi_i')$, $i=1,2$.
From the definition, for any $w \in \mathcal{A}^*$ with $|w|=n$, we have
\begin{align}\label{4.3}
|\mu_1(S_1)-\mu_1'(S_1')| \le \frac {\epsilon_1} {c^n}, \enspace |\mu_2(S_2)-\mu_2'(S_2')| \le \frac {\epsilon_2} {c^n},
\end{align}
where $\pi_i \stackrel{w}{\to} \mu_i$ and $\pi_i' \stackrel{w}{\to} \mu_i'$, $i=1,2$.
Without loss of generality, we assume $\mu_1(S_1)\mu_2(S_2) \ge \mu_1'(S_1')\mu_2'(S_2')$.
Now we consider $\mu_1(S_1)\mu_2(S_2)-\mu_1'(S_1')\mu_2'(S_2')$.
\begin{itemize}
\item $\mu_1(S_1) \ge \dfrac {\epsilon_1} {c^n}$ and $\mu_2(S_2) \ge \dfrac {\epsilon_2} {c^n}$.
Then given Inequality~\eqref{4.3}, we have
\begin{align*}
&\mu_1(S_1)\mu_2(S_2)-\mu_1'(S_1')\mu_2'(S_2') \\
\le{}& \mu_1(S_1)\mu_2(S_2)-(\mu_1(S_1)- \frac {\epsilon_1} {c^n})(\mu_2(S_2)-\frac {\epsilon_2} {c^n}) \\
\le{}& 1 \cdot 1 -(1-\frac{\epsilon_1} {c^n})(1-\frac {\epsilon_2} {c^n}) \\
={}& \frac{\epsilon_1} {c^n}+\frac {\epsilon_2}{c^n}-\frac{\epsilon_1\epsilon_2}{c^{2n}}.
\end{align*}
\item $\mu_1(S_1) \ge \dfrac {\epsilon_1} {c^n}$ and $\mu_2(S_2) < \dfrac {\epsilon_2} {c^n}$. Then we have
\begin{align*}
&\mu_1(S_1)\mu_2(S_2)-\mu_1'(S_1')\mu_2'(S_2') \\
\le{}& \mu_1(S_1)\mu_2(S_2)-(\mu_1(S_1)- \frac {\epsilon_1} {c^n})\mu_2(S_2) \\
\le{}& 1 \cdot 1 -(1-\frac{\epsilon_1} {c^n})\cdot 1 =\frac{\epsilon_1} {c^n}.
\end{align*}
\item $\mu_1(S_1) < \dfrac {\epsilon_1} {c^n}$ and $\mu_2(S_2) \ge \dfrac {\epsilon_2} {c^n}$. Then we have
\begin{align*}
&\mu_1(S_1)\mu_2(S_2)-\mu_1'(S_1')\mu_2'(S_2') \\
\le{}& \mu_1(S_1)\mu_2(S_2)-\mu_1(S_1)(\mu_2(S_2)- \dfrac {\epsilon_2} {c^n})\\
\le{}& 1 \cdot 1 -1\cdot(1 - \dfrac {\epsilon_2} {c^n})=\frac{\epsilon_2} {c^n}.
\end{align*}
\item $\mu_1(S_1) < \dfrac {\epsilon_1} {c^n}$ and $\mu_2(S_2) < \dfrac {\epsilon_2} {c^n}$. Then we have
\begin{align*}
&\mu_1(S_1)\mu_2(S_2)-\mu_1'(S_1')\mu_2'(S_2') \\
\le{}& \mu_1(S_1)\mu_2(S_2) \\
\le{}& \min\{1,\dfrac {\epsilon_1} {c^n}\}\min\{1,\dfrac {\epsilon_2} {c^n}\}.
\end{align*}
\end{itemize}
Then it is easy to check
\begin{align*}
&\mu_1(S_1)\mu_2(S_2)-\mu_1'(S_1')\mu_2'(S_2') \\
\le{}& \max\left\{\frac{\epsilon_1} {c^n}+\frac {\epsilon_2}{c^n}-\frac{\epsilon_1\epsilon_2}{c^{2n}},\frac{\epsilon_1} {c^n},\frac{\epsilon_2} {c^n}, \min\{1,\dfrac {\epsilon_1} {c^n}\}\min\{1,\dfrac {\epsilon_2} {c^n}\}\right\} \\
\le{}& \frac{\epsilon_1} {c^n}+\frac {\epsilon_2}{c^n}-\frac{\epsilon_1\epsilon_2}{c^n}.
\end{align*}
Therefore,
\begin{align*}
d^c(\mathcal{M}_1 \mathbin{||} \mathcal{M}_2,\mathcal{M}_1' \mathbin{||} \mathcal{M}_2') &\le c^n \left(\dfrac{\epsilon_1} {c^n}+\dfrac {\epsilon_2}{c^n}-\dfrac{\epsilon_1\epsilon_2}{c^n} \right)\\
&=\epsilon_1+\epsilon_2-\epsilon_1 \epsilon_2.
\end{align*}
since $w$ is arbitrary.
Because $\epsilon_1,\epsilon_2 \in [0,1]$, ${\epsilon_1} + {\epsilon_2}-{\epsilon_1\epsilon_2} \in [0,1]$, and we have $\mathcal{M}_1 \mathbin{||} \mathcal{M}_2 \sim_{\epsilon_1+\epsilon_2-\epsilon_1 \epsilon_2}^c \mathcal{M}_1' \mathbin{||} \mathcal{M}_2'$.
\end{proof}

\end{document}